  \numberwithin{equation}{section}
  \theoremstyle{definition}  
   \newtheorem{defn}{Definition}[section]
   \newtheorem{eg}[defn]{Example}
   \newtheorem{egs}[defn]{Examples}
   \newtheorem{rmk}[defn]{Remark}
   \newtheorem{rmks}[defn]{Remarks}
  \theoremstyle{plain}  
   \newtheorem{thm}[defn]{Theorem}
   \newtheorem{lem}[defn]{Lemma}
   \newtheorem{prop}[defn]{Proposition}
   \newtheorem{cor}[defn]{Corollary}
  \theoremstyle{remark} 
   \newtheorem*{claim}{Claim}
   \newtheorem*{pf}{Proof}
   \newcommand{\B}[1]{\mathscr{B}({#1})}
   \newcommand{\SH}[1]{\mathscr{S}({#1})}
   \newcommand{\CH}{\mathcal{H}}
   \newcommand{\GH}{\Gamma_{s}(\mathcal{H})}
 \newcommand{\numberthis}{\refstepcounter{equation}\tag{\theequation}} 
\newcommand\footnoteref[1]{\protected@xdef\@thefnmark{\ref{#1}}\@footnotemark}
\newcommand{\emailaddress}[1]{\newline{\sf#1}}
\renewcommand{\colon}{\nobreak\mskip2mu\mathpunct{}\nonscript
  \mkern-\thinmuskip{:}\mskip6muplus1mu\relax}
\let\tr\relax 
\DeclareMathOperator{\tr}{Tr}
\DeclareMathOperator{\diag}{Diag}
\DeclareMathOperator{\re}{Re}
\DeclareMathOperator{\im}{Im}
\DeclareMathOperator*{\slim}{s-lim}
\title{Infinite Mode Quantum Gaussian States}
\author[1]{B.V. Rajarama Bhat}
\author[2]{Tiju Cherian John}
\author[3]{R. Srinivasan}
\affil[1,2]{Indian Statistical Institute,  Stat-Math Unit, 8th Mile, Mysore Road, R.V. College Post, Bengaluru-560059, India. \emailaddress{bhat@isibang.ac.in, tijucherian@gmail.com}}
\affil[3]{Chennai Mathematical Institute, H1, SIPCOT IT Park, Siruseri, Kelambakkam, Chennai-603103, India. \emailaddress{vasanth@cmi.ac.in}}
\begin{document}
\maketitle
   \begin{abstract}
Quantum Gaussian states on Bosonic Fock spaces  are quantum versions of Gaussian distributions. In this paper we explore infinite mode quantum Gaussian states.
We extend many of the results of Parthasarathy in \cite{Par10} and \cite{Par13} to the infinite mode case, which includes various characterizations, convexity and symmetry properties.

     \textbf{Keywords:}
Quantum Gaussian states, 
Williamson's normal form, Infinite mode quantum systems.\\
\textbf{2010 Mathematics Subject classification:}
Primary 81S05; Secondary 46L60.
\end{abstract}

\section{Introduction}
There is extensive literature on quasifree states on CCR (Canonical Commutation Relation) algebras variously known as quantum Gaussian states or squeezed states. Some early references being
\cite{Arakiccr1}, \cite{Arakiccr2}, \cite{Hol71}, \cite{Hol71ii}, and \cite{vD71}. Comprehensive treatments and additional references can be seen in monographs \cite{brattelirobinson2},
\cite{Pet90} and \cite{derezinkigerard}. 

Recently finite mode Gaussian states have been getting more attention in the context of its importance in quantum information theory. Some references are \cite{AdRaSa14}, \cite{ferraro05}, \cite{Wolf06} and \cite{WANG20071}. K R Parthasarathy wrote an expository article on finite mode Gaussian states in \cite{Par10} and followed it up
with a series of articles systematically studying convexity, symmetry and dilation properties (see  \cite{Par13, Par15}).
The aim of this article is to extend some of these results to the infinite mode case. Now the covariance operators are restricted by some trace class or Hilbert-Schmidt conditions. We take care of these technicalities by developing necessary notation and tools. The infinite mode setting has been studied earlier in different contexts  by \cite{HR96,derezinkigerard} and others.

 The scheme is as follows. In Section 2, we review some terminology and known results. For notation and basics on Boson Fock spaces we follow mostly  \cite{Par12}. We need some minor refinements regarding existing results on Shale Operators and Bogoliubov transformations \cite{Sha62,Par12,BhSr05,HR96}. We crucially make use of the notion of quasi-free states on CCR Algebra and their classification theory. For this we depend on \cite{Pet90},  \cite{Hol71}, \cite{Hol71ii}  and  \cite{vD71}.  Finally, we need suitable extension of Williamson's normal form to infinite dimensions \cite{BhJo18}. The result we need has been quoted in Section 2.4. 

Section 3, has the formal definition of quantum Gaussian states in the infinite mode case, by first introducing quantum characteristic function or quantum Fourier transform of states on Boson Fock space.
Every quantum Gaussian state comes with a `covariance matrix', which in our setup is a symmetric and invertible real linear operator. Making use of the theory of quasifree states we derive some necessary conditions on this operator. Some further analysis through explicit constructions finally leads to a complete characterization of covariance operators of quantum Gaussian states in infinite mode (Theorem \ref{thm:main-thm}).  

Now we have the proper setting to extend the results of K R Parthasarathy \cite{Par13} to infinite mode Gaussian states. We characterize extreme points of the set of Gaussian state covariance operators and express every interior point as midpoint of two extreme points (Theorem \ref{sec:conv-prop-covar-1}). Theorem \ref{thm:struct-quant-gauss} provides explicit description for every quantum Gaussian state, in terms of symplectic eigenvalues of the covariance operator. We see that every mixed Gaussian state can be purified to a pure Gaussian state.

 A unitary operator in the Boson Fock space is called a Gaussian symmetry if it preserves Gaussianity of states under conjugation. The last result in this article (Theorem \ref{thm:Gaussian-symmetry}) is a complete characterization of Gaussian symmetries and shows that symplectic spectrum is a complete invariant for Gaussian states under conjugation by Gaussian symmetries.

\section{Preliminaries and Notations}

In this section we fix the notation used in this paper and recall some of the well-known results. Throughout we consider only separable Hilbert spaces.
\subsection{Symmetric Fock space}  

 Let $\CH$ be a complex Hilbert space  with inner product $\left\langle \cdot,\cdot \right\rangle$, which is anti-linear in the first variable. 
The symmetric Fock space or the  Boson Fock space over $\CH$ is defined as 
\begin{displaymath} 
\Gamma_s(\CH) := \bigoplus\limits_{n=0}^{\infty}\mathcal{H}^{\small{\textcircled{s}}^{n}}
\end{displaymath} 
where $\mathcal{H}^{\small{\textcircled{s}}^{n}}$ is the  $n$-fold symmetric tensor product of $\CH$ and  $\mathcal{H}^{\small{\textcircled{s}}^{0}}:=\mathbb{C}$. The $n$-th direct summand is called the $n$-\emph{particle subspace} \index{$n$- particle subspace}. When $n=0$, it is called the 
\emph{vacuum subspace}\index{vacuum subspace}. The vector $\Phi :=1 \oplus 0 \oplus 0 \oplus \cdots \in \GH$ is called the \emph{vacuum vector}. Any element in the $n$-particle subspace is called an \emph{$n$-particle vector}.  The dense linear manifold generated by the $n$-particle vectors, $n=0,1,2,\dots$ is called the \emph{finite particle space}, denoted by $\Gamma_s^0(\CH)$. 
For $f\in \mathcal{H}$, the exponential vector $e(f) \in \Gamma_{s}(\mathcal{H})$  is defined by
\begin{equation}
\label{eq:1.a}
e(f) = \oplus_{n=0}^\infty  
\frac{f^{\otimes^n}}{\sqrt{n!}},
\end{equation}
where $f^{\otimes^0} := 1$.
Then $\left\langle e(f), e(g) \right\rangle =  \exp \left\langle f, g \right\rangle$.  Further 
 the set $E:=\{e(f)|f\in \CH\}$ of all exponential vectors is linearly independent and  total in  $\Gamma_s(\CH)$. Indeed if $A$ is a dense set in $\mathcal{H}$, then the linear span of the set $\{e(f)|f\in A\}$ is dense in $\GH$.
\begin{eg}\label{eg:fock-L2-iso}
 We describe a well known identification of $\Gamma_s(\mathbb{C})$ here, this identification prove to be very useful for the later part of this work. One 
 can  refer Example 19.8 and Exercise 20.20 in \cite{Par12} for more details. Consider $L^2(\mathbb{R})$, where  $\mathbb{R}$ is endowed with the Lebesgue measure. Identify $e(z) \in \Gamma_s(\mathbb{C})$ with the $L^2$-function $x \mapsto (2\pi)^{-1/4}\exp{-4^{-1}x^2+zx-2^{-1}z^{2}}, x\in \mathbb{R}$. Then this identification extends as a Hilbert space isomorphism between $\Gamma_s(\mathbb{C})$ and $L^2(\mathbb{R})$. We may write $\Gamma_s({\mathbb{C}}) = L^2(\mathbb{R})$ in this sense. 
\end{eg}
The Weyl operator, corresponding to an element $f \in \mathcal{H}$, is defined on 
 the set of exponential vectors by \begin{equation}
\label{eq:7}
W(f)e(g) = \{\exp(-\frac{1}{2}\|f\|^2-\left\langle f,g \right\rangle) \}e(f+g), 
  \end{equation} extends to a unitary operator on $\GH$.
 The mapping $f\mapsto W(f)$ from $\CH$ into $\B{\Gamma_{s}(\CH)}$ is strongly continuous. Further,   
\begin{equation}\label{eq:8}
W(-f)=W(f)^{*};~~
W(f)W(g) = \exp(-i \im \left\langle f,g \right\rangle)W(f+g)~~~\forall f, g \in \CH, 
\end{equation}
where `Im' denotes `imaginary part'.
We denote by $p(f)$, the Stone generator of the strongly continuous one parameter unitary group $\{W(tf)|t\in\mathbb{R}\}$, that is  
$W(tf) = e^{-itp(f)}, t \in \mathbb{R}, f\in \mathcal{H}$.
The exponential domain $\mathcal{E}$, the dense subspace spanned by exponential vectors in $\Gamma_s(\mathcal{H})$,  is a core for $p(f)$ for all $f \in \mathcal{H}$. The space of all finite particle vectors, $\Gamma_s^0(\mathcal{H})$ is also a core for $p(f)$  for all $f$. 

Let us fix an orthonormal  basis  $\{e_j\}$ for $\CH$ and let

\begin{align}
  p_j&= 2^{-1/2}p(e_j),& q_j &= -2^{-1/2}p(ie_j)\\
a_j&=2^{-1/2}(q_j+ip_j) & a_j^{\dagger}&= 2^{-1/2}(q_j-ip_j)
\end{align}
for each $j \in \mathbb{N}$. Then we have the Lie brackets
\begin{align}
\label{eq:36}
[q_r,p_s] = i\delta_{rs}I, &\phantom{......} [a_r,a_s^{\dagger}]=\delta_{rs}
\end{align}
$\forall r,s \in \mathbb{N}$. Further $\{a_r, r\in \mathbb{N}\}$ and $\{a_r^{\dagger}, r\in \mathbb{N}\}$ commute among themselves. We call $p_j$ and $q_j$ as the $j$-th \emph{momentum} and \emph{position} operator,  $a_j$ and $a_j^{\dagger}$ as the $j$-th  \emph{annihilation} and \emph{creation} operator for all $j\in \mathbb{N}$. Let $z \in \mathcal{H}$ be such that  $z = \sum\limits_{j=1}^{n}\alpha_je_j$, where $\alpha_j= x_{j}+iy_j, x_j, y_j\in \mathbb{R},  \forall j$, then ~$$W(z)= e^{-i \sqrt{2} \sum\limits_{j=1}^n(x_jp_j-y_jq_j)}; ~~p(z) = \sqrt{2} \sum\limits_{j=1}^n(x_jp_j-y_jq_j).$$

If $T$ is any contraction on $\mathcal{H}$ then we define  $\Gamma(T)$ on the exponential vectors by 
\begin{equation}
\label{eq:73}
\Gamma(T)(e(f))=e(Tf),
\end{equation} which extends to a contraction on $\Gamma_{s}(\mathcal{H})$. Then $\Gamma(T)$ is called as the 
 \emph{second quantization} of $T$. It is possible to extend the definition of $\Gamma (T)$ via (\ref{eq:73}) even if $T~$ is a contraction mapping $\CH$ to a different Hilbert space $\mathcal{K}.$ 

If $U: \mathcal {H}\to \mathcal{K}$ is a unitary then we take $\Gamma_s(U):=\Gamma(U)$. Then $\Gamma_s(U)$ is  a unitary and we have
\begin{align}
\Gamma_{s}(U)^{-1} = \Gamma_s(U^{-1}); ~~
\Gamma_{s}(U)W(u)\Gamma_{s}(U)^{-1}=W(Uu).\label{eq:75}
\end{align}
Later we will extend the definition of $\Gamma _s(\cdot )$ for a large collection of operators called Shale operators.

  If $\CH =  \CH_1\oplus \CH_2$, then there is a unique unitary isomorphism between $\GH$ and $\Gamma_s(\CH_{1})\otimes \Gamma_s(\CH_{2})$ satisfying $e(f \oplus g) \mapsto e(f)\otimes e(g)$. Under this isomorphism we have $W(f\oplus g) = W(f)\otimes W(g)$. 

   Let $\mathcal{H}= \oplus_{n=1}^{\infty}\mathcal{H}_n$, where  $\mathcal{H}_n,  n= 1,2,3\dots$ is a sequence of Hilbert spaces. Consider the infinite tensor product $\otimes_{n=1}^{\infty}\allowbreak\Gamma_s(\mathcal{H}_{n})$, constructed using the stabilizing sequence $\{\Phi_n\}$, where $\Phi_n \in \Gamma_s(\mathcal{H}_n)$ is the vacuum vector for every $n.$ Then 
$\Gamma_s(\mathcal{H}) = \otimes_{n=1}^{\infty}\Gamma_s(\mathcal{H}_{n})$
under the natural isomorphism. In this identification, for $\oplus _{n=1}^{\infty}x_n \in \mathcal{H}, $ and contractions  $A_n \in \B{\mathcal{H}_{n}}, n\geq 1,$
\begin{align}
 e(\oplus_{n=1}^{\infty}x_n)&=  \otimes_{n=1}^{\infty}e(x_n) & :=\lim_{N\rightarrow\infty}\otimes_{j=1}^Ne(x_j)\otimes \otimes_{n=N}^\infty \Phi_n   \label{eq:97}\\
 W(\oplus_{n=1}^{\infty}x_n)& = \otimes_{n=1}^{\infty}W(x_n) & :=\slim_{N\rightarrow\infty}\otimes_{j=1}^NW(x_j)\otimes \otimes_{n=N}^\infty I \label{eq:8.2}\\
\Gamma (\oplus_{n=1}^{\infty}A_n) & = \otimes_{n=1}^{\infty}\Gamma(A_n)& :=\slim_{N\rightarrow\infty}\otimes_{j=1}^N\Gamma(A_j)\otimes \otimes_{n=N}^\infty I\label{eq:9.2} 
\end{align}


We refer to Section 20 of \cite{Par12} for more details and for proofs of all the above facts.

\subsection{\ensuremath{CCR} Algebra in Weyl form
}
In this Section we list some basic facts about symplectic spaces and quasifree states of CCR algebras. For more details  see \cite{Pet90, derezinkigerard, brattelirobinson2}.

 Let $H$ be a real linear space. A bilinear form $\sigma : H\times H \rightarrow \mathbb{R}$  is called a \emph{symplectic} form if $\sigma(f,g) = -\sigma(g,f)$, for every $f, g \in H$. The pair $(H, \sigma)$ is called a \emph{symplectic space}.  
  A symplectic form $\sigma$ on $H$ is called \emph{nondegenerate} if $\sigma(f,g)=0, \forall g \in H$ implies $f=0$.  
 A symplectic space $(H,\sigma)$ is called a \emph{standard (symplectic) space} if $H$ is a Hilbert space over $\mathbb{C}$ with respect to some inner product $\left\langle \cdot,\cdot \right\rangle$ and $\sigma(\cdot,\cdot ) = \im \left\langle \cdot,\cdot \right\rangle$.  
  It is called \emph{separable} if there exists in $H$ a countable family of vectors $\{f_k\}$ such that $\sigma(f,f_k)= 0 $ for all $k$ implies $f=0$. Note that  standard symplectic spaces are nondegenerate and separable.

    Let $(H,\sigma)$ be a nondegenerate symplectic space.   There exists a universal $C^*$-algebra generated by unitary elements $\{W(f): f \in H\}$ satisfying \begin{equation}
W(-f)=W(f)^{*}; ~~
W(f)W(g) = \exp(i\sigma(f,g))W(f+g), ~~\forall f, g \in H , \label{weyl2} 
\end{equation} which is unique up to isomorphism. This $C^*$-algebra is known as the algebra of canonical commutation relations in Weyl form and is denoted by $CCR(H,\sigma)$.  In view of (\ref{weyl2}), the linear hull of $\{W(f): f \in H\}$ is dense in $CCR(H,\sigma)$.
 

\subsubsection{Quasifree States on \ensuremath{CCR(H,\sigma)}}\label{sec:quasfree-states-ccrhsigma}
Let $\mathcal{A}$ be a $C^{*}$-algebra and let $\phi$ be a state on it. Let   $(H_{\phi},\Pi_{\phi},\Omega_{\phi})$ be the $GNS$ triple of $(\mathcal{A}, \phi ).$  A state $\phi$ on $\mathcal{A}$ is called \emph{primary} or a factor state if the von Neumann algebra $(\Pi_{\phi}(\mathcal{A)})''$ corresponding to the $GNS$-representation is a factor. It is called \emph{type I} if the von Neumann algebra $(\Pi_{\phi}(\mathcal{A)})''$ corresponding to the $GNS$-representation is a Type I factor. 

\begin{defn}\label{def:quasiequivalent} Two representations $\Pi_1$  and $\Pi_2$ of $\mathcal{A}$ are said to be \emph{quasiequivalent} if there exists a $*-$isomorphism of von Neumann algebras  $$\Theta: \Pi_{1}(\mathcal{A})'' \mapsto \Pi_{2}(\mathcal{A})''$$ satisfying $\Theta(\Pi_{1}(x)) =  \Pi_{2}(x)$ for all $x\in \mathcal{A}$.
  Two states $\phi$ and $\psi$ on $\mathcal{A}$ are said to be \emph{quasiequivalent} if  the corresponding $GNS$-representations are quasiequivalent. 
\end{defn}

 Let $(H,\sigma)$ be a standard symplectic space and let $\alpha: H\times H \rightarrow \mathbb{R}$ be a real inner product on $H$ with respect to which $H$ is complete.  Assume further that there exists a bounded, invertible real linear operator $A$ on $(H,\alpha)$ such that 
$\alpha(f,g) = \sigma(Af, g),$  for all $f, g \in H$ .
Then there exists a state (see \cite{Pet90})  $\phi_A$ on $CCR(H,\sigma)$ such that
\begin{equation}
\label{eq:2}
\phi_A(W(f)) = \exp \left( -\frac{1}{2} \alpha(f,f) \right) \phantom{.....} \forall f \in H.
\end{equation}
if and only if
\begin{equation}
\label{eq:1}
\sigma(f,g)^{2} \leq \alpha(f,f)\alpha(g,g), \phantom{.....} \forall f,g \in H.
\end{equation}
A state $\phi_A$ on $CCR(H,\sigma)$ determined in the form of (\ref{eq:2}) is called a \emph{quasifree state}.  
It is known that any quasfree state $\phi_A$ is primary (see  Proposition 1 and 2 of \cite{Hol71}). 
  Further in this case, 
%
%
%
\begin{equation}
 \label{eq:4}
A^{\tau}= -A ; \phantom{......} -A^2-I \geq 0 
\end{equation}
on $(H,\alpha(\cdot,\cdot))$.
 By $A^{\tau}$ we denote the transpose with respect to the inner product $\alpha$. Let us denote by $H_A$ the real Hilbert space $(H,\alpha(\cdot,\cdot))$. One major tool of the subject is the following theorem of Holevo. 
 

\begin{thm}[\cite{Hol71}] \label{quasi-equivalence}
   Two primary quasifree states $\phi_A$ and $\phi_B$ on a standard $CCR(H,\sigma)$ are quasi equivalent if and only if $A-B$ and $\sqrt{-A^2-I}-\sqrt{-B^2-I}$ are Hilbert-Schmidt operators  on $H_A$.
\end{thm}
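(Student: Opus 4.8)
The plan is to realize the two GNS representations as concrete Fock-type representations, reduce quasi-equivalence to a single Hilbert-Schmidt implementability condition, and then unpack that condition into the two stated ones.

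First I would exploit the structure of $A$. Since $A^{\tau}=-A$ and $-A^{2}\geq I$ on $H_A$, the operator $A$ is bounded and invertible with $A^{\tau}A=-A^{2}=:|A|^{2}$ bounded and $|A|\geq I$; its polar decomposition $A=J_A|A|$ produces an orthogonal complex structure $J_A$ (with $J_A^{\tau}=-J_A$, $J_A^{2}=-I$) commuting with $|A|$. This is exactly the content of the Williamson-type normal form of \cite{BhJo18}, which simultaneously diagonalizes $|A|$ and reduces $J_A$ to a standard symplectic form; the spectrum of $|A|\geq I$ consists of the symplectic eigenvalues, and $\phi_A$ is pure precisely when $|A|=I$, i.e.\ $\sqrt{-A^{2}-I}=0$. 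Using the Araki--Woods (purification) construction, I would then write the GNS triple of $\phi_A$ explicitly: $\Pi_A$ is unitarily equivalent to a Fock representation $W(f)\mapsto W_{\mathrm{Fock}}(\kappa_A f)$ on a doubled one-particle space, where $\kappa_A$ is assembled from $J_A$ together with the creation amplitude $\sqrt{(|A|+I)/2}$ and the thermal amplitude $\sqrt{(|A|-I)/2}$. Note the consistency check that twice the product of these amplitudes is $\sqrt{(|A|+I)(|A|-I)}=\sqrt{-A^{2}-I}$, which is therefore the natural off-diagonal (thermal) invariant; it vanishes exactly in the pure case.

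Under this identification, $\phi_A$ and $\phi_B$ are quasi-equivalent iff the two Fock representations are unitarily equivalent, i.e.\ iff the Bogoliubov transformation $T_{AB}=\kappa_B\kappa_A^{-1}$ intertwining them is unitarily implementable on the Fock space. By the refined Shale criterion developed in Section \ref{sec:shale}, this is equivalent to the anti-linear (off-diagonal) part of $T_{AB}$ being Hilbert-Schmidt. This is the single master condition, and the content of the theorem is to decode it.

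The final and main step is to unpack this master condition, and here I expect the real work. The off-diagonal part of $T_{AB}$ splits into a \emph{phase} contribution comparing the complex structures and magnitudes, and a \emph{thermal amplitude} contribution comparing the occupation operators. I would show, after recombining $J_A$ and $|A|$ into $A=J_A|A|$, that the phase contribution is Hilbert-Schmidt iff $A-B$ is Hilbert-Schmidt, while the thermal contribution is Hilbert-Schmidt iff $\sqrt{-A^{2}-I}-\sqrt{-B^{2}-I}$ is Hilbert-Schmidt. To pass between the various functional-calculus expressions one uses that $|A|,|B|$ are bounded with spectra in $[1,\infty)$, so that $x\mapsto\sqrt{x-1}$, $x\mapsto\sqrt{x+1}$ and related functions are Lipschitz on the joint spectral range and hence preserve the Hilbert-Schmidt property of differences. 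The main obstacle will be precisely this disentangling: proving that the single Hilbert-Schmidt condition separates into exactly the pairing $A-B$ and $\sqrt{-A^{2}-I}-\sqrt{-B^{2}-I}$ (rather than, say, $J_A-J_B$ together with $|A|-|B|$), that both are simultaneously necessary and sufficient, and that they are genuine invariants independent of the chosen symplectic diagonalization. Once this bookkeeping is carried out, both implications of the stated equivalence follow.
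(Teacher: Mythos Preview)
The paper does not contain a proof of this statement. Theorem~\ref{quasi-equivalence} is quoted verbatim from Holevo \cite{Hol71} as part of the preliminaries (Section~\ref{sec:quasfree-states-ccrhsigma}) and is used later as a black box, notably in the proof of Theorem~\ref{main-thm-one-way} and in Theorem~\ref{thm:eqvlnce-QGS-QFS}. There is therefore no ``paper's own proof'' to compare your proposal against.

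That said, your outline is along the classical lines of how such results are established (Araki--Woods/Holevo/van Daele): realize $\phi_A$ via a doubled Fock representation, reduce quasi-equivalence to implementability of the intertwining Bogoliubov map, and invoke a Shale-type criterion. Two cautions if you intend to flesh it out. First, the Shale criterion in Section~\ref{sec:shale} of this paper is formulated for symplectic maps $L$ with $L^{\tau}L-I$ Hilbert--Schmidt on the \emph{original} one-particle space, not directly as an anti-linear-part condition on a doubled space; you would need to translate your $T_{AB}$ condition into that form or cite the original Shale theorem rather than the paper's refinement. Second, your last paragraph correctly identifies the delicate point: showing that the single implementability condition decouples into exactly the pair $A-B$ and $\sqrt{-A^{2}-I}-\sqrt{-B^{2}-I}$ being Hilbert--Schmidt is where the actual work lies, and your Lipschitz argument for $x\mapsto\sqrt{x-1}$ fails at $x=1$ (the pure edge), so the functional-calculus step needs more care than a blanket Lipschitz bound.
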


\subsection{Shale maps and Bogoliubov transformations}\label{sec:sympl-autom}
Let $\mathcal{H}$ be a complex Hilbert space with inner product $\left\langle \cdot,\cdot \right\rangle$. Often we consider $\mathcal{H}$ as real Hilbert space also with $\left\langle \cdot,\cdot \right\rangle_{\mathbb{R}} = \re \left\langle \cdot,\cdot \right\rangle $. Let $H \subset \mathcal{H}$ be a real subspace such that $\mathcal{H}$ is the complexification $H+iH$ of $H$.
For any real linear operator $S$ on $\mathcal{H}$ , define operators $S_{ij}$ on $H$ such that 
$$
S(x+iy) = S_{11}x+iS_{21}x+S_{12}y+iS_{22}y.
$$
Define the operator $S_0$ on $H\oplus H$ by $S_0 = \begin{bmatrix}S_{11} & S_{12}\\S_{21}&S_{22}\end{bmatrix}$.
We identify $S$ with $S_0$ as a real linear operator and often switch between them freely. We do the same identification for any $S: \mathcal{H}\rightarrow \mathcal{K}$, for some other complex Hilbert space  $\mathcal{K}=K+iK$.
Note that if $S$ is a complex linear operator then $S_{11} = S_{22}$  and $S_{12} = -S_{21}$.


Let $J$ be the operator of multiplication by $-i$ on $\mathcal{H}$ considered as a real linear map. Then $$J_0 =
\begin{bmatrix}
  0 &I \\
 -I & 0
\end{bmatrix}.
$$
Clearly $J^*=J^{-1}=-J$ or equivalently $J_{0}^{\tau}=J_0^{-1}=-J_0$.

A real linear bijective map $L: \mathcal{H}\rightarrow\mathcal{K}$ is said to be a  \emph{symplectic transformation}  if it satisfies 
\begin{itemize}
\item [(i)] $L$ and $L^{-1}$ are continuous (bounded) ;
\item [(ii)] $\im \left\langle Lz, Lw \right\rangle = \im \left\langle z,w \right\rangle$ for all $z,w \in \mathcal{H}$.
\end{itemize}
Correspondingly, $L_0$ will also be called as symplectic transformation. If $\mathcal{K}=\CH$ then we call it a symplectic automorphism.

\begin{prop}[Section 22 in \cite{Par12}]
   $L: \mathcal{H}\to \mathcal{K}$ is symplectic if and only if $$L_0^{\tau}J_0L_0= J_0.$$  Here $J_0$ on left side is on $K\oplus K$ and that on the right side is on $ H\oplus H$.
\end{prop}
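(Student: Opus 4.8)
The plan is to reduce condition (ii) in the definition of a symplectic transformation, namely $\im\langle Lz,Lw\rangle=\im\langle z,w\rangle$, to the single operator identity by first recording how the form $\sigma(z,w)=\im\langle z,w\rangle$ looks in the real picture $H\oplus H$. Writing $z=x+iy$ and $w=u+iv$ with $x,y,u,v\in H$ and extracting the imaginary part from the expansion of $\left\langle x+iy,u+iv \right\rangle$ underlying \eqref{eq:23}, a short computation (using anti-linearity of $\left\langle \cdot,\cdot \right\rangle$ in the first slot) gives $\im\langle z,w\rangle=\langle x,v\rangle_{\mathbb{R}}-\langle y,u\rangle_{\mathbb{R}}$. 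Since $U z=\begin{psmallmatrix}x\\y\end{psmallmatrix}$ and $J_0\begin{psmallmatrix}u\\v\end{psmallmatrix}=\begin{psmallmatrix}v\\-u\end{psmallmatrix}$, this is exactly $\langle U z,\,J_0 U w\rangle_{\mathbb{R}}$. Thus the first step is to establish the key identity
\begin{equation*}
\im\langle z,w\rangle=\langle U z,\,J_0\,U w\rangle_{\mathbb{R}},\qquad z,w\in\mathcal{H},
\end{equation*}
which says that $J_0$ is the operator implementing the symplectic form in the real Hilbert space, and likewise on $\mathcal{K}$ with its own $U$ and the involution $J_0$ on $K\oplus K$.

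Next I would transport condition (ii) through the identification $S=U^{\tau}S_0U$ of \eqref{eq:13}. Because the two coordinatizing maps are isometric isomorphisms with $U^{\tau}=U^{-1}$, the relation $L=U^{\tau}L_0U$ (read between $\mathcal{H}$ and $\mathcal{K}$) rearranges to $U L=L_0 U$. Applying the key identity on $\mathcal{K}$ and then moving $L_0$ across the inner product via its transpose yields
\begin{equation*}
\im\langle Lz,Lw\rangle=\langle U Lz,\,J_0 U Lw\rangle_{\mathbb{R}}=\langle L_0 U z,\,J_0 L_0 U w\rangle_{\mathbb{R}}=\langle U z,\,L_0^{\tau}J_0L_0\,U w\rangle_{\mathbb{R}},
\end{equation*}
where this $J_0$ is the involution on $K\oplus K$, whereas the right-hand side of (ii) is $\im\langle z,w\rangle=\langle U z,\,J_0 U w\rangle_{\mathbb{R}}$ with $J_0$ the involution on $H\oplus H$.

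Finally I would conclude by comparison. Condition (ii) is the assertion that $\langle U z,\,(L_0^{\tau}J_0L_0-J_0)U w\rangle_{\mathbb{R}}=0$ for all $z,w\in\mathcal{H}$; since $U$ maps $\mathcal{H}$ onto $H\oplus H$, the vectors $U z,U w$ range over all of $H\oplus H$, so non-degeneracy of the real inner product forces $L_0^{\tau}J_0L_0=J_0$. Reading the chain of equalities backwards gives the converse. For completeness I would also note that the operator identity carries the regularity built into the definition: since $J_0^2=-I$ (hence $J_0$ is invertible on each of $H\oplus H$ and $K\oplus K$), the identity $L_0^{\tau}J_0L_0=J_0$ exhibits $-J_0L_0^{\tau}J_0$ as a bounded left inverse of $L_0$, which together with the standing assumption that $L$ is a bounded bijection confirms that $L$ is genuinely a symplectic transformation.

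I expect the only real subtlety to be bookkeeping: keeping the two copies of $J_0$ (on $K\oplus K$ versus $H\oplus H$) distinct, tracking the sign in the passage $\im\langle z,w\rangle=\langle U z,J_0 U w\rangle_{\mathbb{R}}$, and justifying the final move from a scalar identity valid for all vectors to a genuine operator identity via non-degeneracy and surjectivity of $U$. None of these is a true obstacle; the entire conceptual content sits in the first identity, after which the argument is a single transpose manipulation.
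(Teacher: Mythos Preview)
The paper does not supply its own proof of this proposition; it is stated with attribution to Section~22 of \cite{Par12} and left unproved. Your argument is correct and is exactly the standard one: express the symplectic form $\im\langle z,w\rangle$ as $\langle Uz,J_0Uw\rangle_{\mathbb{R}}$ in the real model $H\oplus H$, push $L$ through the identification $UL=L_0U$, and read off the operator identity by non-degeneracy.
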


\begin{eg}\label{eg:fundamental-symplectic-opr}
  Let $A \in \B{H, K}$ be any invertible operator  on $H$ then the  operator $T: \CH \mapsto\mathcal{K}$ defined  by $T(u+iv) = Au +i (A^*)^{-1}v$ is a symplectic  transformation of $\CH$. 
\end{eg}
 We will have occasions to deal with the complexification of complex Hilbert spaces, considered as real Hilbert spaces under the inner product $\re\langle\cdot,\cdot\rangle$. 
We saw above that $(\mathcal{H}, \re\langle\cdot,\cdot\rangle ) $ has a canonical isomorphism to $H\oplus H$ as a real Hilbert space. 
Let $\hat{\mathcal{H}}$ denote its complexification, which is $\mathcal{H}\oplus \mathcal{H}$. If $A$ is a real linear operator on $\mathcal{H}$ then we denote by $\hat{A}$, the complexification of $A$, defined by $\hat{A}(z+iw) = Az+iAw$. It is easy to verify, for a a real linear operator $S$ on $\mathcal{H}$ with $S_0 =
  \begin{bmatrix}
    S_{11}&S_{12}\\
    S_{21}&S_{22}
  \end{bmatrix}$, that $\hat{S} =
\begin{bmatrix}
  \hat{S_{11}}&\hat{S_{12}}\\
 \hat{S_{21}} & \hat{S_{22}}
\end{bmatrix}
$ on $\mathcal{H}\oplus \mathcal{H}$. 
In particular we have
$\hat{J}  =
   \begin{bmatrix}
     0& I\\
    -I&0
   \end{bmatrix}
$ on $\mathcal{H}\oplus \mathcal{H}$.

By the same analysis as in  Proposition 22.1 in \cite{Par12} and by applying polar decomposition, we arrive at the following Proposition.

\begin{prop}\label{prop:factoring-symplectic-transf}
 Let $\mathcal{H}, \mathcal{K}$ be complex Hilbert spaces and let $S:\CH\rightarrow\mathcal{K}$ be symplectic. Then it admits a decomposition:
\begin{equation}
\label{eq:57}
S = UTV
\end{equation} where 
$U: \CH\rightarrow\mathcal{K}$ and $
V: \CH\rightarrow\CH $ are unitaries and 
$T: \CH\rightarrow\CH $ has the form $$ T(u+iv) = Au+iA^{-1}v,$$
 where $A \in \B{H}$ is a positive and invertible operator. 
\end{prop}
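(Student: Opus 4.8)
The plan is to obtain $S=UTV$ from a real-linear polar decomposition of $S$ followed by a symplectic diagonalization of the positive part, the point being that, relative to the complex structure $J$ (multiplication by $-i$, with $J_0=\begin{psmallmatrix}0&I\\-I&0\end{psmallmatrix}$), the operators of the special form $T(u+iv)=Au+iA^{-1}v$ are exactly the positive symplectic operators already in normal form, cf.\ Example \ref{eg:fundamental-symplectic-opr}.

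First I would record the algebraic consequences of $S$ being symplectic. Writing the symplectic identity as $S^{\tau}JS=J$ and using that $S$ is a bounded bijection with bounded inverse together with $J^{\tau}=J^{-1}=-J$, one derives the companion identity $SJS^{\tau}=J$ and the intertwining relations $JS=(S^{\tau})^{-1}J$ and $JS^{\tau}=S^{-1}J$. Setting $M:=S^{\tau}S$ (a positive, bounded, invertible real-linear operator on $\mathcal H$), these combine into the single clean relation
\[
JMJ^{-1}=M^{-1}.
\]
Let $P:=M^{1/2}$. Since conjugation by the orthogonal operator $J$ commutes with the continuous functional calculus of $M$, the displayed relation upgrades to $JPJ^{-1}=P^{-1}$; equivalently $PJP=J$, so $P$ is itself a positive symplectic automorphism of $\mathcal H$. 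Now perform the polar decomposition $S=WP$ with $W:=SP^{-1}$. Then $W^{\tau}W=P^{-1}MP^{-1}=I$ and, $W$ being onto, $W$ is orthogonal; moreover $W=SP^{-1}$ is symplectic as a product of symplectics. An operator that is simultaneously orthogonal ($W^{\tau}=W^{-1}$) and symplectic ($W^{\tau}JW=J$) commutes with $J$, hence is complex linear, and being a real isometry onto it is a $\mathbb C$-linear unitary $W:\mathcal H\to\mathcal K$.

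It remains to diagonalize the positive symplectic $P$, and here lies the substantive work. Put $B:=\log P$, a bounded self-adjoint real-linear operator; the relation $JPJ^{-1}=P^{-1}$ becomes $JBJ^{-1}=-B$, i.e.\ $B$ anticommutes with $J$. Consequently $B^{2}$ commutes with $J$, so $N:=\lvert B\rvert=(B^{2})^{1/2}$ is complex linear, while on $(\ker B)^{\perp}$ the sign $R:=\operatorname{sgn}(B)$ is a self-adjoint involution anticommuting with $J$. Thus $J$ interchanges the eigenspaces $\mathcal H_{\pm}$ of $R$, and together with the complex subspace $\ker B$, on which $P=I$, this exhibits a closed real form $H'$ of $\mathcal H$ (namely $\mathcal H_{+}$ together with an arbitrary real form of $\ker B$) with $\mathcal H=H'\oplus JH'$, on which $P$ acts as $\operatorname{diag}(A',A'^{-1})$ with $A':=\exp(N|_{H'})$ positive and invertible. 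Finally, since any two real forms of a complex Hilbert space are carried onto one another by a unitary, I would choose a $\mathbb C$-linear unitary $U_{1}$ with $U_{1}H=H'$; transporting the above gives $U_{1}^{*}PU_{1}=T$, where $T(u+iv)=Au+iA^{-1}v$ with $A:=U_{1}^{*}A'U_{1}\in\B{H}$ positive and invertible. Setting $U:=WU_{1}$ and $V:=U_{1}^{*}$ yields $S=WP=WU_{1}\,T\,U_{1}^{*}=UTV$, as required.

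The genuinely delicate step is the diagonalization of $P$ in infinite dimensions: unlike the finite-mode case there need be no eigenvalues, so the pairing $\lambda\leftrightarrow\lambda^{-1}$ forced by $JPJ^{-1}=P^{-1}$ must be handled through the spectral theorem and functional calculus rather than by listing symplectic eigenvalues, and one must check that the constructions ($N$, $R$, the splitting off of $\ker B$) respect the complex structure. The second point requiring care is the identification of the real form $H'=\mathcal H_{+}$ with the fixed real form $H$ by a unitary $U_{1}$; this rests on the transitivity of the unitary group on the conjugations (equivalently, real forms) of $\mathcal H$, which holds because $H$ and $H'$ both have real dimension equal to $\dim_{\mathbb C}\mathcal H$. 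A cleaner bookkeeping of the reality conditions can alternatively be organized on the complexification $\hat{\mathcal H}\cong\mathcal H\oplus\mathcal H$ via Proposition \ref{prop:h-hat-isomor} and Corollary \ref{cor:J-hat}, where $\hat P$ is an honest positive operator and $\hat J=\begin{psmallmatrix}0&I\\-I&0\end{psmallmatrix}$ implements the spectral pairing $E(\Delta)\mapsto E(\Delta^{-1})$.
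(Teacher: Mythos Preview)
Your argument is correct and follows the same strategy the paper indicates: apply the real polar decomposition to $S$, verify that the orthogonal factor is $\mathbb{C}$-linear unitary because it is simultaneously orthogonal and symplectic, and then bring the positive symplectic factor $P=(S^{\tau}S)^{1/2}$ into the normal form $T(u+iv)=Au+iA^{-1}v$ via the relation $JPJ^{-1}=P^{-1}$. The paper's proof is a one-line reference to Proposition~22.1 of \cite{Par12}; you have supplied the details, and your organization of the diagonalization step through $B=\log P$, its anticommutation with $J$, and the sign/modulus splitting $B=R\,|B|$ is a clean way to produce the real form $H'$ without appealing to eigenvalues---exactly the care the infinite-dimensional case demands.
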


Suppose $\mathcal{H}$ is a complex Hilbert space. We look at automorphisms of the collection of Weyl operators on $\GH .$ Shale's theorem \cite{Sha62} characterizes the set of all bounded, invertible real linear maps $L: \mathcal{H}\to \mathcal {H}$ which  admit some unitary operator $Z$ (unique up to a scalar multiple of modulus one) on the Fock space $\GH $ such that $ZW(u)Z^*= W(Lu), u\in \mathcal{H}.$  A lucid presentation of this can be seen in Section 22 of \cite{Par12}. 
We state the theorem here with some obvious modifications required for our work and sketch a proof of the same based on the construction given in \cite{BhSr05}.  

Let $\mathcal{H}, \mathcal{K}$ be two Hilbert spaces, define the set of \emph{Shale Operators} from $\CH$ to $\mathcal{K}$, $\mathscr{S}(\mathcal{H},\mathcal{K})$ by 
\[
\mathscr{S}(\mathcal{H},\mathcal{K}) = \{L \in \mathscr{B}_{\mathbb{R}}(\mathcal{H},\mathcal{K}): L \textnormal{ is symplectic and } L^{\tau}L - I \textnormal{ is Hilbert-Schmidt}.\}
\]
We denote $\mathscr{S}(\mathcal{H}):=\mathscr{S}(\mathcal{H},\mathcal{H})$.
Notice from Example \ref{eg:fundamental-symplectic-opr} that there exists symplectic transformations  $T$ such that $T \notin \mathscr{S}(\mathcal{H},\mathcal{K})$.


\begin{thm}\label{thm:generalized-shale-unitary}
\begin{enumerate}
\item\label{item:21} 
 Let $L \in \mathscr{B}_{\mathbb{R}}(\mathcal{H},\mathcal{K})$ be a symplectic operator.  
 Then there exists a unitary operator $\Gamma_s(L): \GH\rightarrow \Gamma_s(\mathcal{K})$ such that 
 \begin{equation}
\label{eq:56}
\Gamma_s(L)W(u)\Gamma_s(L)^{*} = W(Lu), \forall u \in \CH, 
\end{equation}
if and only if $L \in \SH{\CH, \mathcal{K}}.$  In such a case, $\Gamma_s(L)$ is determined uniquely up to a scalar of modulus unity.
\item \label{item:21.1} A unitary $\Gamma_s(L)$ satisfying (\ref{eq:56}) can be chosen such that it satisfies \begin{equation}
    \label{eq:60}
\left\langle \Gamma_s(L)\Phi_{\CH}, \Phi_{\mathcal{K}} \right\rangle \in \mathbb{R}^{+},
\end{equation}
where $\Phi_{\mathcal{H}}$ and $ \Phi_{\mathcal{K}}$ are vacuum vectors in $\GH$ and $\Gamma_s(\mathcal{K})$ respectively, this choice makes $\Gamma_s(L)$ unique. In this case, 
\begin{equation}\label{eq:71}
    \Gamma_s(L^{-1})= \Gamma_s(L)^{*}
\end{equation}
\item \label{item:22} Let $ \mathcal{H}_{1},\mathcal{H}_{2},\mathcal{H}_{3}$ be three Hilbert spaces and $L_1 \in \SH{\mathcal{H}_1,\mathcal{H}_{2}}, L_2\in \SH{\mathcal{H}_2,\mathcal{H}_3}$. Then 
\begin{equation}
\label{eq:77}
\Gamma_s(L_2L_1) = \sigma(L_2, L_1)\Gamma_s(L_2)\Gamma_s(L_1),
\end{equation}
where $\sigma(L_2, L_1) \in \mathbb{C}, \abs{\sigma(L_2, L_1)} = 1$. 
\end{enumerate}
\end{thm}

\begin{proof}\ref{item:21}. 
Assume that $L \in \SH{\CH, \mathcal{K}}.$ We will prove the existence of $\Gamma_s(L)$ based on the construction in \cite{BhSr05}. By Proposition \ref{prop:factoring-symplectic-transf} there exist unitaries $U: \CH \rightarrow \mathcal{K}$, $V: \CH \rightarrow \CH$ such that $L = UTV$ where $T$ is a symplectic automorphism of $H$ such that
  \begin{equation*}
T(u+iv) = Au+iA^{-1}v    
  \end{equation*}
where $A \in \B{H}$ is  positive and invertible.  It can be seen from the proof of Proposition \ref{prop:factoring-symplectic-transf} that
\begin{equation*}
  L_0 = U_0
  \begin{bmatrix}
    A &0\\
   0& A^{-1}
  \end{bmatrix} V_0
\end{equation*}
 for some orthogonal transformations $U_0 \in \B{H, K}$ and $V_0\in \B{H}$. Now it can be seen that \[L_0^{T}L_0 = V_0^{-1}
   \begin{bmatrix}
     A^2&0\\
     0 & A^{-2}
   \end{bmatrix}V_0.
\]
Therefore $L_0^{T}L_0-I = V_0^{-1}\left(
\begin{bmatrix}
  A^2&0\\
 0 & A^{-2}
\end{bmatrix} -
\begin{bmatrix}
  I &0\\0&I
\end{bmatrix}
\right)V_0
$. Hence we get that $A^{2} - I$ is Hilbert-Schmidt and since A is positive, Theorem 2.1 of \cite{BhSr05} applies. Thus there exists $\Gamma_s(T)$ such that 
\begin{align}
\label{eq:58}
\Gamma_s(T)W(u)\Gamma_s(T)^{*}& = W(Tu), \forall u \in \CH,\\ \label{eq:59}
\left\langle \Gamma_s(T)\Phi_{\CH},\Phi_{\mathcal{K}} \right\rangle& \in \mathbb{R}^+.
\end{align}
Define 
\begin{equation}
\label{eq:72}
\Gamma_s(L):= \Gamma_s(U)\Gamma_s(T)\Gamma_s(V),
\end{equation} where $\Gamma_s(U)$ and $\Gamma_s(V)$ are the second quantization associated with the unitary $U$ and $V$. A direct computation shows that $\Gamma_s(L)$ satisfies the (\ref{eq:56})(because of properties of $\Gamma_s(U),\Gamma_s(V)$ and equation \ref{eq:59}) and (\ref{eq:60}) (because second quantizations $\Gamma_s(U_j)$ acts as identity on vacuum vector). We refer to Theorem 22.11 in \cite{Par12} for the necessity part.

\ref{item:21.1}. Equation (\ref{eq:60}) is automatically satisfied in our construction in (1) above because of (\ref{eq:59}). To see the uniqueness, let $\Gamma_s^1(L)$ and $\Gamma_s^2(L)$ satisfy  (\ref{eq:56}) and (\ref{eq:60}). Therefore we get $\Gamma_s^2(L)^*\Gamma_s^1(L)W(u) =W(u)\Gamma_s^2(L)^*\Gamma_s^1(L), \forall u\in \CH $. Therefore by irreducibility of Weyl operators (Proposition 20.9 in \cite{Par12}), $\Gamma_s^2(L)^*\Gamma_s^1(L) = cI$ for some complex scalar of unit modulus. But now by (\ref{eq:60}) we get $\Gamma_s^2(L) = \Gamma_s^1(L)$.

To prove  (\ref{eq:71}), note that $\left\langle \Gamma_s(L)^{*} \Phi_{\mathcal{K}},  \Phi_{\CH} \right\rangle = \left\langle \Gamma_s(L)\Phi_{\CH}, \Phi_{\mathcal{K}} \right\rangle \in \mathbb{R}^{+}$ therefore if we show that $\allowbreak\Gamma_s(L)^{*} W(u)\Gamma_s(L) = W(L^{-1}u)$ then by the uniqueness of $\Gamma_s(L^{-1})$  we get  (\ref{eq:71}). Recall from Theorem 2.1 of \cite{BhSr05} that $\Gamma_s(T^{-1}) = \Gamma_s(T)^{*}$ and for second quantization (\ref{eq:73}) unitary we have, $\Gamma_s(U_j^{*}) = \Gamma_s(U_j)^{*}$. Further by (\ref{eq:72}), and (\ref{eq:75}) we have
\begin{align*}
\Gamma_s(L)^{*} W(u)\Gamma_s(L)&=\Gamma_s(U_2)^{*}\Gamma_s(T)^{*}\Gamma_s(U_1)^{*}W(u) \Gamma_s(U_1)\Gamma_s(T)\Gamma_s(U_2)\\
& = W(U_2^{*}T^{-1}U_1^{*}u)\\
& = W(L^{-1}u).
\end{align*}
This completes the proof of \ref{item:21.1}.

\ref{item:22}. Follows immediately from \ref{item:21}.
\end{proof}
\begin{rmks}\begin{enumerate}

    \item When $\CH$ is finite dimensional, it is known that there exists a choice of $\Gamma_s(L)$ such that the multiplier $\sigma(L_1, L_2) = \pm 1, \forall L_1, L_2 \in Sp_{2n}(\mathbb{R})$, where $Sp_{2n}(\mathbb{R})\subseteq M_{2n}(\mathbb{R})$ is the subgroup of all $2n\times 2n$ symplectic matrices. This is called the metaplectic representation of the symplectic group. An elementary and self-contained presentation can be found in Chapter 4 of  \cite{Folland89}, Theorem 4.37 there is of particular interest in this regard. In the infinite dimensional case, \cite{MATSUI200467} and \cite{Tveritinov2004} are of interest.
    
    \item The map $W(u)\mapsto W(Lu)$ is known as the \emph{Bogoliubov transformation} of the CCR algebra, induced by $L$. Whenever we write $\Gamma_s(L)$, we mean the unique unitary operator satisfying (\ref{eq:60}). It is called  the \emph{Shale unitary} corresponding to an $L \in \mathscr{S}(\mathcal{H},\mathcal{K})$. It is to be noted that if $L$ is a  non-unitary contraction then $\Gamma (L)$ defined by (\ref{eq:73}) is not a unitary and hence in such a case $\Gamma _s (L)\neq \Gamma (L).$
\end{enumerate}

\end{rmks}
 
 \subsection{Williamson's normal form and symplectic spectrum}
 
The following Theorem  \cite{BhJo18} extends well-known Williamson's normal form to bounded operators on infinite dimensional real Hilbert spaces.

\begin{thm} (Williamson's Normal Form) \label{sec:will-norm-form}
  Let $H$ be a real Hilbert space and let $S_{0}$ be a strictly positive invertible operator on $H\oplus H$ then there exists
a real Hilbert space $K$, a positive invertible operator $\tilde{P}$ on $K$ and a symplectic transformation $ M_0\colon H\oplus H \rightarrow K \oplus K$
 such that   \begin{equation}
    \label{eq:will-norm-form}
    S_0 = M_0^{\tau} \begin{bmatrix}
             \tilde{P} & 0  \\
              0 & \tilde{P}
             \end{bmatrix} M_0
  \end{equation}
The decomposition is unique in the sense that if $P_1$ is any strictly positive invertible operator on a Hilbert space $\tilde{H}$ and $\tilde{M}\colon H\oplus H \rightarrow \tilde{H} \oplus \tilde{H}$ is a symplectic transformation such that
 \begin{equation}\label{eq:uniqueness}
    S_0 = \tilde{M}^{T} \begin{bmatrix}
              P_1 & 0  \\
              0 & P_1
             \end{bmatrix} \tilde{M},
  \end{equation}
then
$P$ and $P_1$ are orthogonally equivalent.
\end{thm}

 For the work in this article we need a version of Williamson's Normal form which produces a diagonalization within the original Hilbert space $\CH$ itself. We will identify $\CH$ and $\mathcal{K}$ accordingly for this purpose. We obtain this goal in Corollary \ref{cor:will-norm-form-complex-case}. The scheme is as described below.

 Since $M_0$ in (\ref{eq:will-norm-form}) is an invertible operator between $H\oplus H$ and $K \oplus K$; $H$ and $K$ have same dimension. Let $\{h_j\}$ and $\{k_j\}$ be a basis for $H$ and $K$ respectively. Since $\mathcal{H} = H+iH$ and $\mathcal{K} = K+iK$, there exists a unitary, $U:\mathcal{K} \rightarrow \CH$ such that $U(h_j) = k_j$. 
 If we write $\tilde{\mathscr{P}}_0 = \begin{bsmallmatrix}
              \tilde{P} & 0  \\
              0 & \tilde{P}
             \end{bsmallmatrix}$, then $U\tilde{\mathscr{P}}U^*:= \mathscr{P}$ is a complex linear, positive and invertible operator on $\CH$.
 Recall the relationship between the operator $\mathscr{P}$ defined on $\mathcal{H}$ and $\mathscr{P}_0$ on $H$ described at the beginning of \ref{sec:sympl-autom}. Notice now, for the corresponding real linear map $\mathscr{P}_{0}$, we have $\mathscr{P}_{0} = \begin{bsmallmatrix}
              P & 0  \\
              0 & P
             \end{bsmallmatrix}$ for some positive invertible operator $P$ on $H$.  Now we have the following Corollary. 
              
  \begin{cor}\label{cor:will-norm-form-complex-case}
    Let $\CH= H+iH$ and $S$ be a real linear positive, invertible operator on a complex Hilbert space $\mathcal{H}$. Then there exists a  complex linear positive invertible operator $\mathscr{P}$ (unique up to a unitary conjugation)  and a symplectic automorphism $L$ on $\mathcal{H}$ such that 
\begin{equation}
\label{eq:25}
S = L^{\tau}\mathscr{P}L.
\end{equation}
Further, $\mathscr{P}$ has the property that $\mathscr{P}_0 =
\begin{bsmallmatrix}
  P & 0 \\
 0 & P
\end{bsmallmatrix}
$ and thus we also have \begin{equation}
    \label{eq:will-norm-form-sames-space}
    S_0 = L_0^{\tau} \begin{bmatrix}
             P & 0  \\
              0 & P
             \end{bmatrix} L_0.
  \end{equation}
  \end{cor}
  \begin{proof}
  We have $S_0 = M_0^{\tau}\tilde{\mathscr{P}}_0M_0$ by Theorem \ref{sec:will-norm-form}. Therefore $S= M^{\tau}U^*U\tilde{\mathscr{P}}U^*UM$. Take $L= UM$. It should be noticed that the transpose of $U$ considered as a real linear operator is same as the $U^*$.
  \end{proof}
  \begin{rmk}\label{rmk:symplectic-spectrum}
Under the situation of Corollary \ref{cor:will-norm-form-complex-case}, in view of the uniqueness of the decomposition, the spectrum of $\mathscr{P}$, can be defined as the {\em symplectic spectrum\/}  of the positive invertible operator $S$.
\end{rmk}
\section{Quantum Gaussian States}

In this Section we present the basics of infinite mode quantum Gaussian states. Perhaps most of the results in this Section are known. Since notation as well as the set up differ in different sources (some of them consider only complex linear operators),  for the convenience of the reader we present some of the steps.

Recall that a \textit{state (or density operator)} $\rho$ on a Hilbert space $\mathcal{H}$ is a bounded positive operator of unit trace. Such a density operator uniquely determines a state  on the $C^{*}$- algebra $\B{\CH}$ by $Y\mapsto \tr \rho Y$, $Y \in \B{\CH}$.

\begin{defn}\label{def:QFT}
  Let $\rho \in \B{\Gamma_s(\mathcal{H})}$  be a density operator. Then the complex valued function $\hat{\rho}$ on $\mathcal{H}$ defined by 
\begin{equation}
\label{eq:10}
\hat{\rho}(z) = \tr \rho W(z), \phantom{......} z \in \mathcal{H}
\end{equation}
 is called the \emph{quantum characteristic function}(or \emph{quantum Fourier transform}) of $\rho$.
\end{defn}
 As the von Neumann algebra generated by Weyl unitaries is the algebra of all bounded operators it follows that the map  $\rho \rightarrow \hat{\rho}$ is injective. Here is an analogue of well-known positive definiteness of  characteristic functions of probability measures.  If $\rho$ is any density operator, then the kernel $K_{\rho}$ on $\mathcal{H}\times \mathcal{H}$ defined by $K_{\rho}(z,w) = e^{i \im{\left\langle z,w \right\rangle}}\hat{\rho}(w-z)$ is positive definite, i.e $\sum_{j,k = 1}^{n}\overline{c_{j}}c_{k}K_{\rho}(z_{j},z_{k})\geq 0$, for all $(c_1, c_2 \cdots c_n) \in \mathbb{C}^n, n\in \mathbb{N}$. The proof is a direct computation making use of commutation relations of Weyl unitaries (see  \cite{Par10}).

If $\rho$ is a state so is any unitary conjugation of it. 
 By using Theorem \ref{thm:generalized-shale-unitary}, proof of the following proposition follows in the same way as that of Proposition 2.5 in \cite{Par10}.
\begin{prop}\label{prop:qcf-of-weyl-conjugation}
 If $\rho$ is a state on $\Gamma_s(\mathcal{K})$ and $L \in  \mathscr{S}(\CH, \mathcal{K})$ 
 then
\[\{\Gamma_s(L)^{*}\rho\Gamma_s(L)\}^{\wedge}(\beta) = \hat{\rho}(L\beta).\]
Further, for every $\alpha \in \CH$, 
\[\{W(\alpha)\rho W(\alpha)^{-1}\}^{\wedge}(\beta) = \hat{\rho}(\beta) e^{2i \im \left\langle \alpha, \beta \right\rangle} .\] 
\end{prop}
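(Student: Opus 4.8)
The plan is to reduce both identities to the defining formula (\ref{eq:10}) for the quantum characteristic function, namely that it is the trace of the state against the Weyl operator, and then to push the conjugating unitary through $W(\beta)$ by means of the appropriate covariance relation, invoking the cyclicity of the trace. Since the conjugating operators are unitaries in both cases, the conjugation preserves the trace-class property and the value of the trace, so every trace manipulation below is justified.

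For the first identity I would start from
\[
\{\Gamma_s(L)^{*}\rho\,\Gamma_s(L)\}^{\wedge}(\beta) = \tr\left[\Gamma_s(L)^{*}\rho\,\Gamma_s(L)\,W(\beta)\right],
\]
where $W(\beta)$ is the Weyl operator on $\Gamma_s(\mathcal{H})$. Using cyclicity of the trace to carry $\Gamma_s(L)^{*}$ to the right, this equals $\tr\left[\rho\,\Gamma_s(L)\,W(\beta)\,\Gamma_s(L)^{*}\right]$. The key step is then the covariance relation (\ref{eq:56}) of Theorem \ref{thm:generalized-shale-unitary}, which gives $\Gamma_s(L)W(\beta)\Gamma_s(L)^{*} = W(L\beta)$, the latter being the Weyl operator on $\Gamma_s(\mathcal{K})$. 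Substituting, the expression collapses to $\tr\left[\rho\,W(L\beta)\right] = \hat{\rho}(L\beta)$, as claimed.

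For the second identity I would again write
\[
\{W(\alpha)\rho\,W(\alpha)^{-1}\}^{\wedge}(\beta) = \tr\left[W(\alpha)\rho\,W(\alpha)^{-1}W(\beta)\right]
\]
and use cyclicity to obtain $\tr\left[\rho\,W(\alpha)^{-1}W(\beta)W(\alpha)\right]$. The heart of the matter is the explicit evaluation of $W(\alpha)^{-1}W(\beta)W(\alpha)$ by repeated use of the Weyl relation (\ref{eq:9}) together with $W(-\alpha) = W(\alpha)^{*}$ from (\ref{eq:8}). Writing $W(\alpha)^{-1} = W(-\alpha)$, a first application gives $W(-\alpha)W(\beta) = e^{\,i\im\left\langle \alpha,\beta \right\rangle}W(\beta-\alpha)$, and a second gives $W(\beta-\alpha)W(\alpha) = e^{-i\im\left\langle \beta,\alpha \right\rangle}W(\beta)$, where I have used $\im\left\langle \alpha,\alpha \right\rangle = 0$. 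Combining these and using $\im\left\langle \beta,\alpha \right\rangle = -\im\left\langle \alpha,\beta \right\rangle$, the two phases add to give $W(\alpha)^{-1}W(\beta)W(\alpha) = e^{\,2i\im\left\langle \alpha,\beta \right\rangle}W(\beta)$. Pulling this scalar out of the trace then yields $\hat{\rho}(\beta)\,e^{\,2i\im\left\langle \alpha,\beta \right\rangle}$.

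I expect no serious obstacle here: both identities are formal consequences of the covariance relations and the cyclicity of the trace. The only points requiring a moment of care are the bookkeeping of the signs of the imaginary parts in the second computation, and the observation in the first identity that $L\beta$ must be read as an element of $\mathcal{K}$ so that $W(L\beta)$ acts on $\Gamma_s(\mathcal{K})$. The substantive input, namely the existence of a unitary $\Gamma_s(L)$ satisfying (\ref{eq:56}), has already been provided by Theorem \ref{thm:generalized-shale-unitary}, so the present argument is purely computational.
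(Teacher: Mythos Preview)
Your proof is correct and follows exactly the approach the paper indicates: the paper simply remarks that the result follows from Theorem \ref{thm:generalized-shale-unitary} in the same way as Proposition 2.5 of \cite{Par10}, and your computation is precisely that argument spelled out, using cyclicity of the trace together with the covariance relation (\ref{eq:56}) for the first identity and the Weyl relations (\ref{eq:8})--(\ref{eq:9}) for the second.
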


\begin{defn}\label{defn:quant-gauss-stat-QFT}
Let $\rho \in \B{\GH}$ be a state, $\rho$ is said to be Gaussian if there exists $w\in \CH$ and a symmetric, invertible $S \in \mathscr{B}_{\mathbb{R}}(\CH)$ 
such that 
\begin{equation}
\label{eq:47}
\hat{\rho}(z) = \exp{-i \re \left\langle w, z \right\rangle - \frac{1}{2} \re  \left\langle z,
Sz \right\rangle}, \forall z \in  \CH. 
  \end{equation}
In such a case we write  $\rho = \rho_g(w,S)$.
\end{defn}
Note that this definition determines a real linear functional $z\mapsto \re \left\langle w, z \right\rangle $ and a bounded  quadratic form $z \mapsto \re  \left\langle z, Sz \right\rangle$ on the real Hilbert space $\CH$. Hence $w$ and $S$ are uniquely determined by the definition.                                                                                                                                                                                                                                                                                                  We call $w$ the \emph{mean vector} and $S$ the \emph{covariance operator} associated with $\rho$. Suppose $\CH = H+iH$, where $H$ is a  real subspace and let $w= \sqrt{2}( l-im)$, then we call $l$ and $m$ as \emph{mean momentum vector} and \emph{mean position vector} respectively. 

 Let $\mathbf{G}(\CH)$ denote the set of all Gaussian states on $\GH$ and $\mathscr{K}_{\mathbf{G}}(\CH)$ denote the set of all Gaussian covariance operators on $\CH$.
We will characterize the elements of  $\mathscr{K}_{\mathbf{G}}(\CH)$ in Theorem \ref{thm:main-thm}.

\begin{egs}\label{egs:Gaussian-states}
 \begin{enumerate}
\item \label{eg:coherent-state} For $f \in \CH$, \phantom{|}consider  \phantom{|}the normalized exponential vector  $\psi(f):= e^{-\frac{1}{2}\|f\|^2}e(f)$. The pure state $\ketbra{\psi(f)}{\psi(f)}$ is called  as the \emph{coherent state}. The same arguments as that  of Proposition 2.9 in \cite{Par10}, prove that any coherent state is a pure Gaussian state on $\GH$ and
\begin{equation}
\label{eq:62}
(\ketbra{\psi(f)}{\psi(f)})^{\wedge}(z) = \exp{-2i \im \left\langle z,f \right\rangle -\frac{1}{2}\|z\|^2}
\end{equation}
In particular
$\ketbra{e(0)}{e(0)} = \rho_g(0,I)$.
\item\label{shale-Gaussian}   Let $L$ be a symplectic automorphism on $\CH$ such that $L^{\tau}L-I$ is Hilbert-Schmidt. Define $\psi_L = \Gamma_s(L)^{*}\ket{e(0)}$. Then 
\begin{align*}
(\ketbra{\psi_L}{\psi_L})^{\wedge}(z) 
                               &= \tr  (\ketbra{\psi_L}{\psi_L}W(z))\\
                               & = \left\langle \psi_L, W(z)\psi_L \right\rangle\\
                               &= \left\langle e(0), \Gamma_s(L)W(z)\Gamma_s(L)^{*}e(0) \right\rangle\\
                               &=  \left\langle e(0), W(Lz)e(0) \right\rangle\\
                               &= e^{-\frac{1}{2}\left\langle z, L^{\tau}Lz \right\rangle}.
\end{align*}
Therefore, $\ketbra{\psi_L}{\psi_L} = \rho_g(0, L^{\tau}L)$.
\item\label{eg:fundamental-mixed-gaussian}
Consider $\Gamma_s(\mathbb{C})= L^2(\mathbb{R})$, under this isomorphism there is an orthonormal basis, $\{\psi_n\}$ of $L^2(\mathbb{R})$, in which $e(z) = \sum\limits_{n=0}^\infty \frac{z^n}{\sqrt{n!}}\psi_n$. Let us notate the annihilation and creation operators as $a$ and $a^{\dagger}$ respectively. Then the number operator,  $a^\dagger a$ satisfies  $a^\dagger a\psi_n = n\psi_n, \forall n \in \mathbb{N}$ and \[\tr e^{-sa^{\dagger}a} = (1-e^{-s})^{-1},  \phantom{...} s>0.\]
Therefore the states 
\begin{equation}
    \rho_s = (1-e^{-s})e^{-sa^{\dagger}a}, \phantom{...} s>0
\end{equation}
are well defined. By Proposition 2.12 in \cite{Par10} we have
\begin{equation}\label{eq:QFT-fundamental-mixed-gaussian}
    \hat{\rho}_s(z) = \exp{-\frac{1}{2}(\coth{\frac{s}{2}})\abs{z}^2}.
\end{equation}
 Therefore $\rho_s \in \mathbf{G}(\mathbb{C})$. Since the spectrum of $a^{\dagger}a$ is $\{0,1,2,\dots\}$, it is not a pure state. It may be noted that $e^{-sa^\dagger a} = \Gamma(e^{-s})$, the second quantization (\ref{eq:73}) of the contraction $e^{-s}I$.
\end{enumerate}
\end{egs}

    All the statements in the following Proposition follow by direct  computations. 

\begin{prop}\label{prop:weyl-conjugation-Gaussian-state}
\begin{itemize}
  \item [(i)] Let $\alpha \in \CH$. Then \[W(\alpha)\rho_g(w, S)W(\alpha)^{-1} = \rho_g(w-2i\alpha, S).\]
In particular, \[ W(\frac{-i}{2}w)\rho_g(w, S)W(\frac{-i}{2}w)^{-1} = \rho_g(0,S).\]
\item [(ii)]
   Let $\rho_1 = \rho_g(w, S_1)$ and $\rho_2= \rho_g(w_2,S_2)$ be Gaussian states on $\Gamma_s(\CH_{1})$ and $\Gamma_s(\CH_{2})$ respectively. Then $\rho_{1}\otimes \rho_{2} = \rho_g(w_1\oplus w_2, S_1\oplus S_2)$.
\item [(iii)]  If $\rho = \rho_g(w, S)$ on $\Gamma_s(\mathcal{K})$ and $L \in  \mathscr{S}(\CH, \mathcal{K})$ then
\[\Gamma_s(L)^{*}\rho\Gamma_s(L) = \rho_g(L^{\tau}w, L^{\tau}SL).\]
\end{itemize}
\end{prop}
 The above proposition asserts that the Weyl conjugation displaces  the mean (hence the name displacement operator in physics literature) and the Shale unitary transforms the covariance operator. 

\subsection{Necessary  conditions on the covariance operator}

We would like to characterize covariance operators of quantum Gaussian states. In the following lemma we see that  if $S$ is one such operator on $\CH$  then  $\hat{S}-i \hat{J}\geq 0$ on $\hat{\CH}$. It is to be noted that here $\CH$ is a complex Hilbert space, but we are considering it as a real Hilbert space (under the real part of the given complex inner product) and then complexifying it  to get $\hat{\CH}.$ 

\begin{lem}\label{first lemma}
  Let $S$ be a real linear symmetric and invertible  operator on $\mathcal{H}$. Suppose the function $f:\mathcal{H}\rightarrow \mathbb{R}$ defined by $f(z)= e^{-\frac{1}{2}\re\left\langle z,Sz \right\rangle}$ is the quantum characteristic function of a density operator $\rho$ i.e. $S \in \mathscr{K}_{\mathbf{G}}(\CH)$  then  on $\hat{\CH}$, 
\begin{equation}
\label{eq:17}
\hat{S}-i \hat{J}\geq 0.
\end{equation} 
\end{lem}

\begin{proof}
 We have already noticed that on $f$ being a quantum characteristic function, the kernel 
\begin{equation}
\label{eq:66}
K_{\rho}(\alpha,\beta) = e^{i\im \left\langle \alpha,\beta \right\rangle}f(\beta-\alpha), \phantom{....}\alpha,\beta \in \CH.
\end{equation}
is  positive definite on $\mathcal{H}$. If $\alpha = x +iy, \beta = u+iv$ where $x,y,u,v \in H$ then
$\im \left\langle \alpha,\beta \right\rangle = \left\langle\begin{psmallmatrix}x\\y
\end{psmallmatrix}, J_0\begin{psmallmatrix}u\\v
\end{psmallmatrix}
\right\rangle$ on $H \oplus H$ (c.f.~\ref{sec:sympl-autom})
we can rewrite the definition of $K_{\rho}$ as 
\begin{equation}
\label{eq:8.1}
 K_{\rho}(\alpha,\beta)= \exp\left\{i\left\langle\begin{psmallmatrix}x\\y
\end{psmallmatrix}, J_0\begin{psmallmatrix}u\\v
\end{psmallmatrix}
\right\rangle - 
\left\langle \begin{psmallmatrix}u-x\\  v-y \end{psmallmatrix},
\frac{1}{2}S_0\begin{psmallmatrix}
  u-x\\
  v-y
\end{psmallmatrix} \right\rangle \right\}
\end{equation}
Now positive definiteness of $K_{\rho}$ in $\mathcal{H}$ reduces to that of $L$ in $H\oplus H$ where 
\begin{equation}
\label{eq:9.1}
 L \left((x,y),(u,v) \right)= \exp\left\{i\left\langle\begin{psmallmatrix}x\\y
\end{psmallmatrix}, J_0\begin{psmallmatrix}u\\v
\end{psmallmatrix}
\right\rangle - 
\left\langle \begin{psmallmatrix}u-x\\  v-y \end{psmallmatrix},
\frac{1}{2}S_0\begin{psmallmatrix}
  u-x\\
  v-y
\end{psmallmatrix} \right\rangle \right\}
\end{equation}
This is equivalent to the positive definiteness of $$ L_t \left((x,y),(u,v) \right) =  L \left(\sqrt{t}(x,y),\sqrt{t}(u,v) \right)$$
for all $t\geq0$. But $\{L_t\}$ is a one parameter multiplicative semigroup of kernels on $H\oplus H$. By elementary properties of positive definite kernels as described in Section 1 of \cite{PaSc72}, positive definiteness of $L_t, t\geq 0$ is equivalent to the conditional positive definiteness of  
$$ N \left((x,y),(u,v) \right)= i\left\langle\begin{psmallmatrix}x\\y
\end{psmallmatrix}, J_0\begin{psmallmatrix}u\\v
\end{psmallmatrix}
\right\rangle - 
\left\langle \begin{psmallmatrix}u-x\\  v-y \end{psmallmatrix},
\frac{1}{2}S_0\begin{psmallmatrix}
  u-x\\
  v-y
\end{psmallmatrix} \right\rangle $$
or equivalently (by the Lemma 1.7 in \cite{PaSc72}), the positive definiteness of 
\begin{align*}
 &N \left((x,y),(u,v) \right)- N \left((x,y),(0,0) \right)- N \left((0,0),(u,v) \right)- N \left((0,0),(0,0) \right)\\
&=i\left\langle\begin{psmallmatrix}x\\y
\end{psmallmatrix}, J_0\begin{psmallmatrix}u\\v
\end{psmallmatrix}
\right\rangle - 
\left\langle \begin{psmallmatrix}u-x\\  v-y \end{psmallmatrix},
\frac{1}{2}S_0\begin{psmallmatrix}
  u-x\\
  v-y
\end{psmallmatrix} \right\rangle 
 +\left\langle\begin{psmallmatrix}x \\ y\end{psmallmatrix},
                  \frac{1}{2}S_0\begin{psmallmatrix}
                  x\\
                  y
                 \end{psmallmatrix} \right\rangle +
                               \left\langle \begin{psmallmatrix}u \\ v \end{psmallmatrix},
                               \frac{1}{2}S_0\begin{psmallmatrix}
                                      u\\
                                      v
                                \end{psmallmatrix} \right\rangle\\
&=i\left\langle\begin{psmallmatrix}x\\y
\end{psmallmatrix}, J_0\begin{psmallmatrix}u\\v
\end{psmallmatrix}
\right\rangle + 
\left\langle \begin{psmallmatrix}x\\  y \end{psmallmatrix},
\frac{1}{2}S_0\begin{psmallmatrix}
  u\\
  v
\end{psmallmatrix} \right\rangle 
 +\left\langle\begin{psmallmatrix}u \\ v\end{psmallmatrix},
                  \frac{1}{2}S_0\begin{psmallmatrix}
                  x\\
                  y
                 \end{psmallmatrix}\right \rangle \\
&=i\left\langle\begin{psmallmatrix}x\\y
\end{psmallmatrix}, J_0\begin{psmallmatrix}u\\v
\end{psmallmatrix}
\right\rangle + 
\left\langle \begin{psmallmatrix}x\\  y \end{psmallmatrix},
\frac{1}{2}S_0\begin{psmallmatrix}
  u\\
  v
\end{psmallmatrix} \right\rangle 
 +\left\langle\begin{psmallmatrix}x \\ y\end{psmallmatrix},
                  \frac{1}{2}S_0\begin{psmallmatrix}
                  u\\
                  v
                 \end{psmallmatrix}\right \rangle  \numberthis \label{eq:10.1}\\
&=i\left\langle\begin{psmallmatrix}x\\y
\end{psmallmatrix}, J_0\begin{psmallmatrix}u\\v
\end{psmallmatrix}
\right\rangle + 
\left\langle \begin{psmallmatrix}x\\  y \end{psmallmatrix},
S_0\begin{psmallmatrix}
  u\\
  v
\end{psmallmatrix} \right\rangle \\
&=i\left\langle\begin{psmallmatrix}u\\v
\end{psmallmatrix}, -J_0\begin{psmallmatrix}x\\y
\end{psmallmatrix}
\right\rangle + 
\left\langle \begin{psmallmatrix}u\\  v \end{psmallmatrix},
S_0\begin{psmallmatrix}
  x\\
  y
\end{psmallmatrix} \right\rangle \numberthis \label{eq:10.3}\\
      \end{align*}
where (\ref{eq:10.1}) follows because the real inner-product is symmetric and $S_0$ is symmetric, 
and (\ref{eq:10.3}) for a similar reason. But  $H\oplus H \subset \mathcal{\hat{H}} = (H\oplus H)+i( H\oplus H)$, the positive definiteness of (\ref{eq:10.3}) lifts to the positive definiteness of

\begin{equation}
\label{eq:16} 
M(w,z) := \left\langle w, \left\{\hat{S}-i\hat{J}\right\}z \right\rangle =
\left\langle\begin{psmallmatrix}u\\v
\end{psmallmatrix}, -i\hat{J}_0\begin{psmallmatrix}x\\y
\end{psmallmatrix}
\right\rangle + 
\left\langle \begin{psmallmatrix}u\\  v \end{psmallmatrix},
\hat{S}_0\begin{psmallmatrix}
  x\\
  y
\end{psmallmatrix} \right\rangle 
\end{equation}
where $M$ is a kernel defined (as above) in  $\mathcal{H} \subset \mathcal{\hat{H}}$.
 Now, since   $\hat{S}-i\hat{J} $ is self-adjoint, 
 the positive definiteness of $M$ in (\ref{eq:16})  is equivalent to (\ref{eq:17}).
\end{proof}



\begin{lem}\label{lem:positivity-condition-consequences}
Let $S$ be a real linear, invertible  operator on $\mathcal{H}$ and $\hat{S}-i \hat{J}\geq 0$ on $\hat{\CH}$. Then 
\begin{enumerate}
\item \label{item:10} $S \geq 0$.
\item\label{item:9} If  $S = L^{\tau}\mathscr{P}L$ is the Williamson's normal form associated with $S$ (as in Corollary \ref{cor:will-norm-form-complex-case}), then $\mathscr{P} -I \geq 0$ on $\CH$.
\item \label{item:8} There exists a primary quasifree state $\phi$ on $CCR(\mathcal{H},\sigma)$ such that 
\begin{equation}\label{eq:52}
\phi(W(z)) = e^{-\frac{1}{2}\re\left\langle z,Sz \right\rangle}.
\end{equation}
   Further,  $\phi = \phi_A$, where $A=-JS$ (the notation $\phi_A$ is as in \ref{sec:quasfree-states-ccrhsigma}).
\end{enumerate} 
\end{lem}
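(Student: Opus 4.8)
The plan is to prove the three statements by first translating the hypothesis $\hat{S} - i\hat{J} \geq 0$ on $\hat{\CH}$ into workable inequalities on the original real Hilbert space $\CH$, and then exploiting the Williamson normal form. The starting point is to unpack what $\hat{S} - i\hat{J} \geq 0$ means. Using Proposition \ref{prop:h-hat-isomor} and Corollary \ref{cor:J-hat}, I identify $\hat{\CH}$ with $\CH \oplus \CH$ and write $\hat{S} = \begin{bsmallmatrix} S & 0 \\ 0 & S \end{bsmallmatrix}$ and $\hat{J} = \begin{bsmallmatrix} 0 & I \\ -I & 0 \end{bsmallmatrix}$ (here I am viewing $S$ and $J$ as acting on the real space $\CH$, so these are the complexifications of the real operators). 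Positivity of $\hat{S} - i\hat{J}$ on pairs $\binom{z}{w}$ then says $\langle z, Sz \rangle + \langle w, Sw \rangle - i\langle z, w \rangle + i \langle w, z \rangle \geq 0$ for all $z, w \in \CH$. Testing this on vectors of the form $\binom{z}{w}$ and optimizing, or equivalently invoking the standard Schur-complement-type reasoning, should yield the scalar inequality $\re\langle z, Sz \rangle \, \re\langle w, Sw \rangle \geq (\im\langle z, w \rangle)^2$, or something directly equivalent to the quasifree positivity condition (\ref{eq:1}).

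For part (\ref{item:10}), taking $w = 0$ in the unpacked inequality immediately gives $\re\langle z, Sz\rangle \geq 0$ for all $z$, i.e. $S \geq 0$ as a real linear operator; this is the cheap consequence and I would dispatch it first. For part (\ref{item:8}), the strategy is to set $\alpha(f,g) := \re\langle f, Sg\rangle$ (symmetrized if necessary, though $S$ symmetric makes this automatic) and verify that $\alpha$ is a real inner product on $\CH$ — positivity comes from part (\ref{item:10}) together with invertibility of $S$ — and that the pair $(\alpha, \sigma)$ satisfies the van Daele/Petz condition (\ref{eq:1}), which is exactly the inequality extracted from the hypothesis. This produces a quasifree state $\phi$ with $\phi(W(z)) = \exp(-\tfrac12 \re\langle z, Sz\rangle)$. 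To identify $\phi = \phi_A$ with $A = -JS$ and to see it is primary, I would apply Proposition \ref{primary-qfs-characterization}: completeness of $(\CH, \alpha)$ follows from boundedness and invertibility of $S$ (the $\alpha$-norm is equivalent to the original norm), and the operator implementing $\alpha(f,g) = \sigma(Af, g) = -\im\langle Af, g\rangle$ is computed directly to be $A = -JS$, using $\sigma = -\im\langle\cdot,\cdot\rangle$ and $J = $ multiplication by $-i$.

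For part (\ref{item:9}), I would feed the Williamson decomposition $S = L^{\tau}\mathscr{P}L$ from Corollary \ref{cor:will-norm-form-complex-case} into the positivity hypothesis. Since $L$ is symplectic, conjugation by $L$ preserves $J$ (in the sense $L^{\tau} J L = J$ at the symplectic-form level), so the condition $\hat{S} - i\hat{J} \geq 0$ should transform into the corresponding condition $\hat{\mathscr{P}} - i\hat{J} \geq 0$ for the diagonal operator $\mathscr{P}$. Because $\mathscr{P}_0 = \begin{bsmallmatrix} P & 0 \\ 0 & P\end{bsmallmatrix}$ with $P$ complex-linear-looking and diagonal, the positivity condition decouples mode-by-mode into the scalar inequality characterizing symplectic eigenvalues, namely $P \geq I$ on $\mathcal{K}$; this is the content of $\mathscr{P} - I \geq 0$.

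The main obstacle I anticipate is the careful bookkeeping in translating between the complexified picture on $\hat{\CH} \cong \CH \oplus \CH$ and the real picture on $\CH$, and in particular verifying cleanly that the symplectic transformation $L$ intertwines the two positivity conditions in part (\ref{item:9}). The invariance $L^{\tau} J L = J$ holds at the level of the symplectic form, but I must confirm it lifts correctly to the complexified operators $\hat{J}$, $\hat{\mathscr{P}}$, $\hat{S}$, so that $\hat{L}^{\tau}(\hat{\mathscr{P}} - i\hat{J})\hat{L} = \hat{S} - i\hat{J}$ (up to the appropriate identification), whence positivity transfers. Getting this intertwining exactly right — rather than just morally right — is where the real work lies; the three conclusions then follow with comparatively little additional effort.
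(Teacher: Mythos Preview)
Your plan is essentially the paper's plan, with one small divergence and one small slip. For parts \ref{item:10} and \ref{item:9} you do exactly what the paper does: expand $\langle z+iw,(\hat S-i\hat J)(z+iw)\rangle$ to extract a real inequality on $\CH$, and for \ref{item:9} conjugate $\hat S-i\hat J\geq 0$ by $\hat L^{-1}$, use that $L$ is symplectic so $(L^{-1})^{\tau}JL^{-1}=J$, and reduce to the block condition $\begin{bsmallmatrix}P&-iI\\ iI&P\end{bsmallmatrix}\geq 0$, i.e.\ $P\geq I$. The slip is your block form $\hat S=\begin{bsmallmatrix}S&0\\0&S\end{bsmallmatrix}$: in the paper's identification of $\hat\CH$ with $\CH\oplus\CH$ (Proposition \ref{prop:h-hat-isomor}), $\hat S$ has entries $\hat{S_{ij}}$ coming from the $H\oplus H$ block decomposition of $S_0$, not a diagonal $S$. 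This is exactly the bookkeeping you flag as the main obstacle, so you know to be careful; the paper sidesteps it in part \ref{item:10} by computing $\langle z+iw,(\hat S-i\hat J)(z+iw)\rangle$ directly rather than via blocks.

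The genuine difference is in part \ref{item:8}. You propose to get the quasifree inequality $(\im\langle z,w\rangle)^2\leq \re\langle z,Sz\rangle\,\re\langle w,Sw\rangle$ directly from the hypothesis by a Schur-complement/discriminant argument (replace $w$ by $tw$ in the expanded inequality and use nonnegativity of the quadratic in $t$). That works and is arguably cleaner. The paper instead routes through part \ref{item:9}: once $\mathscr P\geq I$ is known, it writes $\im\langle z,w\rangle_\CH=\im\langle Lz,Lw\rangle_{\mathcal K}$, applies Cauchy--Schwarz, and then uses $\mathscr P\geq I$ to bound $\langle Lz,Lz\rangle\leq\langle Lz,\mathscr PLz\rangle=\re\langle z,Sz\rangle$. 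Your route avoids invoking Williamson for the existence of the quasifree state; the paper's route makes the dependence on symplectic eigenvalues explicit. Both reach $\phi=\phi_A$ with $A=-JS$ and primariness via Proposition \ref{primary-qfs-characterization} in the same way.
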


\begin{proof} \ref{item:10}.
 Note $\hat{S}-i\hat{J} \geq 0$ implies $\hat{S}$ is symmetric, hence  $S$ is also symmetric. 
 Let $z, w \in H \oplus H$, so that   $z+iw \in \hat{\mathcal{H}}$. Then
\begin{align*}
 0 &\leq \left\langle z+iw, (\hat{S}-i\hat{J})z+iw \right\rangle\\ &= \re \left\langle z, Sz \right\rangle + i \re \left\langle z,Sw \right\rangle -i \re \left\langle w, Sz \right\rangle + \re \left\langle w, Sw  \right\rangle\\ 
                                                        & \phantom{=} - i \re \left\langle z, Jz \right\rangle + \re \left\langle z, Jw \right\rangle- \re \left\langle w, Jz \right\rangle - i\re \left\langle w, Jw \right\rangle\\
                                                        & =\re \left\langle z, Sz \right\rangle + \re \left\langle w, Sw  \right\rangle+2 \re \left\langle z, Jw \right\rangle \numberthis \label{eq:5.1}   \end{align*}
where we used the facts 
$\re \left\langle z, Jz \right\rangle = 0$ for all $z$. Letting $z=w$ in the above computation, then we get $S \geq 0$.

\ref{item:9}.
  Let  $\mathscr{P}_0 =\begin{bsmallmatrix}
              P & 0  \\
              0 & P
             \end{bsmallmatrix}$. Then $\hat{\mathscr{P}} =\begin{bsmallmatrix}
              \hat{P} & 0  \\
              0 & \hat{P}
             \end{bsmallmatrix} $ and $\hat{J} =
             \begin{bsmallmatrix}
               0 &I\\
               -I& 0
             \end{bsmallmatrix}
$ on $\hat{\mathcal{H}} = \mathcal{H}\oplus \mathcal{H}$. 
$\hat{S}-i \hat{J}\geq 0$ implies $ \hat{L}^{\tau}\hat{\mathscr{P}} \hat{L} - i  \hat{J} \geq 0$. By a conjugation with $\hat{L}^{-1}$ and using the fact that $L^{-1}$ is symplectic we get $\begin{bsmallmatrix}
              \hat{P} & 0  \\
              0 & \hat{P}
             \end{bsmallmatrix} - i  \begin{bsmallmatrix}
               0 &I\\
               -I& 0
             \end{bsmallmatrix} \geq 0$ on $\mathcal{H}\oplus \mathcal{H}$. Hence 
             we get $\begin{bsmallmatrix}
              \hat{P} & -iI  \\
              iI & \hat{P}
             \end{bsmallmatrix}\geq 0$ on $\CH \oplus \CH$. But this means $\hat{P} \geq I$ on $\CH$ and correspondingly $\mathscr{P}\geq I$ on $\mathcal{H}$.

\ref{item:8}. 
Since $S$ is positive and invertible, $\alpha(z,w):= \re \left\langle z, Sw \right\rangle$ defines a complete real  inner product on $\mathcal{H}$. Therefore, by (\ref{eq:1}),  a primary quasifree state $\phi$ as in (\ref{eq:52}) exists if and only if $\sigma(z,w)^{2} \leq \alpha(z,z)\alpha(w,w),$ for all $ f,g \in \mathcal{H}$. That is 
 \begin{equation}\label{eq:53}
 \im \left\langle z,w \right\rangle^2 \leq \re \left\langle z, Sz \right\rangle\re \left\langle w, Sw \right\rangle. 
\end{equation}
Now
\begin{align*}
{\im\left\langle z,w \right\rangle}^2 &= {\im\left\langle Lz,Lw \right\rangle}^2\\
                                   &\leq |\left\langle Lz,Lw \right\rangle|^{2}\\
                                  &\leq \left\langle Lz,Lz \right\rangle\left\langle Lw,Lw \right\rangle\\
                               &\leq \left\langle Lz,\mathscr{P}Lz \right\rangle\left\langle Lw,\mathscr{P}Lw \right\rangle \numberthis \label{eq:55.1}\\
                              & = \re \left\langle Lz,\mathscr{P}Lz \right\rangle\re\left\langle Lw,\mathscr{P}Lw \right\rangle\\
                              &=\left\langle z,L^{\tau}\mathscr{P}Lz \right\rangle\left\langle w,L^{\tau}\mathscr{P}Lw \right\rangle,
\end{align*}
where (\ref{eq:55.1}) follows from \ref{item:9}). Thus we proved (\ref{eq:53}). 
Further, $\phi = \phi_{A}$ because $\re \left\langle \cdot,S(\cdot) \right\rangle_{\mathcal{H}} = -\im \left\langle A(\cdot),\cdot \right\rangle $.
\end{proof}

\begin{lem}\label{second lemma}
  Let $S$ be a real linear symmetric and invertible  operator on $\mathcal{H}$. Suppose the function $f:\mathcal{H}\rightarrow \mathbb{R}$ defined by $f(z)= e^{-\frac{1}{2}\re\left\langle z,Sz \right\rangle}$ is the quantum characteristic function of a density operator $\rho$ i.e. $S \in \mathscr{K}_{\mathbf{G}}(\CH)$  then  on $(\mathcal{H}, \re \left\langle \cdot,\cdot \right\rangle)$,
\begin{enumerate}
\item \label{item:4} $S-I$ is Hilbert-Schmidt.
\item \label{item:5} $(\sqrt{S}J\sqrt{S})^{\tau}(\sqrt{S}J\sqrt{S})- I$ is in trace class.
\end{enumerate}
\end{lem}

\begin{proof}

\ref{item:4}.
We want to prove that $S-I$ is Hilbert-Schmidt on the real Hilbert space $\mathcal{H}$. 
Since $\hat{S}-i\hat{J} \geq 0$, by Lemma \ref{lem:positivity-condition-consequences} there exists a primary quasifree state $\phi$ on $CCR(\mathcal{H},\sigma)$ such that 
\begin{equation*}
\phi(W(z)) = e^{-\frac{1}{2}\re\left\langle z,Sz \right\rangle}.
\end{equation*}


Claim : $\phi_A$ and $\phi_{(-J)}$ are quasi equivalent, where $A=-JS$.

\begin{pf}[of Claim]
   Consider the state $\psi$ on $\B{\GH}$ given by $ X\mapsto \tr\rho X$. The  quasifree  state $\phi_A$ is the  restriction of $\psi$ to $\mathcal{A}: = CCR(\mathcal{H},\sigma) \hookrightarrow \B{\GH}$. Let $(H_\psi, \Pi_\psi, \Omega_\psi)$ be the GNS triple for $\B{\GH}$ with respect to $\psi$. Then $(H_\psi, \Pi_\psi|_{\mathcal{A}}, \Omega_\psi)$ is the GNS triple for $\mathcal{A}$ with respect to $\phi_A$.  To see this, only thing to be noticed is $\Omega_{\psi}$ is cyclic for $\Pi_{\psi}(\mathcal{A})$, which is clear since $\mathcal{A}$ is strongly dense in $\B{\GH}$.  We further note that the inclusion $\mathcal{A} \subseteq \B{\GH}$ is the GNS representation with respect to the vacuum state which is the quasi-free state given by $\phi_{-J}$. It can be seen that the association 
 $$W(x) \mapsto \Pi_{\psi}(W(x))$$
can be extended as an isomorphism between $\B{\GH} =\mathcal{A}'' $ and $\Pi_\psi(\B{\GH})$. Thus the claim is proved.
\end{pf}

  Since  $\phi_{(-J)}$ and $\phi_A$  are quasi equivalent, 
 by Theorem \ref{quasi-equivalence} 
  we get $A+J$ is Hilbert-Schmidt 
  on $\CH_{-J}$ which is same as $\CH$ with the real inner product $\re \left\langle \cdot,\cdot \right\rangle_{\mathcal{H}}$.   Since $A=-JS$, multiplying by $J$ on  $A+J$  we get $S-I$ is Hilbert-Schmidt on ($\CH, \re \left\langle \cdot,S(\cdot) \right\rangle_{\mathcal{H}}$). Now  the
  result is an easy consequence. 

\ref{item:5}. This follows due to the same reason as that of \ref{item:4}) because of Theorem \ref{quasi-equivalence} itself. We get $\sqrt{-A^2-I}$ is Hilbert-Schmidt on $(\CH, \re \left\langle \cdot,\cdot \right\rangle)$. This is same as  $-A^2-I$ is trace class on the same Hilbert space. Hence we have $-JSJS-I$ is trace class. By multiplying with $\sqrt{S}$ on the left and $(\sqrt{S})^{-1}$ on the right we see that $-\sqrt{S}JSJ\sqrt{S}- I$ is trace class. The result follows because $J^{\tau} = -J$


\end{proof}

   It may be noted at this point that the operator $\sqrt{S}J\sqrt{S}$ in \ref{item:5}) of the above theorem is the skew symmetric operator $B$ appearing in the proof of Williamson's normal form in \cite{BhJo18}. Proof of Williamson's normal form was obtained there by applying spectral theorem (as proved in  \cite{BhJo18}) to $B$, 
\[   \Gamma^{T} B \Gamma =  \begin{bmatrix}
                          0 & -P   \\
                          P & 0 
                         \end{bmatrix}  \]
where $\Gamma$ is an orthogonal transformation. $L$ was obtained by taking 
\[L = \begin{bmatrix}
                          P^{-1/2}  & 0  \\
                           0 &  P^{-1/2}
                         \end{bmatrix}\Gamma^{\tau}S^{1/2}.\] This choice of $L$ provides $S = L^{\tau}\mathscr{P}L$, where $\mathscr{P}_{0} =  \begin{bsmallmatrix}
                          P & 0   \\
                          0 & P 
                         \end{bsmallmatrix} $.
\begin{cor}
 Assume the hypothesis of the previous Lemma \ref{second lemma}. Then
\begin{enumerate}
\item \label{item:7}  If $S-I \geq 0$ then  $S-I$ is trace class on $(\mathcal{H}, \re \left\langle \cdot,\cdot \right\rangle)$
\item \label{item:6}If $S$ is complex linear then $S-I \geq 0$ and $S-I$ is trace class on $(\mathcal{H}, \re \left\langle \cdot,\cdot \right\rangle)$.
\end{enumerate}
\end{cor}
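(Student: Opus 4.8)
The plan is to obtain both statements from the trace-class information already secured in the theorem, namely that $(\sqrt{S}J\sqrt{S})^{\tau}(\sqrt{S}J\sqrt{S})-I$ is trace class (\ref{item:5}), by coupling it with a domination argument for positive operators. The only analytic input I would use is the elementary ideal fact that if $0\le X\le Y$ with $Y$ trace class, then $X$ is trace class. So the whole task reduces to exhibiting a positive trace-class operator that dominates $S-I$.

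For \ref{item:7} I would set $B=\sqrt{S}J\sqrt{S}$ and first rewrite, using $(\sqrt{S})^{\tau}=\sqrt{S}$ and $J^{\tau}=-J$,
\[ B^{\tau}B=-\sqrt{S}JSJ\sqrt{S}=\sqrt{S}\,(-JSJ)\,\sqrt{S}. \]
The crucial remark is that $-JSJ=J^{\tau}SJ$ is the orthogonal conjugate of $S$ by $J$, so that the hypothesis $S\ge I$ forces $-JSJ\ge I$ as well. Writing $-JSJ=I+(-JSJ-I)$ then gives $B^{\tau}B=S+\sqrt{S}(-JSJ-I)\sqrt{S}\ge S$, whence $0\le S-I\le B^{\tau}B-I$. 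Since $B^{\tau}B-I$ is trace class by \ref{item:5}, domination yields that $S-I$ is trace class, proving \ref{item:7}.

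For \ref{item:6} I would first upgrade $S\ge0$ (which is \ref{item:10}) to $S\ge I$ using the state inequality \eqref{eq:53}, specialised to $w=iz$. Complex-linearity gives $S(iz)=iSz$, so $\re\left\langle iz,S(iz)\right\rangle=\re\left\langle z,Sz\right\rangle$, while $\im\left\langle z,iz\right\rangle=\|z\|^{2}$; hence \eqref{eq:53} collapses to $\|z\|^{4}\le(\re\left\langle z,Sz\right\rangle)^{2}$, and since $\re\left\langle z,Sz\right\rangle\ge0$ this is exactly $\re\left\langle z,Sz\right\rangle\ge\|z\|^{2}$, i.e. $S\ge I$ and $S-I\ge0$. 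The trace-class claim is then immediate from \ref{item:7}; alternatively, since complex-linearity gives $-JSJ=S$, one has $B^{\tau}B=S^{2}$, and $S\ge I$ yields $0\le S-I\le S^{2}-I$ with $S^{2}-I$ trace class, so the same comparison applies directly.

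The genuinely delicate point — and the reason the extra hypotheses cannot be dropped — is that by \ref{item:4} the operator $S-I$ is a priori only Hilbert--Schmidt. The trace-class upgrade fails for a general $S$ precisely because the operator $L$ in the Williamson form $S=L^{\tau}\mathscr{P}L$ of Corollary \ref{cor:will-norm-form-complex-case} is symplectic but not orthogonal, so trace-class control of the symplectic data $\mathscr{P}-I$ (equivalently of $B^{\tau}B-I$) does not transfer to $S-I$. The positivity hypothesis $S\ge I$ (respectively complex-linearity, which forces $S\ge I$) is exactly what makes the comparison $S\le B^{\tau}B$ valid and thereby converts trace-class control of the symplectic eigenvalues into trace-class control of $S-I$ itself. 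I expect this comparison inequality to carry all the content; the remainder is routine bookkeeping with $J^{\tau}=-J$ and standard properties of the trace-class ideal.
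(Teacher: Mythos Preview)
Your proof is correct and uses the same core idea as the paper---dominating $S-I$ by a positive trace-class operator built from $B^{\tau}B-I$---but the execution differs in two places worth noting.

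For \ref{item:7}, the paper conjugates $B^{\tau}B-I$ by $(\sqrt{S})^{-1}$ to obtain that $(-J)SJ-S^{-1}$ is trace class, then sandwiches $0\le (-J)SJ-I\le (-J)SJ-S^{-1}$ (using $S^{-1}\le I$) and finally conjugates by $J$. Your route is shorter: you observe directly that $B^{\tau}B-S=\sqrt{S}\,(-JSJ-I)\,\sqrt{S}\ge 0$, giving $0\le S-I\le B^{\tau}B-I$ in one step without ever inverting $S$. Both arguments rest on the same positivity $-JSJ\ge I$, so the content is the same; yours just avoids a detour.

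For \ref{item:6}, the paper obtains $S\ge I$ via the quasifree-state machinery: from Lemma~\ref{lem:positivity-condition-consequences}\,\ref{item:8} and Proposition~\ref{primary-qfs-characterization} one has $-A^{2}-I\ge 0$ in the $\alpha$-inner product, and complex linearity turns this into $S^{2}\ge I$. Your argument bypasses this entirely by specialising the inequality \eqref{eq:53} to $w=iz$, which is more elementary and self-contained. Once $S\ge I$ is in hand, both proofs finish identically via $0\le S-I\le S^{2}-I$ (noting that complex linearity gives $B^{\tau}B=S^{2}$, so $S^{2}-I$ is trace class by \ref{item:5}, not merely Hilbert--Schmidt as the paper's text inadvertently says).
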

\begin{proof}
\ref{item:7}. We have  $-\sqrt{S}JSJ\sqrt{S}- I$ is trace class on ($\CH, \re \left\langle \cdot,\cdot \right\rangle $). Hence by multiplying with $(\sqrt{S})^{-1}$ on both sides $(-J)SJ- S^{-1}$ is trace class. Since $S-I\geq 0$,  $(-J)SJ - I \geq 0$ and $S^{-1}\leq I$ therefore we have $$0 \leq (-J)SJ- I \leq (-J)SJ- S^{-1}$$ and we conclude that $ (-J)SJ- I $ is trace class on ($\CH, \re \left\langle \cdot,\cdot \right\rangle $). Thus this part is proved by a conjugation with $J$.


\ref{item:6}. 
By \ref{item:8}) in Lemma \ref{lem:positivity-condition-consequences} 
we have 
 \begin{equation}
 \label{eq:22}
  -A^2-I \geq 0 \todo{\tiny check this because there is a bar involved in the definition of CCR. I find no problem but please pay more attention}
 \end{equation} with respect to the real inner product $\re \left\langle \cdot,S(\cdot) \right\rangle$.
We have $A^2= JSJS$ but since $S$ is complex linear it commutes with $J$, thus $A^2 =-S^2$ and we see that $S^{2}-I \geq 0$, consequently $S \geq I$ on  ($\CH, \re \left\langle \cdot,S(\cdot) \right\rangle_{\mathcal{H}}$). But this implies $S\geq I$ on ($\CH, \re \left\langle \cdot,\cdot \right\rangle$) since $S$ is positive. Since $S$ commutes with $J$, by \ref{item:5}) of Lemma \ref{second lemma} we see that $S^{2}-I$ is Hilbert-Schmidt on $(\mathcal{H}, \re \left\langle \cdot,\cdot \right\rangle)$. Now the claim follows because  $0 \leq S-I \leq S^2-I.$
\end{proof}

\begin{rmk}
  By \ref{shale-Gaussian}) of Example \ref{egs:Gaussian-states} we have seen that for a symplectic automorphism $L$, $L^{\tau}L$ is a covariance operator whenever $L^{\tau}L-I$ is Hilbert-Schmidt. Now by Lemma \ref{second lemma} we get that $L^{\tau}L$ satisfies the conditions \ref{item:4}, and \ref{item:5} there. This is true also for any such symplectic transformation. But  since $\sqrt{L^{\tau}L}$ is 
 symplectic  whenever $L$ is so, the condition \ref{item:5} is just void. Also it can be proved independently that for any symplectic transformation the positivity condition (\ref{eq:17}) on $L^{\tau}L $  is true. Therefore, $L^{\tau}L-I$ is Hilbert-Schmidt is the only non-trivial condition here.
\end{rmk}

\subsection{Sufficiency}
Now we proceed to prove that the conditions in Lemma \ref{first lemma} and \ref{second lemma}
are sufficient to ensure that $S$ is covariance operator of a quantum Gaussian state. This essentially involves an explicit construction of the quantum Gaussian state for positive complex linear operators and an application of Williamson normal form. We begin with some computations for this construction.
\begin{lem}\label{lem:analysis}
  If $s_j>0$ then $\sum\limits_{j=1}^{\infty}\left(\frac{e^{-s_j}}{1-e^{-sj}}\right) <\infty$ if and only if  $\sum\limits_{j=1}^{\infty}e^{-s_j}$ is convergent.
\end{lem}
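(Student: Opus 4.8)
The statement is an elementary comparison between the two series, so the plan is to set $a_j := e^{-s_j}$, observe that the hypothesis $s_j > 0$ gives $0 < a_j < 1$ (hence every term $\tfrac{a_j}{1-a_j}$ is a well-defined positive real), and then bound $\tfrac{a_j}{1-a_j}$ above and below by constant multiples of $a_j$ on the relevant range. First I would record the cheap inequality: since $1 - a_j < 1$ we always have
\[
a_j \;\le\; \frac{a_j}{1-a_j},
\]
which immediately yields the direction ``$\sum_j \tfrac{e^{-s_j}}{1-e^{-s_j}} < \infty \Rightarrow \sum_j e^{-s_j}$ converges'' by the comparison test.

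For the converse, suppose $\sum_j a_j$ converges. Then the general term tends to zero, so $e^{-s_j} \to 0$, and hence there exists $N$ with $a_j \le \tfrac{1}{2}$ for all $j \ge N$. For such $j$ we get $1 - a_j \ge \tfrac{1}{2}$, and therefore
\[
\frac{a_j}{1-a_j} \;\le\; 2\,a_j \qquad (j \ge N).
\]
Comparison then shows that the tail $\sum_{j \ge N} \tfrac{a_j}{1-a_j}$ converges; since the finitely many initial terms $\tfrac{a_j}{1-a_j}$ ($1 \le j < N$) are each finite, the whole series $\sum_j \tfrac{e^{-s_j}}{1-e^{-s_j}}$ converges. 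Combining the two implications gives the claimed equivalence.

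There is no genuine obstacle here: the only point that requires a moment's care is that $\tfrac{a_j}{1-a_j}$ can blow up when $a_j$ is close to $1$, i.e.\ when $s_j$ is close to $0$, so the uniform bound $\tfrac{a_j}{1-a_j} \le 2 a_j$ cannot hold for all $j$ without an assumption. The convergence of $\sum_j a_j$ is exactly what rules this out asymptotically (it forces $a_j \to 0$), and the finitely many bad terms are harmless, which is why the equivalence holds. In the write-up I would present the trivial direction in one line and spend the bulk of the argument justifying the threshold $N$ and the split into the finite head and the controlled tail.
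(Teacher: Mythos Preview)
Your argument is correct and is essentially the same as the paper's: both directions use the comparison test, the trivial direction via $a_j \le \tfrac{a_j}{1-a_j}$, and the converse via $a_j \to 0$ forcing $\tfrac{1}{1-a_j}$ to be bounded. The only cosmetic difference is that the paper notes $\tfrac{1}{1-e^{-s_j}} \to 1$ and hence is bounded by some $M$ for \emph{all} $j$ (a convergent sequence is bounded), whereas you split into a finite head and a tail with the explicit bound $2$; this is the same idea packaged slightly differently.
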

\begin{proof}
  Assume  $\sum\limits_{j=1}^{\infty}\left(\frac{e^{-s_j}}{1-e^{-sj}}\right) <\infty$. Since $\frac{e^{-s_j}}{1-e^{-sj}}> 0$ and $\frac{1}{1-e^{-s_j}}>1$, we have $0<\sum\limits_{j=1}^{\infty} e^{-s_j}<\sum\limits_{j=1}^{\infty}\left(\frac{e^{-s_j}}{1-e^{-sj}}\right)<\infty$. Now assume that $\sum\limits_{j=1}^{\infty}e^{-s_j}<\infty$. Then $s_j\rightarrow\infty$ and hence $\frac{1}{1-e^{-s_j}}\rightarrow 1$. This means we have $0< \frac{1}{1-e^{-s_j}}<M,\forall j$, for some $M>1$. Therefore, $\sum\limits_{j=1}^{\infty}\left(\frac{e^{-s_j}}{1-e^{-sj}}\right) <\infty$.
\end{proof}
Let $\mathcal{H} = H +iH$ and $\{e_1,e_2,e_3\cdots\}$ be an orthonormal basis for $H$. Note that $\{e_j\}$ is also a basis for $\mathcal{H}$ as a complex Hilbert space.  Let  $D =\diag(d_j)$ be a  bounded diagonal operator on $\mathcal{H}$, with $d_j > 1$, $j = 1,2,3,\dots$ in the given basis. Since $d_j>1$ there exists $s_j>0$ such that $d_j=\coth(\frac{s_j}{2})$ for all $j$. If we consider $D$ as a real linear operator on $\mathcal{H}$, then $D_0 =
\begin{bsmallmatrix}
  D &0\\
  0 & D
\end{bsmallmatrix} $ on $H\oplus H$.

\begin{lem}\label{lem:gamma-st-trace-class}
Let $D =\diag(d_j)$ be a  bounded diagonal operator on $\mathcal{H}$, with $d_j > 1$, $j = 1,2,3,\dots$  with respect to a basis. Write  $d_j=\coth(\frac{s_j}{2})$ for all $j$.  Then $D-I$ is trace class if and only if $\sum\limits_{j=1}^{\infty}e^{-s_j}$ is convergent.
\end{lem}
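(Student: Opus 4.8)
The plan is to reduce the trace-class condition on $D-I$ to the convergence of a scalar series and then invoke Lemma \ref{lem:analysis}. First I would observe that since each $d_j>1$, the diagonal operator $D-I$ is positive; hence it is trace class if and only if the sum of its diagonal entries $\sum_{j=1}^\infty (d_j-1)$ is finite. (Whether one computes the trace viewing $D$ as complex linear on $\mathcal{H}$, or as the real-linear operator $D_0 = \begin{bsmallmatrix} D & 0 \\ 0 & D \end{bsmallmatrix}$ on $H\oplus H$, only changes the value of the trace by a factor of two, so the \emph{finiteness} is unaffected.)

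Next I would make the substitution explicit. Using $\coth(x) = \tfrac{1+e^{-2x}}{1-e^{-2x}}$ with $x=s_j/2$ gives $d_j = \coth(s_j/2) = \tfrac{1+e^{-s_j}}{1-e^{-s_j}}$, so that
\begin{equation*}
d_j - 1 = \frac{1+e^{-s_j}}{1-e^{-s_j}} - 1 = \frac{2e^{-s_j}}{1-e^{-s_j}}.
\end{equation*}
Consequently $\sum_{j=1}^\infty (d_j-1) <\infty$ is equivalent to $\sum_{j=1}^\infty \tfrac{e^{-s_j}}{1-e^{-s_j}} < \infty$.

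Finally, Lemma \ref{lem:analysis} states precisely that $\sum_{j=1}^\infty \tfrac{e^{-s_j}}{1-e^{-s_j}} < \infty$ if and only if $\sum_{j=1}^\infty e^{-s_j}$ converges. Chaining the three equivalences together yields that $D-I$ is trace class if and only if $\sum_{j=1}^\infty e^{-s_j}$ is convergent, as claimed.

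There is essentially no serious obstacle here: the argument is a routine algebraic identity for $\coth$ chained with the preceding analytic lemma. The only points requiring a moment's care are confirming that positivity of $D-I$ allows one to identify trace-class with summability of the diagonal entries, and noting that the real-versus-complex viewpoint on $\mathcal{H}$ is immaterial to the finiteness of the trace.
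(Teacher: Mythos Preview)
Your proposal is correct and follows essentially the same approach as the paper: reduce trace-class to summability of the diagonal entries $d_j-1$, rewrite $d_j-1$ via the $\coth$ identity as (a constant times) $\frac{e^{-s_j}}{1-e^{-s_j}}$, and invoke Lemma~\ref{lem:analysis}. The only addition is your remark about the real-versus-complex trace, which is a harmless clarification not present in the paper.
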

\begin{proof} Observe, 
\begin{align*}
 D-I ~~\mbox{is in trace class} & \Leftrightarrow  \sum\limits_{j=1}^{\infty}(d_j-1) <\infty\\
                  & \Leftrightarrow \sum\limits_{j=1}^{\infty}(\coth(\frac{s_j}{2})-1) <\infty\\
                  & \Leftrightarrow \sum\limits_{j=1}^{\infty} \left( \frac{1+e^{-s_j}}{1-e^{-s_j}}-1 \right)\\
                  &\Leftrightarrow \sum\limits_{j=1}^{\infty}\left(\frac{e^{-s_j}}{1-e^{-sj}}\right) <\infty  \\
                  & \Leftrightarrow \sum\limits_{j=1}^{\infty}e^{-s_j} < \infty \numberthis \label{eq:1.1}
 \end{align*} 
where (\ref{eq:1.1}) follows from Lemma \ref{lem:analysis}.

\end{proof}
\begin{prop}\label{prop:Gaussian-state-exists}
   Let $D$ be as described in Lemma \ref{lem:gamma-st-trace-class} and $D-I$ is trace class.  Then there exists a state $\rho_D$ on $\GH$ such that $\hat{\rho}_D({x}) = e^{-\frac{1}{2}\left\langle x, Dx\right\rangle}$.
\end{prop}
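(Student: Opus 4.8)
The plan is to produce $\rho_D$ explicitly as the normalized second quantization of the contraction $C := \diag(e^{-s_j})$, that is, as an infinite product of one-mode thermal (Gibbs) states. Concretely, set $c := \prod_{j=1}^{\infty}(1-e^{-s_j})$ and define
\[
\rho_D := c\,\Gamma_s(C), \qquad C = \diag(e^{-s_j}).
\]
Since $D$ is bounded, $d_j = \coth(s_j/2)\le \|D\|$ forces $\inf_j s_j =: s_0 > 0$, so $C$ is a \emph{strict} contraction ($\|C\| = e^{-s_0}<1$) and $\Gamma_s(C)$ is a well-defined positive bounded operator by Definition \ref{defn:second-quant} and Proposition \ref{sec:expon-prop-gamm}. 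Writing $\mathcal{H} = \bigoplus_j \mathbb{C}e_j$ and using (\ref{eq:9.2}), one has $\Gamma_s(C) = \bigotimes_j \Gamma_s(e^{-s_j})$, where the $j$-th factor is $\Gamma_s(e^{-s_j}) = e^{-s_j N_j}$ with $N_j$ the number operator of the $j$-th mode. Thus $\rho_D = \bigotimes_j \rho_j$, where $\rho_j = (1-e^{-s_j})e^{-s_j N_j}$ is the single-mode Gibbs state attached to the inverse temperature $s_j$.

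Next I would check that $\rho_D$ is a state. Positivity is immediate, since $C\ge 0$ gives $\Gamma_s(C)\ge 0$ and $c>0$. For the trace, each factor satisfies $\tr \Gamma_s(e^{-s_j}) = \sum_{n\ge 0} e^{-s_j n} = (1-e^{-s_j})^{-1}$, so factoring over the modes yields $\tr \Gamma_s(C) = \prod_j (1-e^{-s_j})^{-1} = c^{-1}$, whence $\tr \rho_D = 1$. Convergence of $c$ to a \emph{strictly positive} number is equivalent to $\sum_j e^{-s_j}<\infty$, which is exactly the hypothesis on $D$ via Lemma \ref{lem:gamma-st-trace-class} and Lemma \ref{lem:analysis}; without it the product state degenerates (and no normal state with this characteristic function exists).

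It then remains to compute $\hat{\rho}_D$. For a single mode the thermal state has $\tr(\rho_j W(z)) = e^{-\frac12\coth(s_j/2)|z|^2} = e^{-\frac12 d_j |z|^2}$, which is the one-mode Gaussian computation of \cite{Par10} and is consistent with the vacuum case $d_j = 1$ in Example \ref{egs:Gaussian-states}. By Proposition \ref{prop:qcf-tensor-product-state} the quantum characteristic function factorizes over the modes, so for $x = \sum_j x_j e_j$,
\[
\hat{\rho}_D(x) = \prod_j \hat{\rho}_j(x_j) = \prod_j e^{-\frac12 d_j |x_j|^2} = e^{-\frac12 \sum_j d_j|x_j|^2} = e^{-\frac12 \left\langle x, Dx\right\rangle},
\]
which is the desired formula (recall $\left\langle x,Dx\right\rangle = \sum_j d_j|x_j|^2$ is real, as $D$ is diagonal with real entries in $\{e_j\}$).

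The main obstacle is analytic rather than algebraic: I must make rigorous that $\Gamma_s(C)$, defined through the strong limit in (\ref{eq:9.2}), is genuinely \emph{trace class} (not merely bounded) and that both the trace and the characteristic function survive the passage to the infinite-mode limit. I would handle this by truncating to the first $N$ modes, where $\rho_D^{(N)} := \left(\bigotimes_{j\le N}\rho_j\right)\otimes\left(\bigotimes_{j>N}\ketbra{\Phi_j}{\Phi_j}\right)$ is manifestly a state, and controlling the tail through $\left\langle \Phi_j, \rho_j\Phi_j\right\rangle = 1-e^{-s_j}$ together with $\sum_j e^{-s_j}<\infty$. This is precisely the compatibility condition ensuring that $\bigotimes_j \rho_j$ lives as a normal (trace-class) state on the incomplete tensor product $\Gamma_s(\mathcal{H})$ built from the stabilizing sequence $\{\Phi_j\}$ of Proposition \ref{sec:expon-prop-gamm}. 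The monotone (strong) convergence $\rho_D^{(N)}\to \rho_D$ then delivers $\tr\rho_D = 1$, and, by strong continuity of $z\mapsto W(z)$, the limiting characteristic function computed above.
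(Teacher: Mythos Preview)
Your approach is essentially the same as the paper's: both construct $\rho_D$ as the normalized second quantization $c\,\Gamma_s(C)$ of the strict contraction $C=\diag(e^{-s_j})$, factorize it as $\bigotimes_j (1-e^{-s_j})e^{-s_j a_j^\dagger a_j}$ via Proposition~\ref{sec:expon-prop-gamm}, and then compute the characteristic function mode by mode, invoking the one-mode thermal formula from \cite{Par10}. The only cosmetic difference is that the paper dispatches the trace-class issue by citing a standard exercise in \cite{Par12} (their Exercise~20.22(iv)) giving $\tr\Gamma_s(C)=\prod_j(1-e^{-s_j})^{-1}$ directly, whereas you outline the underlying truncation argument; both amount to the same thing.
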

\begin{proof}Consider the diagonal operator $T=\diag(e^{-s_j})$ with respect to the same basis in which $D$ is diagonal then the second quantization $\Gamma(T)$
is a trace class operator on the symmetric Fock space, $\Gamma_s(\mathcal{H})$ because of the following reasoning. $T$ is positive and by Lemma \ref{lem:gamma-st-trace-class} it is a trace class operator. Thus we have $s_j>0$ and $s_j\rightarrow\infty$. 
Therefore we get $\sup_j(e^{-s_j}) <1$. Now by Exercise 20.22 (iv) in \cite{Par12} , the second quantization (\ref{eq:73}) $\Gamma(T)$ is trace class with 
  \begin{equation}
\label{eq:3}
\tr\Gamma(T) = \Pi_{j=1}^{\infty}(1-e^{-s_j})^{-1}
\end{equation}
 Define $\rho_D =\Pi_{j=1}^{\infty}(1-e^{-s_j})\Gamma(T)$, then $\rho$ is a density operator on $\Gamma_s(\mathcal{H})$. We have $\CH = \oplus_j \mathbb{C}e_j$. Since $\Gamma(e^{-s_j}) =  e^{-s_ja_j^{\dagger}a_j}$ on $\Gamma_s(\mathbb{C}e_j)$, 
 $\rho_D = \Pi_{j=1}^{\infty}(1-e^{-s_j})\Gamma(\oplus_je^{-s_j})= \otimes_j\rho_j$, where $\rho_j = (1-e^{-s_j})e^{-s_{j}a_j^{\dagger}a_j}$.  Let $x = \oplus_{j} x_je_j$ then
\begin{align*}
\hat{\rho}_D(x) &= \tr \rho W(x)\\
              &= \tr \left(\Pi_{j=1}^{\infty}(1-e^{-s_j}) \Gamma(\oplus_je^{-s_{j}})W(\oplus_jx_j)\right)\\
              &= \tr \left(\otimes_j(1-e^{-s_j})\Gamma(e^{-s_{j}})W(x_j)\right)\\
              & =  \tr \left(\otimes_j(1-e^{-s_j})e^{-s_{j}a_j^{\dagger}a_{j}}W(x_j)\right)\\
              &= \Pi_{j=1}^{\infty} \tr \left((1-e^{-s_j})e^{-s_ja_j^{\dagger}a_{j}}W(x_j)  \right)\\
              &= \Pi_{j=1}^{\infty} e^{-\left\langle x_j,  \frac{1}{2}\coth(\frac{s_j}{2})x_j \right\rangle} \numberthis \label{eq:5}\\
              &= e^{-\frac{1}{2}\left\langle x, Dx \right\rangle}
 \end{align*} 
where (\ref{eq:5}) follows from Example \ref{eg:fundamental-mixed-gaussian} of Gaussian states.
\end{proof}
 

  

Recall from Example \ref{eg:coherent-state} that the vacuum state $\ketbra{e(0)}{e(0)}$ on $\Gamma_s(\CH)$ is a Gaussian state with covariance operator $I$.
\begin{prop}\label{thm:Gaussian-state-exists} If $\mathscr{P}$ is any complex linear operator on $\mathcal{H}$ such that $\mathscr{P}-I$ is positive and trace class then there exists a state $\rho$ on $\GH$ such that
the  quantum characteristic function  $\hat{\rho}$ associated with $\rho$ is given by
\begin{displaymath}
  \hat{\rho}({x}) = e^{-\frac{1}{2}\left\langle x, \mathscr{P}x\right\rangle}
\end{displaymath}
for every $x\in \mathcal{H}$.
\end{prop}
\begin{proof}Let $U$ be a unitary operator such that $\mathscr{P} = U^{*}DU$. Such a $U$ exists by applying spectral theorem to the compact positive operator $\mathscr{P}-I$. Since $\mathscr{P} \geq I$ assume without loss of generality that $\CH= \CH_1\oplus\CH_2$ is such that $D =
  \begin{bsmallmatrix}
    D_1 &0\\0 & I
  \end{bsmallmatrix}
$, where we separated all the diagonal entries of $D$ which are equal to one and not equal to one.  Then $D_1$ satisfies the assumptions in Proposition \ref{prop:Gaussian-state-exists} and $\rho_{D_1}$ exists  as a Gaussian state on $\Gamma_s(\CH_{1})$. Let $\rho_0$ denote the vacuum state $\ketbra{e(0)}{e(0)}$  on $\Gamma_s(\CH_{2})$, which is Gaussian by Example \ref{eg:coherent-state}. Then 
$\rho_{D_{1}} \otimes \rho_0 = \rho_g(0, D)$. Define $\rho = \Gamma_s(U^{*})\rho_{D_{1}} \otimes \rho_0\Gamma_s(U)$  and the result follows from Proposition \ref{prop:qcf-of-weyl-conjugation}.
\end{proof}

For a proof of the following Lemma see the discussion in the beginning of Section 2 in \cite{BhSr05}.
\begin{lem}
  \label{lem:inverse-sqrt-HS}
Let $C-I$ is Hilbert-Schmidt (trace class) then 
\begin{enumerate}
\item  If $C \geq 0$ then $\sqrt{C}-I$ is  Hilbert-Schmidt (trace class).
\item  If $C$ is invertible then $C^{-1}-I$ is  Hilbert-Schmidt (trace class).
\end{enumerate}
\end{lem}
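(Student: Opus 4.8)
The plan is to reduce both statements to the single fact that the Hilbert--Schmidt operators and the trace class operators each form a two-sided ideal in $\B{\CH}$ (indeed in $\mathscr{B}_{\mathbb{R}}(\CH)$), so that a product of a Schatten-class operator with a bounded operator stays in the same class. First I would record that in either case $C-I$ is in particular compact, hence $C=I+(C-I)$ is bounded; combined with the standing hypotheses ($C\ge 0$ in (1), $C$ boundedly invertible in (2)) this is what guarantees that the auxiliary operators introduced below are genuinely bounded, which is exactly what the ideal property requires.

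For (1), since $C\ge 0$ its positive square root $\sqrt{C}$ is a bounded operator that is a function of $C$, so it commutes with $C$ and with $I$; this makes the factorization
\[
  C-I=(\sqrt{C})^{2}-I=(\sqrt{C}-I)(\sqrt{C}+I)
\]
legitimate. Because $\sqrt{C}\ge 0$ we have $\sqrt{C}+I\ge I$, so $\sqrt{C}+I$ is invertible with $\|(\sqrt{C}+I)^{-1}\|\le 1$. Hence
\[
  \sqrt{C}-I=(C-I)(\sqrt{C}+I)^{-1},
\]
the product of the Schatten-class operator $C-I$ with a bounded operator, and the ideal property puts $\sqrt{C}-I$ in the same class.

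For (2), I would invoke the elementary identity
\[
  C^{-1}-I=C^{-1}(I-C)=-\,C^{-1}(C-I).
\]
As $C$ is boundedly invertible, $C^{-1}\in\B{\CH}$, so once more $C^{-1}-I$ is a bounded operator times the Schatten-class operator $C-I$, and the ideal property completes the argument.

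There is essentially no serious obstacle here: the only points needing care are checking that the relevant operators are actually bounded—so that the ideal property applies—and that the commutation underlying the factorization in (1) holds, both of which are immediate since $\sqrt{C}$ is a bounded function of $C$. The parenthetical ``(trace class)'' case is handled verbatim by the same reasoning, the trace class being a two-sided ideal as well.
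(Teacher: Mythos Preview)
Your argument is correct. In the paper this lemma is stated without proof, so there is nothing to compare against; the factorizations $\sqrt{C}-I=(C-I)(\sqrt{C}+I)^{-1}$ and $C^{-1}-I=-C^{-1}(C-I)$ together with the two-sided ideal property of the Schatten classes are exactly the intended routine verification.
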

\begin{lem}\label{HS-TC-conditions-consequences}
 Let $S$ be a real linear, positive and invertible operator on $\mathcal{H}$. Then $L$ and $\mathscr{P}$ as in Corollary \ref{cor:will-norm-form-complex-case} can be chosen such that 
\begin{enumerate}
\item\label{item:11} If $S-I$ is Hilbert-Schmidt then $L^{\tau}L -I$ is Hilbert Schmidt, \emph{i.e} $L \in  \mathscr{S}(\CH)$.
   \item \label{item:12}If $(\sqrt{S}J\sqrt{S})^{\tau}(\sqrt{S}J\sqrt{S})- I$ is trace class then $\mathscr{P} -I$ is a trace class operator on $\mathcal{H}$.
\end{enumerate}  
\end{lem}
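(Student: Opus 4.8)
The plan is to work with the specific $L$ and $\mathscr{P}$ produced by the proof of Williamson's normal form recorded in the Note following Theorem \ref{main-thm-one-way}, rather than an arbitrary pair realizing Corollary \ref{cor:will-norm-form-complex-case}. Thus I set $B = \sqrt{S}J\sqrt{S}$, take the orthogonal $\Gamma$ with $\Gamma^{\tau}B\Gamma = \begin{bsmallmatrix} 0 & -P \\ P & 0\end{bsmallmatrix}$, and use $L = \begin{bsmallmatrix} P^{-1/2} & 0 \\ 0 & P^{-1/2}\end{bsmallmatrix}\Gamma^{\tau}S^{1/2} = \mathscr{P}_0^{-1/2}\Gamma^{\tau}S^{1/2}$, so that $S = L^{\tau}\mathscr{P}L$ with $\mathscr{P}_0 = \begin{bsmallmatrix} P & 0 \\ 0 & P\end{bsmallmatrix}$. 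Two algebraic observations drive everything. First, since $\sqrt{S}$ is symmetric and $J^{\tau}=-J$, the operator $B$ is skew, so $B^{\tau}B = -B^{2}$ and $\Gamma^{\tau}B^{\tau}B\,\Gamma = \begin{bsmallmatrix} P^{2} & 0 \\ 0 & P^{2}\end{bsmallmatrix}$; hence $B^{\tau}B = \Gamma\begin{bsmallmatrix} P^{2} & 0 \\ 0 & P^{2}\end{bsmallmatrix}\Gamma^{\tau}$ and $\sqrt{B^{\tau}B} = \Gamma\,\mathscr{P}_0\,\Gamma^{\tau}$. Second, a short computation gives $L^{\tau}L = S^{1/2}\Gamma\,\mathscr{P}_0^{-1}\,\Gamma^{\tau}S^{1/2} = S^{1/2}(\sqrt{B^{\tau}B})^{-1}S^{1/2}$. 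Throughout I use that the Hilbert--Schmidt and trace class operators form ideals (closed under addition and under multiplication by bounded operators, and invariant under conjugation by the orthogonal $\Gamma$), together with Lemma \ref{lem:inverse-sqrt-HS} for passing to square roots and inverses.

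For \ref{item:12}, the hypothesis is that $(\sqrt{S}J\sqrt{S})^{\tau}(\sqrt{S}J\sqrt{S}) - I = B^{\tau}B - I$ is trace class. Conjugating by the orthogonal $\Gamma$ preserves trace class, so by the first observation $\begin{bsmallmatrix} P^{2} & 0 \\ 0 & P^{2}\end{bsmallmatrix} - I$, and hence $P^{2}-I$, is trace class on $K$. Applying Lemma \ref{lem:inverse-sqrt-HS} to $C = P^{2} \geq 0$ gives that $P - I = \sqrt{C}-I$ is trace class, whence $\mathscr{P}-I$, whose block form is $\begin{bsmallmatrix} P-I & 0 \\ 0 & P-I\end{bsmallmatrix}$, is trace class on $\mathcal{K}$.

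For \ref{item:11}, I first note that $S-I$ Hilbert--Schmidt and $S\geq 0$ give, by Lemma \ref{lem:inverse-sqrt-HS}, that $S^{1/2}-I$ is Hilbert--Schmidt. The key step is to show $B^{\tau}B-I$ is Hilbert--Schmidt: writing $T = S-I$ and using $J^{2}=-I$, a direct expansion gives $B^{\tau}B = -\sqrt{S}JSJ\sqrt{S} = S - \sqrt{S}J\,T\,J\sqrt{S}$, so $B^{\tau}B - I = T - \sqrt{S}JTJ\sqrt{S}$, which is Hilbert--Schmidt since $T$ is and the ideal property absorbs the bounded factors $\sqrt{S}$ and $J$. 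As $B$ is invertible ($S$ invertible, $J$ orthogonal), Lemma \ref{lem:inverse-sqrt-HS} then yields that $\sqrt{B^{\tau}B}-I$ and $(\sqrt{B^{\tau}B})^{-1}-I$ are Hilbert--Schmidt. Finally, substituting $S^{1/2} = I + E$ and $(\sqrt{B^{\tau}B})^{-1} = I + G$ with $E, G$ Hilbert--Schmidt into $L^{\tau}L = S^{1/2}(\sqrt{B^{\tau}B})^{-1}S^{1/2}$ and expanding, every term other than $I$ carries a factor $E$ or $G$, so $L^{\tau}L - I$ is Hilbert--Schmidt, i.e. $L \in \mathscr{S}(\CH,\mathcal{K})$.

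The main obstacle I anticipate is the bookkeeping in identifying $\mathscr{P}$ and $L^{\tau}L$ with expressions in $B = \sqrt{S}J\sqrt{S}$ through the specific diagonalizing $\Gamma$; once the identities $\sqrt{B^{\tau}B} = \Gamma\mathscr{P}_0\Gamma^{\tau}$ and $L^{\tau}L = S^{1/2}(\sqrt{B^{\tau}B})^{-1}S^{1/2}$ are in hand, both claims reduce to routine applications of the ideal property together with Lemma \ref{lem:inverse-sqrt-HS}. A secondary point requiring care is that Williamson's form is not unique, so it is essential to commit to the canonical $L$ and $\mathscr{P}$ from the Note for these trace-ideal computations to be meaningful.
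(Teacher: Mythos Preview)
Your argument is correct and follows essentially the same path as the paper: you commit to the explicit $L=\mathscr{P}^{-1/2}\Gamma^{\tau}S^{1/2}$ from the Williamson construction, derive $L^{\tau}L=S^{1/2}(\sqrt{B^{\tau}B})^{-1}S^{1/2}$, show $B^{\tau}B-I$ lies in the relevant ideal, and invoke Lemma~\ref{lem:inverse-sqrt-HS}. The one noteworthy deviation is in \ref{item:12}: the paper passes through $L^{\tau}L-S=L^{\tau}(I-\mathscr{P})L$ being trace class and then strips off the invertible $L$, whereas you conjugate $B^{\tau}B-I$ by the orthogonal $\Gamma$ to read off $P^{2}-I$ trace class directly and then take a square root---your route is a bit more transparent but both are equally short.
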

\begin{proof}
 \ref{item:11}. It can be seen from the proof of Williamson's normal form in \cite{BhJo18} that $M_0$ in Theorem \ref{sec:will-norm-form} can be chosen as $M = \mathscr{P}^{-1/2}\Gamma^{\tau}S^{1/2}$, where $\Gamma_0 \colon K\oplus K \rightarrow H \oplus H$ is an orthogonal transformation such that the skew symmetric operator
\begin{equation}
\label{eq:26}
B_0:=S_0^{1/2}J_0S_0^{1/2}= \Gamma_0
\begin{bsmallmatrix}
  0 & -P\\
  P & 0
\end{bsmallmatrix} \Gamma_0^{\tau}.
\end{equation} and $\mathscr{P}_{0} =
\begin{bsmallmatrix}
  P & 0 \\
  0 & P
\end{bsmallmatrix}
$ (recall the notation we employed at the beginning of \ref{sec:sympl-autom} to distinguish between the two avatar's of a real linear operator defined on a complexified Hilbert space). Since $L = UM$ in the proof of Corollary \ref{cor:will-norm-form-complex-case}, 
\begin{equation}
\label{eq:27}
L^{\tau}L = S^{1/2}\Gamma\mathscr{P}^{-1}\Gamma^{\tau}S^{1/2}.
\end{equation}
But 
\begin{equation}
\label{eq:28}\begin{bsmallmatrix}
  0 & P\\
  -P & 0
\end{bsmallmatrix}
\begin{bsmallmatrix}
  0 & -P\\
  P & 0
\end{bsmallmatrix} = \begin{bsmallmatrix}
  P^2 & 0\\
  0 & P^2
\end{bsmallmatrix} ,
\end{equation}
therefore if  we write $\mathsf{P}_0 =\begin{bsmallmatrix}
  0 & -P\\
  P & 0
\end{bsmallmatrix} $, and recall the relationship between the operator $\mathsf{P}_0$ on the real Hilbert space and $\mathsf{P}$ on the complexified Hilbert space (described at the beginning of \ref{sec:sympl-autom}), we see that 
\begin{equation}
\label{eq:29}
\mathscr{P}^{-1} = (\sqrt{\mathsf{P}^{\tau}\mathsf{P}})^{-1}
\end{equation}
Since $\Gamma$ is orthogonal, by (\ref{eq:26}) and (\ref{eq:29}) we get $(\sqrt{B^{\tau}B})^{-1} = \Gamma \mathscr{P}^{-1} \Gamma^{\tau} $. Now by (\ref{eq:27}) we get 
\begin{equation}
\label{eq:30}
L^{\tau}L = S^{1/2}(\sqrt{B^{\tau}B})^{-1}S^{1/2}.
\end{equation}
We have $S-I$ is Hilbert-Schmidt. Therefore, so is $J^{\tau}SJ- I$. Hence $S^{1/2}J^{\tau}SJS^{1/2} - S$ is Hilbert-Schmidt. By adding and subtracting $I$ and using the fact the $S- I$ is Hilbert-Schmidt we get $S^{1/2}J^{\tau}SJS^{1/2} - I$ is also so. In other words, we just got $B^{\tau}B- I $ is Hilbert-Schmidt. Now by Lemma \ref{lem:inverse-sqrt-HS} we get $(\sqrt{B^{\tau}B})^{-1}-I$ is Hilbert-Schmidt. This along with (\ref{eq:30}) finally allows us to conclude that $L^{\tau}L - I$ is Hilbert-Schmidt.

\ref{item:12}. By keeping the notations above and using Lemma \ref{lem:inverse-sqrt-HS}, we have $(\sqrt{B^{\tau}B})^{-1}-I$ is trace class and thus $S^{1/2}(\sqrt{B^{\tau}B})^{-1}S^{1/2}-S = L^{\tau}L -S$ is trace class. Since $S = L^{\tau}\mathscr{P}L$ we get $L^{\tau}(\mathscr{P}-I)L$ is trace class. Since $L$ is invertible we see that $\mathscr{P}-I$ is trace class.
\end{proof}

Here is the theorem which characterizes all covariance operators of quantum Gaussian states. 
It is essentially the statement that there exists a quantum Gaussian state $\rho$ with covariance matrix $S$ on the Boson Fock space of $\CH$  if and only if $\hat{\rho}_{|_{CCR(\CH, \sigma)}}$ is a  primary quasifree state $\phi_A$  quasiequivalent to the vacuum state $\phi_{-J}$ on $CCR(\CH, \sigma)$, where $A=-JS$.

\begin{thm}\label{thm:main-thm}
  Let $S$ be a real  linear, bounded, symmetric and invertible  operator on $\mathcal{H}$. Then $S$ is the covariance operator of a quantum Gaussian state if and only if the following holds
\begin{enumerate}
\item\label{item:16}
$\hat{S}-i \hat{J}\geq 0$  on $\hat{\CH}$.
\item\label{item:17}$S-I$ is Hilbert-Schmidt on $(\mathcal{H}, \re \left\langle \cdot,\cdot \right\rangle)$.
\item\label{item:18} $(\sqrt{S}J\sqrt{S})^{\tau}(\sqrt{S}J\sqrt{S})- I$ is trace class on $(\CH, \re \left\langle \cdot,\cdot \right\rangle)$. 
\end{enumerate}
\end{thm}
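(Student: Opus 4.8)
The plan is to split the biconditional into its two implications and recognize each as one of the two theorems already established, after first reducing the general covariance operator (with arbitrary mean vector) to the mean-zero situation those theorems address.

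First I would dispose of the mean vector. By Definition \ref{defn:quant-gauss-stat-QFT}, $S$ is the covariance operator of a quantum Gaussian state precisely when there exists some $w \in \CH$ with $\rho_g(w,S) \in \mathbf{G}$. By Proposition \ref{prop:weyl-conjugation-Gaussian-state}, conjugation by the Weyl unitary $W(-\tfrac{i}{2}w)$ carries $\rho_g(w,S)$ to $\rho_g(0,S)$; since a Weyl operator is unitary and unitary conjugation preserves positivity and trace, $\rho_g(w,S)$ is a density matrix if and only if $\rho_g(0,S)$ is. Hence $S$ is the covariance operator of some Gaussian state if and only if $S \in \mathscr{K}(\CH)$, i.e.\ if and only if $f(z)=e^{-\frac12\re\left\langle z,Sz \right\rangle}$ is the quantum characteristic function of a density matrix. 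This removes $w$ from the discussion entirely and matches the hypothesis format of Theorems \ref{main-thm-one-way} and \ref{main-thm-other-way}.

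For the ``only if'' implication I would simply invoke Theorem \ref{main-thm-one-way}: assuming $S \in \mathscr{K}(\CH)$, that theorem delivers exactly conditions \ref{item:16}, \ref{item:17} and \ref{item:18}. For the ``if'' implication I would invoke Theorem \ref{main-thm-other-way}: given a real linear invertible $S$ satisfying the three conditions, that theorem produces a density matrix $\rho$ with $\hat{\rho}(z)=e^{-\frac12\re\left\langle z,Sz \right\rangle}$, so $S \in \mathscr{K}(\CH)$. The only point requiring a moment's care is the compatibility of the symmetry hypotheses: the present theorem posits $S$ symmetric, and this is consistent because condition \ref{item:16} already forces symmetry (as recorded in the proof of part \ref{item:10} of Lemma \ref{lem:positivity-condition-consequences}, $\hat S - i\hat J \ge 0$ implies $\hat S$, hence $S$, is symmetric). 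Since both constituent theorems are already in hand, there is no substantial new work and no genuine obstacle: the content of this theorem is the packaging of necessity and sufficiency into a single clean characterization of the set $\mathscr{K}(\CH)$ of Gaussian covariance operators.
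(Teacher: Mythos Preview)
Your proposal is correct and matches the paper's own proof, which is the single sentence ``Combining Theorem \ref{main-thm-one-way} and Theorem \ref{main-thm-other-way} we have\ldots''. Your added paragraph reducing the arbitrary-mean case to the mean-zero case via Proposition \ref{prop:weyl-conjugation-Gaussian-state} is a reasonable elaboration that the paper leaves implicit in its identification of $\mathscr{K}(\CH)$ with the set of $S$ for which $z\mapsto e^{-\frac12\re\langle z,Sz\rangle}$ is a quantum characteristic function.
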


\begin{proof}
In the last Section in Lemma \ref{first lemma} and \ref{second lemma} we have seen the necessity of conditions 1-3 for a quantum Gaussian state. Now we prove sufficiency. So  assume 1-3. As  $\hat{S}-i\hat{J} \geq 0$, $S\geq 0$ and by assumption $S$ is invertible and hence it has Williamson's normal form. Thus there exists a symplectic automorphism $ L$ on $\mathcal{H}$ such that  $S = L^{\tau}\mathscr{P}L$ (Corollary \ref{cor:will-norm-form-complex-case}).
By Lemma \ref{HS-TC-conditions-consequences}$, \mathscr{P} - I$ is trace class and $L^{\tau}L - I$ is Hilbert-Schmidt. Now by Theorem \ref{thm:generalized-shale-unitary} there exists a unitary operator $\Gamma_s(L)$ on $\Gamma_s(\mathcal{H})$ such that 
\begin{equation}
\label{eq:31}
\Gamma_s(L)W(u)\Gamma_s(L)^{*} = W(Lu)
\end{equation}

Since $\mathscr{P}- I$ is trace class and positive, by Proposition \ref{thm:Gaussian-state-exists}  there exists a density operator $\rho_{\mathscr{P}}$ such that $\hat{\rho}_{\mathscr{P}}(y) = e^{-\frac{1}{2} \left\langle y, \mathscr{P}y \right\rangle}$ for every $y \in \mathcal{H}$. Define 
\begin{equation}
\label{eq:32}
\rho = \Gamma_s(L)^{*}\rho_{\mathscr{P}}\Gamma_s(L)
\end{equation}
\begin{claim}
$\hat{\rho}(z) = e^{-\frac{1}{2}\re\left\langle z,Sz \right\rangle}$ for every $z\in \mathcal{H}$.
\end{claim}
\begin{pf}[of Claim] By Proposition \ref{prop:qcf-of-weyl-conjugation} we have
 \begin{align*}
\hat{\rho}(z)
              & = \hat{\rho_{\mathscr{P}}}(Lz)\\ 
              & = e^{-\frac{1}{2} \left\langle Lz, \mathscr{P}Lz \right\rangle}\\
              & = e^{-\frac{1}{2} \re \left\langle Lz, \mathscr{P}Lz \right\rangle}\\
              & = e^{-\frac{1}{2} \re \left\langle z, L^{\tau}\mathscr{P}Lz \right\rangle}\\
              & = e^{-\frac{1}{2} \re \left\langle z, Sz \right\rangle .}
  \end{align*} \end{pf}
\end{proof}

It easily seen that if $S$ is complex linear then the third condition of previous theorem is redundant. So we have the following.
\begin{cor} \label{cor:spl-case-C-linear}
    Let $S$ be a complex linear, self-adjoint and invertible  operator on $\mathcal{H}$. Then $S$ is the covariance operator of a quantum Gaussian state on $\GH$ if and only if $\hat{S}-i\hat{J} \geq 0$ and $S-I$ is trace class.
\end{cor}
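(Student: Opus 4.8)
The plan is to invoke Theorem \ref{thm:main-thm} directly and to show that, when $S$ is complex linear, its three hypotheses \ref{item:16}--\ref{item:18} collapse to the two conditions in the statement. First observe that for a complex linear operator, self-adjointness with respect to $\left\langle \cdot,\cdot \right\rangle$ coincides with symmetry with respect to $\re\left\langle \cdot,\cdot \right\rangle$: the real transpose $S^{\tau}$ agrees with the complex adjoint $S^{*}$ (take real parts in $\left\langle x, Sy \right\rangle = \left\langle S^{*}x, y \right\rangle$, and recover the imaginary parts by replacing $x$ with $ix$ and using complex linearity). Hence $S$ fits the hypotheses of Theorem \ref{thm:main-thm}, and since condition \ref{item:16}, $\hat{S}-i\hat{J}\geq 0$, is common to both formulations, it suffices to verify that, granting \ref{item:16}, conditions \ref{item:17} and \ref{item:18} together are equivalent to ``$S-I$ is trace class''.

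The key simplification is the evaluation of condition \ref{item:18}. By Lemma \ref{lem:positivity-condition-consequences}(\ref{item:10}), condition \ref{item:16} forces $S\geq 0$, so $\sqrt{S}$ exists; being a limit of polynomials in $S$, it commutes with $J$ whenever $S$ does, and $S$ commutes with $J$ precisely because it is complex linear. Therefore $\sqrt{S}J\sqrt{S}=SJ$, and using $J^{\tau}=-J$, $J^{2}=-I$, $S^{\tau}=S$, and the commuting of $S^{2}$ with $J$, one computes
\[
(\sqrt{S}J\sqrt{S})^{\tau}(\sqrt{S}J\sqrt{S}) = (SJ)^{\tau}(SJ) = -JS^{2}J = S^{2}.
\]
Thus condition \ref{item:18} reads precisely ``$S^{2}-I$ is trace class''.

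It remains to compare ``$S^{2}-I$ trace class'' with ``$S-I$ trace class''. Since $S\geq 0$, the operator $S+I\geq I$ is invertible with bounded inverse, so from the factorizations $S^{2}-I=(S-I)(S+I)$ and $S-I=(S^{2}-I)(S+I)^{-1}$ one sees, using that trace class is an ideal, that $S^{2}-I$ is trace class if and only if $S-I$ is trace class. Moreover every trace class operator is Hilbert--Schmidt, so ``$S-I$ trace class'' already delivers condition \ref{item:17}. Hence, granting \ref{item:16}, conditions \ref{item:17} and \ref{item:18} hold simultaneously exactly when $S-I$ is trace class, and the corollary follows from Theorem \ref{thm:main-thm}. (For the forward implication one may alternatively note that item \ref{item:6} of the corollary following Theorem \ref{main-thm-one-way} already yields $S\geq I$ and $S-I$ trace class.) The only mildly delicate points are the operator-theoretic bookkeeping with $J$ and $\sqrt{S}$ in the displayed identity and the transfer of the trace class property between $S-I$ and $S^{2}-I$ via boundedness of $(S+I)^{-1}$; both become routine once $S\geq 0$ has been secured from \ref{item:16}.
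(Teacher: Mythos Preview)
Your proof is correct and follows essentially the same route the paper takes implicitly: the paper gives no separate proof for this corollary, but the reduction is the one underlying item~\ref{item:6} of the corollary after Theorem~\ref{main-thm-one-way}, namely that complex linearity of $S$ forces $SJ=JS$, which collapses condition~\ref{item:18} to a statement about $S^{2}-I$. Your direct verification that $(\sqrt{S}J\sqrt{S})^{\tau}(\sqrt{S}J\sqrt{S})=S^{2}$ and your use of the factorization $S^{2}-I=(S-I)(S+I)$ with $(S+I)^{-1}$ bounded is clean and slightly more self-contained than invoking item~\ref{item:6}, but the underlying idea is the same.
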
 
\begin{cor}\label{cor:S-gereater-I}
  Let $S\geq I$ be real linear then  $S$ is the covariance operator of a quantum Gaussian state on $\GH$ if and only if  $S-I$ is trace class
\end{cor}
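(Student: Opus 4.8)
The plan is to reduce the claimed equivalence to the already-established Theorem \ref{thm:main-thm}, so the only work is to show that under the standing hypothesis $S\geq I$ (real linear), conditions \ref{item:16} and \ref{item:18} of that theorem become automatic once we assume $S-I$ is trace class. The forward direction is immediate: if $S$ is the covariance operator of a Gaussian state then Theorem \ref{main-thm-one-way}\ref{item:4} already gives that $S-I$ is Hilbert--Schmidt, and we must upgrade this to trace class. This is exactly the situation covered by \ref{item:7} of the corollary to Theorem \ref{main-thm-one-way}: when $S-I\geq 0$, the Hilbert--Schmidt conclusion sharpens to trace class. So the forward implication costs nothing new.

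For the reverse direction, I would assume $S\geq I$ and $S-I$ trace class, and verify the three hypotheses of Theorem \ref{thm:main-thm}. First, $S-I$ trace class certainly implies $S-I$ Hilbert--Schmidt, giving \ref{item:17}. Second, for \ref{item:16}, i.e. $\hat S - i\hat J\geq 0$: by Theorem \ref{thm:equivalence-positivity-cond} this is equivalent to the scalar inequality $\im\left\langle z,w\right\rangle^2\leq \re\left\langle z,Sz\right\rangle\re\left\langle w,Sw\right\rangle$, and since $S\geq I$ we have $\re\left\langle z,Sz\right\rangle\geq\|z\|^2$, so the right-hand side dominates $\|z\|^2\|w\|^2\geq|\left\langle z,w\right\rangle|^2\geq\im\left\langle z,w\right\rangle^2$ by Cauchy--Schwarz. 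Thus \ref{item:16} holds automatically from $S\geq I$ alone.

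The remaining point, and the one I expect to be the main obstacle, is condition \ref{item:18}: that $(\sqrt{S}J\sqrt{S})^{\tau}(\sqrt{S}J\sqrt{S})-I$ be trace class. The corollary to Theorem \ref{main-thm-one-way} records that this is equivalent to $-JSJS-I$ being trace class. The plan is to write $-JSJS-I = (-JSJ)S - I$ and exploit that $-JSJ$ is again a positive operator with $-JSJ\geq I$ (since $J$ is orthogonal with $J^{\tau}=-J$ and $S\geq I$), together with $S\geq I$. Writing $S=I+T$ and $-JSJ=I+T'$ with $T=S-I$ and $T'=-JTJ$ both trace class and positive, one expands $(-JSJ)S=(I+T')(I+T)=I+T'+T+T'T$; here $T'+T$ is trace class and $T'T$ is a product of two trace class (hence Hilbert--Schmidt) operators, so also trace class. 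Therefore $-JSJS-I$ is trace class, giving \ref{item:18}. I would take care that $J$ being an isometry for $\re\left\langle\cdot,\cdot\right\rangle$ guarantees $T'=-JTJ$ inherits the trace-class property with the same trace-norm, which is the one routine fact to check. With all three hypotheses verified, Theorem \ref{thm:main-thm} yields that $S$ is a Gaussian covariance operator, completing the proof.
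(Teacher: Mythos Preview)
Your argument is correct and is precisely the route the paper intends: the corollary is stated without proof immediately after Theorem \ref{thm:main-thm} and Corollary \ref{cor:spl-case-C-linear}, and the expected justification is exactly to feed $S\geq I$ and $S-I$ trace class back into the three conditions of Theorem \ref{thm:main-thm}, with the forward direction coming from item \ref{item:7} of the corollary to Theorem \ref{main-thm-one-way}. One small citation fix: the scalar inequality $\im\langle z,w\rangle^{2}\leq \re\langle z,Sz\rangle\,\re\langle w,Sw\rangle$ you invoke for \ref{item:16} is the content of the \emph{corollary} to Theorem \ref{thm:equivalence-positivity-cond}, not the theorem itself.
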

\begin{rmk}
Notice that even in the finite mode case the condition $\hat{S}-i\hat{J} \geq 0$ does not necessarily imply that $S \geq I$. For example, one can consider the $2\times2$ matrix $\bmqty{2 &0\\0&\frac{1}{2}} $, which is a valid covariance matrix  but not greater than $I$.
\end{rmk}
\section{Convexity Properties of  Covariance Operators}

The previous section described and characterized infinite mode Gaussian states. The Hilbert Schmidt conditions appearing there show that unlike finite dimensions covariance operators in infinite dimensions do not form a cone. However, the next proposition shows that they do form a convex set. This allows us to extend some beautiful symmetry properties of Gaussian states proved by Parthasarathy \cite{Par10, Par13}  in the finite mode case to this setting. 
 In the following  $\mathscr{K}_{\mathbf{G}}(\CH)$ denotes the collection of  covariance operators for Gaussian states on $\GH$ and  $\mathscr{S}(\CH)$ are Shale operators on $\mathcal {H}$ for a fixed infinite dimensional separable real Hilbert space $\mathcal{H}$. The following Proposition is an easy  consequence of earlier discussions.

\begin{prop} 
  Consider two mean zero Gaussian states \[\rho_i = \rho_g(0,S_i), i = 1,2  \] on $\GH$. For $\theta \in \mathbb{R}$, let  $U_{\theta}$ be the unitary operator 
$\begin{bsmallmatrix}
   \cos \theta & -\sin \theta\\ 
    \sin \theta & \cos \theta
  \end{bsmallmatrix}$ on $\CH \oplus \CH$. Then \[\tr_2 \big(\Gamma_s(U_{\theta})(\rho_1\otimes\rho_2) \Gamma_s(U_{\theta})^{*}\big) = \rho_g(0, (\cos^2\theta ) S_1+(\sin^2\theta ) S_2) \] where $\tr_2$ denotes the relative trace over the second factor of $\Gamma_s(\CH)\otimes \Gamma_s(\CH)$.
   Consequently $\mathscr{K}_{\mathbf{G}}(\CH)$ is a convex set.
\end{prop}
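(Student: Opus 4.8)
The plan is to verify the identity at the level of quantum characteristic functions and then appeal to the injectivity of the map $\rho\mapsto\hat{\rho}$ recorded after Definition \ref{def:QFT}. Three ingredients already in place suffice: the tensor and marginal formulas of Proposition \ref{prop:qcf-tensor-product-state}, the transformation rule under a second-quantized conjugation in Proposition \ref{prop:qcf-of-weyl-conjugation}, and the fact that since $U_\theta$ is a (complex-linear) unitary its second quantization $\Gamma_s(U_\theta)$ is an honest unitary coinciding with the Shale unitary, with $U_\theta^{*}\in\SH{\CH\oplus\CH}$.

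First I would record the characteristic function of the input state. By Proposition \ref{prop:qcf-tensor-product-state}, for $f,g\in\CH$,
\[
(\rho_1\otimes\rho_2)^{\wedge}(f\oplus g)=\hat{\rho_1}(f)\,\hat{\rho_2}(g)=\exp\!\left(-\tfrac12\re\left\langle f,S_1 f\right\rangle-\tfrac12\re\left\langle g,S_2 g\right\rangle\right).
\]
Next, using \eqref{eq:71} I note $\Gamma_s(U_\theta)^{*}=\Gamma_s(U_\theta^{-1})=\Gamma_s(U_\theta^{*})$, so that conjugation by $\Gamma_s(U_\theta)$ is exactly $\Gamma_s(L)^{*}(\cdot)\Gamma_s(L)$ with $L=U_\theta^{*}$; Proposition \ref{prop:qcf-of-weyl-conjugation} then gives, for $\beta\in\CH\oplus\CH$,
\[
\bigl\{\Gamma_s(U_\theta)(\rho_1\otimes\rho_2)\Gamma_s(U_\theta)^{*}\bigr\}^{\wedge}(\beta)=(\rho_1\otimes\rho_2)^{\wedge}(U_\theta^{*}\beta).
\]
Finally I would pass to the marginal via \eqref{eq:65}, i.e. evaluate the last expression at $\beta=f\oplus0$.

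The computation then closes by itself: a direct application of $U_\theta^{*}$ yields $U_\theta^{*}(f\oplus0)=(\cos\theta\,f)\oplus(-\sin\theta\,f)$, whence
\[
\bigl\{\tr_2\bigl(\Gamma_s(U_\theta)(\rho_1\otimes\rho_2)\Gamma_s(U_\theta)^{*}\bigr)\bigr\}^{\wedge}(f)=\exp\!\left(-\tfrac12\cos^2\theta\,\re\langle f,S_1 f\rangle-\tfrac12\sin^2\theta\,\re\langle f,S_2 f\rangle\right),
\]
which is precisely the characteristic function of $\rho_g\bigl(0,(\cos^2\theta)S_1+(\sin^2\theta)S_2\bigr)$, and uniqueness finishes the argument. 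The convex combination thus arises simply because the scalars $\cos\theta$ and $-\sin\theta$ produced by $U_\theta^{*}$ enter the quadratic form squared.

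There is little genuine difficulty here; the only points demanding care are bookkeeping. One must check that the conjugating unitary is correctly rewritten so that Proposition \ref{prop:qcf-of-weyl-conjugation} applies (namely $\Gamma_s(U_\theta)^{*}=\Gamma_s(U_\theta^{*})$ and $U_\theta^{*}\in\SH{\CH\oplus\CH}$), and one must keep straight which slot the partial trace eliminates, matching it to the $f\oplus0$ substitution. A slightly more computational alternative would instead invoke Proposition \ref{prop:2nd-quant-conjugation-Gaussian-state} to obtain $\Gamma_s(U_\theta)(\rho_1\otimes\rho_2)\Gamma_s(U_\theta)^{*}=\rho_g\bigl(0,U_\theta(S_1\oplus S_2)U_\theta^{*}\bigr)$ (using $(U_\theta^{*})^{\tau}=U_\theta$ for a complex-linear unitary), and then read off the $(1,1)$ block of the $2\times2$ operator matrix $U_\theta(S_1\oplus S_2)U_\theta^{*}$ as $(\cos^2\theta)S_1+(\sin^2\theta)S_2$; the characteristic-function route above avoids even this transpose remark.
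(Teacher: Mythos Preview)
Your proof is correct and follows essentially the same route as the paper's one-line argument, which simply cites Proposition~\ref{prop:qcf-tensor-product-state}, Proposition~\ref{prop:2nd-quant-conjugation-Gaussian-state}, and the definition of Gaussian states; your use of Proposition~\ref{prop:qcf-of-weyl-conjugation} in place of Proposition~\ref{prop:2nd-quant-conjugation-Gaussian-state} amounts to the same thing (indeed you note this alternative yourself). The one small point the paper adds that you omit is the remark that $(\cos^2\theta)S_1+(\sin^2\theta)S_2$, being a convex combination of positive invertible operators, is itself positive and invertible, so that writing the marginal as $\rho_g(0,\cdot)$ is legitimate per Definition~\ref{defn:quant-gauss-stat-QFT}.
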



\begin{lem}
  Let $P \geq I$, then there exists invertible positive operators $P_{1}$ and $P_2$ such that 
\begin{equation}
\label{eq:102}
P= \frac{1}{2}(P_1+P_2)=\frac{1}{2}(P_1^{-1}+P_2^{-1})
\end{equation}
\end{lem}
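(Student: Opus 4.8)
The plan is to reduce the two simultaneous operator equations to a single functional-calculus construction, guided by the scalar case. If $P=p\geq 1$ is a number and we seek $p_1,p_2>0$ with $\tfrac12(p_1+p_2)=p$ and $\tfrac12(p_1^{-1}+p_2^{-1})=p$, then the first equation gives $p_1+p_2=2p$, and substituting this into the second (after writing $p_1^{-1}+p_2^{-1}=(p_1+p_2)/(p_1p_2)$) forces $p_1p_2=1$. Hence $p_1,p_2$ are the roots of $t^2-2pt+1=0$, namely $p\pm\sqrt{p^2-1}$, which for $p\geq 1$ are positive and mutually reciprocal. This computation dictates the operator ansatz.

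First I would observe that $P\geq I$ implies, via the spectral theorem applied to the single self-adjoint operator $P$, that $P^2\geq I$; thus $P^2-I\geq 0$ and $R:=\sqrt{P^2-I}$ is a well-defined positive operator. Crucially, $R$ is a continuous function of $P$, so $R$ commutes with $P$. I would then set
\begin{equation*}
P_1 := P+\sqrt{P^2-I}, \qquad P_2 := P-\sqrt{P^2-I}.
\end{equation*}
Positivity and invertibility are checked by functional calculus: $P_1\geq P\geq I$ is invertible, while $P_2=f(P)$ with $f(t)=t-\sqrt{t^2-1}=\bigl(t+\sqrt{t^2-1}\bigr)^{-1}>0$ on the spectrum $[1,\|P\|]$ of $P$, so $P_2$ is positive and invertible as well.

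The verification then proceeds in two short steps. The first equality is immediate: $\tfrac12(P_1+P_2)=\tfrac12(2P)=P$. For the second equality I would use commutativity of $P$ and $R$ to compute
\begin{equation*}
P_1P_2=(P+R)(P-R)=P^2-R^2=P^2-(P^2-I)=I,
\end{equation*}
so that $P_2=P_1^{-1}$ and $P_1=P_2^{-1}$. Consequently $\tfrac12(P_1^{-1}+P_2^{-1})=\tfrac12(P_2+P_1)=P$, which is the desired identity.

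There is no serious analytic obstacle here; the only point requiring care is the appeal to functional calculus to guarantee that $R$ commutes with $P$ (so that the difference-of-squares identity $P_1P_2=P^2-R^2$ holds) and that $P_2$ is genuinely positive rather than merely self-adjoint. The essential content is the recognition, from the scalar quadratic $t^2-2Pt+I=0$, that the correct pair consists of an operator and its inverse; once the ansatz $P\pm\sqrt{P^2-I}$ is written down, everything follows from the commutative spectral calculus of the single operator $P$.
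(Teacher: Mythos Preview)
Your proposal is correct and follows exactly the same construction as the paper: define $P_1=P+\sqrt{P^2-I}$ and $P_2=P-\sqrt{P^2-I}$, observe $P_1P_2=I$, and read off both identities. The paper's proof is simply a terser version of yours, omitting the scalar motivation and the explicit functional-calculus verification of positivity and invertibility that you spell out.
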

\begin{proof}
  Take $P_1= P+\sqrt{P^2-I}$ and $P_2 = P-\sqrt{P^2-I}$. Then $P_{1}P_2=P_2P_1=I$ and (\ref{eq:102}) is satisfied.
\end{proof}
\begin{lem}\label{sec:conv-prop-covar}
  Let $\CH=H+iH$ and $\mathscr{P} \in \B{\CH}$ be such that  $\mathscr{P}-I$ is positive and trace class, further let $\mathscr{P}_0 = \begin{bsmallmatrix}
  P &0\\
  0 & P
\end{bsmallmatrix}$ on $H\oplus H$(~\ref{sec:sympl-autom}). Then $\mathscr{P} = \frac{1}{2}(\mathscr{P}_1+\mathscr{P}_2)$, for some $ \mathscr{P}_j \geq 0$, and  $\mathscr{P}_j^{\frac{1}{2}} \in \mathscr{S}(\CH), j= 1,2$. 
\end{lem}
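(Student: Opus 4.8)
The plan is to push the scalar-style decomposition already recorded in \eqref{eq:102} through the $2\times 2$ block structure of $\mathscr{P}_0$, and to realize the two summands as elementary positive symplectic operators of the form in Example \ref{eg:fundamental-symplectic-opr}. First I would read off the diagonal blocks: since $\mathscr{P}_0 = \begin{bsmallmatrix} P & 0 \\ 0 & P\end{bsmallmatrix}$ and $\mathscr{P}-I$ is positive and trace class, the operator $P-I$ on $H$ is positive and trace class and $P\geq I$. Hence the preceding lemma applies with $P_1 = P+\sqrt{P^2-I}$ and $P_2 = P-\sqrt{P^2-I}$, giving positive invertible operators satisfying $P_1P_2=P_2P_1=I$ (so $P_2 = P_1^{-1}$) together with the two identities $P=\tfrac12(P_1+P_2)=\tfrac12(P_1^{-1}+P_2^{-1})$ from \eqref{eq:102}.

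Next I would define $\mathscr{P}_1,\mathscr{P}_2\in\mathscr{B}_{\mathbb{R}}(\CH)$ by $\mathscr{P}_j(u+iv)=P_ju+iP_j^{-1}v$, so that in the $H\oplus H$ picture
\[
(\mathscr{P}_1)_0 = \begin{bsmallmatrix} P_1 & 0 \\ 0 & P_1^{-1}\end{bsmallmatrix},\qquad (\mathscr{P}_2)_0 = \begin{bsmallmatrix} P_2 & 0 \\ 0 & P_2^{-1}\end{bsmallmatrix}.
\]
By Example \ref{eg:fundamental-symplectic-opr} (take $A=P_j$, which is invertible) each $\mathscr{P}_j$ is a symplectic automorphism of $\CH$, and since each diagonal block $P_j,P_j^{-1}$ is strictly positive and self-adjoint, $(\mathscr{P}_j)_0$ is positive and self-adjoint on $H\oplus H$, i.e.\ $\mathscr{P}_j\geq 0$. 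Averaging the two block matrices and invoking \emph{both} identities from \eqref{eq:102} then yields
\[
\tfrac12\big((\mathscr{P}_1)_0+(\mathscr{P}_2)_0\big) = \begin{bsmallmatrix} \tfrac12(P_1+P_2) & 0 \\ 0 & \tfrac12(P_1^{-1}+P_2^{-1})\end{bsmallmatrix} = \begin{bsmallmatrix} P & 0 \\ 0 & P\end{bsmallmatrix} = \mathscr{P}_0,
\]
so that $\mathscr{P}=\tfrac12(\mathscr{P}_1+\mathscr{P}_2)$ with $\mathscr{P}_j$ mapping $\CH$ to $\CH$.

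The only remaining (and genuinely analytic) step is to verify the Hilbert--Schmidt condition defining $\SH{\CH}$, and this is where I expect the one real obstacle. Because $P_j$ is self-adjoint and symplectic, $\mathscr{P}_j^{\tau}\mathscr{P}_j-I$ is identified with $\begin{bsmallmatrix} P_j^2-I & 0 \\ 0 & P_j^{-2}-I\end{bsmallmatrix}$, so it suffices to prove that $P_j^{\pm 2}-I$ is Hilbert--Schmidt. The key passage is that $P-I$ trace class forces $\sqrt{P^2-I}$ to be Hilbert--Schmidt: indeed $P^2-I=(P-I)(P+I)$ is positive and trace class, and the positive operator $\sqrt{P^2-I}$ satisfies $\|\sqrt{P^2-I}\|_2^2=\tr(P^2-I)<\infty$. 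Consequently $P_1-I=(P-I)+\sqrt{P^2-I}$ and $P_2-I=(P-I)-\sqrt{P^2-I}$ are Hilbert--Schmidt; multiplying by the bounded factors $P_j+I$ gives that $P_j^2-I$ is Hilbert--Schmidt, and the same argument applied to $P_j^{-1}$ (whose difference from $I$ is again Hilbert--Schmidt) handles $P_j^{-2}-I$. This shows $\mathscr{P}_j^{\tau}\mathscr{P}_j-I$ is Hilbert--Schmidt, so $\mathscr{P}_j\in\SH{\CH}$, and the proof is complete. Everything besides this trace-class-to-Hilbert--Schmidt estimate is routine bookkeeping with the block decomposition.
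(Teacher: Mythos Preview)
Your proof is correct and follows essentially the same route as the paper: define $P_{1,2}=P\pm\sqrt{P^{2}-I}$, build $\mathscr{P}_j$ from the block matrix $\begin{bsmallmatrix}P_j&0\\0&P_j^{-1}\end{bsmallmatrix}$ via Example~\ref{eg:fundamental-symplectic-opr}, and average using both identities in \eqref{eq:102}. Your Hilbert--Schmidt verification is in fact more careful than the paper's: the paper asserts that $P_j-I$ is \emph{trace class}, which is not obvious (and in general false, since $\sqrt{P^{2}-I}$ need only be Hilbert--Schmidt), whereas you correctly argue only that $P_j-I$ is Hilbert--Schmidt and deduce $P_j^{2}-I$ Hilbert--Schmidt from that, which is all that is needed. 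Note also that since $P_1P_2=I$ you have $P_j^{-2}-I=P_{3-j}^{2}-I$, so the inverse-block case is immediate once $P_j^{2}-I$ is handled.
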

\begin{proof}
  Take $P_1 = P+\sqrt{P^2-I} $ and $P_2= P-\sqrt{P^2-I}$, then by (\ref{eq:102}) 
\begin{align*}
\mathscr{P}_0 =\frac{1}{2}\left\{ \begin{bmatrix}
  P_1 &0\\
  0 & P_1^{-1}
\end{bmatrix} +
      \begin{bmatrix}
         P_2 &0\\
         0 & P_2^{-1}
      \end{bmatrix}\right\}
\end{align*}
Define $\mathscr{P}_j$ such that $\mathscr{P}_j(x+iy) = P_jx+P_j^{-1}y, \forall x, y \in H, j=1,2$. Then $\mathscr{P}_j$ is symplectic and positive. To prove $\mathscr{P}_j^{\frac{1}{2}} \in \mathscr{S}(\CH)$ we should prove $\mathscr{P}_j -  I $ is Hilbert-Schmidt. To this end, it is enough to show that $P_j - I$ is Hilbert-Schmidt, $j =1,2$. Since $P - I$ is trace class (and hence Hilbert-Schmidt) it is enough to show $\sqrt[]{P^2 - I}$ is Hilbert-Schmidt or equivalently $P^2-I$ is trace class. This is true because  $P^2-I = (P-I)^2+ 2(P-I)$. 
\end{proof}
\begin{thm}\label{sec:conv-prop-covar-1}
  $S \in \mathscr{K}_{\mathbf{G}}(\CH)$ if and only if 
\begin{equation}
\label{eq:104}
S= \frac{1}{2}(N^{\tau}N+M^{\tau}M)
\end{equation} for some $N, M \in \mathscr{S}(\CH)$. Further, $S$ is an extreme point of $\mathscr{K}_{\mathbf{G}}(\CH)$ if and only if $S=N^{\tau}N$ for some $N \in \mathscr{S}(\CH)$.
\end{thm}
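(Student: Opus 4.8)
The plan is to prove the two assertions separately, relying on the Williamson normal form (Corollary \ref{cor:will-norm-form-complex-case}), the characterization of $\mathscr{K}(\CH)$ in Theorem \ref{thm:main-thm}, and the splitting Lemma \ref{sec:conv-prop-covar}. For the first equivalence, the direction ($\Leftarrow$) is the easy one: given $N, M \in \mathscr{S}(\CH, \mathcal{K})$, I would note that $N^{\tau}N$ is itself a covariance operator, since conjugating the vacuum state $\rho_g(0, I)$ on $\Gamma_s(\mathcal{K})$ by $\Gamma_s(N)$ produces $\rho_g(0, N^{\tau}N)$ by Proposition \ref{prop:2nd-quant-conjugation-Gaussian-state} (see also item \ref{shale-Gaussian} of Example \ref{egs:Gaussian-states}), and similarly for $M^{\tau}M$; since $\mathscr{K}(\CH)$ is convex their midpoint lies in $\mathscr{K}(\CH)$. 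For ($\Rightarrow$), I start from $S \in \mathscr{K}(\CH)$ and apply Williamson's normal form to write $S = L^{\tau}\mathscr{P}L$. By Theorem \ref{thm:main-thm}, together with Lemma \ref{lem:positivity-condition-consequences}(\ref{item:9}) and Lemma \ref{HS-TC-conditions-consequences}, one has $L \in \mathscr{S}(\CH, \mathcal{K})$ and $\mathscr{P} - I$ positive and trace class, so Lemma \ref{sec:conv-prop-covar} gives $\mathscr{P} = \frac{1}{2}(\mathscr{P}_1 + \mathscr{P}_2)$ with each $\mathscr{P}_j \in \mathscr{S}(\mathcal{K})$ symmetric, positive, of the fundamental form $\mathscr{P}_j(x+iy) = P_j x + P_j^{-1}y$. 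Writing $\mathscr{P}_j = (\mathscr{P}_j^{1/2})^{\tau}\mathscr{P}_j^{1/2}$ and setting $N = \mathscr{P}_1^{1/2}L$, $M = \mathscr{P}_2^{1/2}L$ gives $S = \frac{1}{2}(N^{\tau}N + M^{\tau}M)$, and $N, M \in \mathscr{S}(\CH, \mathcal{K})$ because $N^{\tau}N - I = L^{\tau}(\mathscr{P}_1 - I)L + (L^{\tau}L - I)$ is a trace-class operator plus a Hilbert--Schmidt one.

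For the extreme-point statement, the crucial auxiliary fact I would isolate is that the vacuum covariance $I$ is extreme: if $T_1, T_2 \in \mathscr{K}(\CH)$ satisfy $\frac{1}{2}(T_1 + T_2) = I$, then $T_1 = T_2 = I$. To prove it I add the positivity conditions $\hat{T}_j - i\hat{J} \geq 0$ and observe that their average is $\hat{I} - i\hat{J} = I - i\hat{J}$, which by Corollary \ref{cor:J-hat} has the nontrivial kernel $\mathcal{N} = \{(ib, b) : b \in \CH\} \subseteq \CH \oplus \CH = \hat{\CH}$. For $v \in \mathcal{N}$ one gets $\langle (\hat{T}_1 - i\hat{J})v, v\rangle + \langle (\hat{T}_2 - i\hat{J})v, v\rangle = 0$ with both summands nonnegative, forcing $(\hat{T}_j - i\hat{J})v = 0$, i.e. $\hat{T}_j v = i\hat{J}v = v$ on $\mathcal{N}$. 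Reading off the block components of $\hat{T}_j$ (Proposition \ref{prop:h-hat-isomor}) against $v = (ib, b)$ with $b \in H$ then yields $(T_j)_{11} = (T_j)_{22} = I$ and $(T_j)_{12} = (T_j)_{21} = 0$, hence $(T_j)_0 = I$ and $T_j = I$.

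With this lemma in hand the two directions follow. If $S = N^{\tau}N$ with $N \in \mathscr{S}(\CH, \mathcal{K})$ and $S = \frac{1}{2}(S_1 + S_2)$ with $S_j \in \mathscr{K}(\CH)$, I conjugate by $N^{-1} \in \mathscr{S}(\mathcal{K}, \CH)$: then $T_j := (N^{-1})^{\tau}S_j N^{-1} \in \mathscr{K}(\mathcal{K})$ by Proposition \ref{prop:2nd-quant-conjugation-Gaussian-state} and $\frac{1}{2}(T_1 + T_2) = I$, so the lemma gives $T_1 = T_2 = I$, whence $S_1 = S_2 = S$ and $S$ is extreme. Conversely, if $S$ is extreme I take its Williamson form $S = L^{\tau}\mathscr{P}L$; should $\mathscr{P} \neq I$, then $\sqrt{P^2 - I} \neq 0$, so the operators $\mathscr{P}_1 \neq \mathscr{P}_2$ produced by Lemma \ref{sec:conv-prop-covar} give $S = \frac{1}{2}(L^{\tau}\mathscr{P}_1 L + L^{\tau}\mathscr{P}_2 L)$ as a proper convex combination of two distinct elements of $\mathscr{K}(\CH)$ (using $L$ invertible), contradicting extremality; hence $\mathscr{P} = I$ and $S = L^{\tau}L = N^{\tau}N$ with $N = L \in \mathscr{S}(\CH, \mathcal{K})$.

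The main obstacle is the extremality of $I$: the decomposition parts are essentially bookkeeping with the lemmas already established, but showing that $I$ admits no nontrivial splitting inside $\mathscr{K}(\CH)$ requires exploiting the precise boundary geometry of the constraint $\hat{S} - i\hat{J} \geq 0$ — concretely, the kernel $\mathcal{N}$ of $I - i\hat{J}$ — rather than any order relation such as $S \geq I$, which fails for general real-linear covariances. I expect the block-component computation identifying $\mathcal{N}$ and forcing $(T_j)_0 = I$ to be the technically delicate step, and the place where the complexification machinery of Section \ref{sec:sympl-autom} is genuinely needed.
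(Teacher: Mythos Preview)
Your argument for the first equivalence matches the paper's proof exactly: the paper uses Proposition \ref{prop:2nd-quant-conjugation-Gaussian-state} and convexity for ($\Leftarrow$), and Williamson's normal form together with Lemmas \ref{lem:positivity-condition-consequences}, \ref{HS-TC-conditions-consequences}, and \ref{sec:conv-prop-covar} for ($\Rightarrow$), defining $N = \mathscr{P}_1^{1/2}L$, $M = \mathscr{P}_2^{1/2}L$ just as you do.

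For the extreme-point characterization the paper does not give its own argument at all; it simply states that the proof ``goes in similar lines to the proof of the similar statement in the finite mode case, Theorem 3 in \cite{Par13}.'' Your proposal therefore supplies what the paper omits. The route you take --- first proving that $I$ is extreme by exploiting the kernel $\mathcal{N} = \{(ib,b):b\in\CH\}$ of $I - i\hat{J}$ and then transporting extremality to $N^{\tau}N$ via conjugation by $N^{-1}\in\mathscr{S}(\mathcal{K},\CH)$, with the converse handled by noting $\mathscr{P}_1 \neq \mathscr{P}_2$ whenever $\mathscr{P}\neq I$ --- is correct and is the natural infinite-dimensional analogue of Parthasarathy's finite-mode argument. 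The one point worth recording explicitly is that $N^{-1}\in\mathscr{S}(\mathcal{K},\CH)$: this holds because $(N^{-1})^{\tau}N^{-1} = (NN^{\tau})^{-1}$, and $NN^{\tau}-I = N(N^{\tau}N - I)N^{-1}$ is Hilbert--Schmidt, so Lemma \ref{lem:inverse-sqrt-HS} applies.
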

\begin{proof}
 Note that if $N\in \mathscr{S}(\CH) $ then $N^{\tau}N$ is a covariance operator (see \ref{shale-Gaussian} of Example  \ref{egs:Gaussian-states}).
Now let $S \in \mathscr{K}_{\mathbf{G}}(\CH)$, let $S = L^{\tau}\mathscr{P}L$ be the Williamson's normal form as in Corollary \ref{cor:will-norm-form-complex-case}. Then by Lemma \ref{HS-TC-conditions-consequences} and Lemma \ref{lem:positivity-condition-consequences} we have $L\in \mathscr{S}(\CH)$ and $\mathscr{P}-I$ is trace class and positive. By Corollary \ref{cor:will-norm-form-complex-case} we have $\mathscr{P}_0=
\begin{bsmallmatrix}
  P &0\\
  0 & P
\end{bsmallmatrix}
$. By Lemma \ref{sec:conv-prop-covar}, $\mathscr{P} = \frac{1}{2}(\mathscr{P}_1+\mathscr{P}_2)$ with $\mathscr{P}_j\geq 0, j =1,2$. Therefore we have \[S = \frac{1}{2}L^{\tau}(\mathscr{P}_1+\mathscr{P}_2)L .\]
By taking $N = \mathscr{P}_1^{1/2}L$ and $M=\mathscr{P}_2^{1/2}L$ we get (\ref{eq:104}). An easy computation shows $N, M \in\mathscr{S}(\CH)$. \\
Proof 
   of second part of the Theorem goes in an exactly similar lines to the proof of the similar statement in the finite mode case, Theorem 3 in \cite{Par13}. We give it here for completeness.
The first part also shows that for an element $S$ of $\mathscr{K}_{\mathbf{G}}(\CH)$ to be extremal it is necessary that $S = L^\tau L$ for some $L \in \mathscr{S}(\CH)$. To prove sufficiency, suppose there exist $L \in \mathscr{S}(\CH)$ and $S_1,S_2 \in \mathscr{K}_{\mathbf{G}}(\CH)$ such that 
\[L^\tau L = \frac{1}{2}(S_1+S_2).\]
By the first part of the theorem there exist $L_j\in \mathscr{S}(\CH)$ such that \begin{equation}\label{eq:104.a}
L^\tau L = \frac{1}{4}\sum\limits_{j=1}^4L_j^\tau L_j    
\end{equation}
where $S_1 = \frac{1}{2}(L_1^\tau L_1+L_2^\tau L_2)$, $S_2 = \frac{1}{2}(L_3^\tau L_3+L_3^\tau L_3)$. Left multiplication by $(L^\tau )^{-1}$ and right multiplication by $L^{-1}$ on both sides of (\ref{eq:104.a}) gives 
\begin{equation}\label{eq:104.b}
    I = \frac{1}{4}\sum\limits_{j=1}^4M_j
\end{equation}
where $M_j = (L^\tau )^{-1}L_j^\tau L_jL^{-1}.$ Each $M_j \in \mathscr{S}(\CH)$ and is positive. Multiplying by $J$ on both sides of (\ref{eq:104.b}) we get
$$    J = \frac{1}{4}\sum\limits_{j=1}^4M_jJ      = \frac{1}{4}\sum\limits_{j=1}^4M_jJM_jM_j^{-1}
      = \frac{1}{4}J\sum\limits_{j=1}^4M_j^{-1}.$$
Thus \[I = \frac{1}{4}\sum\limits_{j=1}^4M_j = \frac{1}{4}\sum\limits_{j=1}^4M_j^{-1}=\frac{1}{4}\sum\limits_{j=1}^4 \frac{1}{2}(M_j+M_j^{-1}),\]
which implies \[\sum\limits_{j=1}^4(M_j^{1/2}-M_j^{-1/2})^2 = 0,\]
or $M_j = I, \phantom{...}\forall 1\leq j \leq 4.$
Thus $L_j^\tau L_j = L^\tau L, \phantom{...}  \forall j$ and $S_1 = S_2$.
\end{proof}

\begin{cor}
  Let $S_1$, $S_2$ be extreme points of $\mathscr{K}_{\mathbf{G}}(\CH)$ such that $S_1\geq S_2$. Then $S_1=S_2$.
\end{cor}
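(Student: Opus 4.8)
The plan is to exploit the explicit description of extreme points furnished by Theorem~\ref{sec:conv-prop-covar-1}: an extreme point of $\mathscr{K}(\CH)$ is precisely an operator of the form $S=N^{\tau}N$ with $N\in\mathscr{S}(\CH,\mathcal{K})$. The first thing I would record is that such an $S$ is not merely symmetric, positive and invertible, but is itself a symplectic automorphism of $\CH$. Writing $J=J_{\CH}$ and using that $N$ symplectic gives $N^{\tau}J_{\mathcal{K}}N=J$ and, after inverting (legitimate since $N$ is bijective), $NJN^{\tau}=J_{\mathcal{K}}$, a direct computation yields
\[
SJS=N^{\tau}NJN^{\tau}N=N^{\tau}J_{\mathcal{K}}N=J .
\]
As $S$ is symmetric this is exactly $S^{\tau}JS=J$, and combined with $J^{-1}=J^{\tau}=-J$ it rearranges into the involution identity
\[
S^{-1}=-JSJ=J^{\tau}SJ .
\]
Thus every extreme point is congruent, via $J$, to its own inverse. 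This identity is the engine of the argument.

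With it in hand the rest is short. Apply the identity to both $S_1$ and $S_2$. From the hypothesis $S_1\ge S_2>0$ and the operator monotonicity of inversion one gets $S_2^{-1}\ge S_1^{-1}$, that is $-JS_2J\ge -JS_1J$, which is the same as $J(S_1-S_2)J\ge 0$. Now perform the congruence $X\mapsto J^{\tau}XJ$ on this inequality: congruence by the invertible operator $J$ preserves positivity, while $J^{\tau}\bigl(J(S_1-S_2)J\bigr)J=-(S_1-S_2)$ by $J^{2}=-I$, so we are forced to conclude $S_1-S_2\le 0$. Together with the hypothesis $S_1-S_2\ge 0$ this gives $S_1=S_2$.

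I expect no genuine difficulty beyond careful bookkeeping. The one place demanding attention is the derivation of $S^{-1}=-JSJ$ for extreme points: specifically, checking that $S=N^{\tau}N$ inherits the symplectic identity even when $N$ is a symplectic \emph{transformation} into a possibly different space $\mathcal{K}$ rather than an automorphism of $\CH$, and then handling the sign in the congruence step, where the antisymmetry $J^{\tau}=-J$ must be used and $J$ must not be treated as orthogonally self-adjoint. The two background facts invoked, namely operator monotonicity of inversion and invariance of positivity under congruence by an invertible operator, are standard and require no separate proof.
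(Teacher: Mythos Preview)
Your argument is correct and is, in fact, cleaner than the route taken in the paper. Both proofs begin from the characterization of extreme points as $S=N^{\tau}N$ with $N$ symplectic, but they exploit this in different ways. The paper forms the symplectic operator $M=L_2L_1^{-1}$ (after arranging that the two target spaces $\mathcal{K}_1,\mathcal{K}_2$ coincide), observes $M^{\tau}M\le I$, and then invokes the structural decomposition of a positive symplectic automorphism as $VTV^{*}$ with $T(x+iy)=Ax+iA^{-1}y$ to force $A=I$, hence $M^{\tau}M=I$ and $S_1=S_2$. You instead derive the single identity $S^{-1}=J^{\tau}SJ$ directly from $SJS=J$, and then the conclusion drops out of operator monotonicity of inversion together with a congruence by $J$. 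Your approach sidesteps both the identification of target spaces and the appeal to the polar-type decomposition of positive symplectics, at the modest cost of checking the sign bookkeeping in $J^{\tau}=-J$ and $J^{2}=-I$, which you handle correctly. The paper's route, on the other hand, makes more visible why the result holds: the spectral pairing $A\leftrightarrow A^{-1}$ in the normal form is the structural reason an extreme point cannot dominate another.
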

\begin{proof}
\todo[inline]{Please pay attention to the part where $\mathcal{K}_1$ and $\mathcal{K}_{2}$ are identified to make sense of $L_2L_1^{-1}$. Hope I didn't make a mistake!}
  By Theorem \ref{sec:conv-prop-covar-1}, let $S_1=L_1^{\tau}L_1$ and  $S_{2} = L_2^{\tau}L_2$ for some $L_{1}, L_2 \in \mathscr{S}(\CH) $.
  $L_1^{\tau}L_1 \geq L_2^{\tau}L_2$ implies that the symplectic transformation $M:= L_2L_1^{-1}$ has the property $M^{\tau}M\leq I$. But since $M^{\tau}M$ is a positive symplectic automorphism $M^{\tau}M = VTV^*$ for some unitary $V$, where $\CH = H+iH$ and $T(x+iy) = Ax+iA^{-1}y$ for some positive invertible operator $A$ on $H$. This can be seen by applying Proposition \ref{prop:factoring-symplectic-transf} to $M$. But such a $T\leq I$ if and only if $A=I$. This proves $M^{\tau}M = I$. But this implies $L_{2}^{\tau}L_2=L_1^{\tau}L_1$ from the definition of $M$.
\end{proof}
\section{Structure of Quantum Gaussian States}

If $S$ is a Gaussian covariance matrix it satisfies the properties listed in Theorem \ref{thm:main-thm}, then by combining Lemma \ref{lem:positivity-condition-consequences} and  Lemma \ref{HS-TC-conditions-consequences} we get a Williamson's normal form (Corollary \ref{cor:will-norm-form-complex-case}), $S = L^{\tau}\mathscr{P}L$ such that $\mathscr{P}-I$ is positive and trace class. By spectral theorem, there exists a unitary $U$ such that  $\mathscr{P} = U^{*}DU$, where $D$ is diagonal and positive. Since a unitary  is also symplectic, whenever  $S$ is a covariance operator we may assume without loss of generality that the $\mathscr{P}$ occurring in the Williamson's normal form is of the form $\mathscr{P} =
  \begin{bsmallmatrix}
    D & 0\\
    0 & I
  \end{bsmallmatrix}
$ on a decomposition $\mathcal{H} = \mathcal{H}_1\oplus \mathcal{H}_2$, with $D = \diag(d_1,d_2,\dots)$, $d_{1}\geq d_{2}\geq \dots >1$. But now we have fixed a basis of $\mathcal{H}$, therefore through the identification of $\Gamma_s(\mathbb{C})$ with $L^2(\mathbb{R})$ (Lebesgue measure), ($e(z) \in \Gamma_s(\mathbb{C})$ is identified with the $L^2$-function $x \mapsto (2\pi)^{-1/4}\exp{-4^{-1}x^2+zx-2^{-1}z^{2}}$ (refer Example \ref{eg:fock-L2-iso}
), we can assume without loss of generality that $\Gamma_s(\CH)= \otimes_jL^2(\mathbb{\mathbb{R}})$, with respect to the stabilising vector $e(0)$.  

\begin{thm}\label{thm:struct-quant-gauss}
  Let $\rho_g(w,S)$ be a Gaussian state in $\GH$. Let $S =  L^{\tau}\mathscr{P}L$ be a Williamson's normal form of $S$,  with $L^{\tau}L - I$ is Hilbert-Schmidt and $\mathscr{P} =
  \begin{bsmallmatrix}
    D & 0\\
    0 & I
  \end{bsmallmatrix}
$,  on a decomposition $\mathcal{H} = \mathcal{H}_1\oplus \mathcal{H}_2$, with $D = \diag(d_1,d_2,\dots)$, $d_{1}\geq d_{2}\geq \dots >1$,   
   $d_j = \coth(\frac{s_j}{2})$, $\forall j$ . Then 
\begin{equation}
\label{eq:67}
\rho_g(w,S) = W(\frac{-i}{2}w)^{*}\Gamma_s(L)^{*}[\otimes_j(1-e^{-s_j})e^{-s_ja_{j}^{\dagger}a_j}\otimes \rho_0 ]\Gamma_s(L)W(\frac{-i}{2}w).
\end{equation}
where $\rho_0 = \ketbra{e(0)}{e(0)}$ is the the vacuum state on $\Gamma_s(\mathcal{H}_2)$. 

\end{thm}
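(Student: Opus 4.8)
The plan is to read off the identity as a composition of three transformations, each already analysed in the excerpt, by reducing everything to the level of quantum characteristic functions. First I would strip off the mean vector. By Proposition \ref{prop:weyl-conjugation-Gaussian-state} one has $W(\tfrac{-i}{2}w)\rho_g(w,S)W(\tfrac{-i}{2}w)^{-1} = \rho_g(0,S)$, and since the Weyl operator $W(\tfrac{-i}{2}w)$ is unitary this rearranges to $\rho_g(w,S) = W(\tfrac{-i}{2}w)^{*}\rho_g(0,S)W(\tfrac{-i}{2}w)$. Hence it suffices to prove the mean-zero statement
\[
\rho_g(0,S) = \Gamma_s(L)^{*}\big[\textstyle\otimes_j(1-e^{-s_j})e^{-s_ja_j^{\dagger}a_j}\otimes\rho_0\big]\Gamma_s(L),
\]
after which reinstating $w$ yields (\ref{eq:67}).

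Next I would identify the bracketed operator as a mean-zero Gaussian state on $\Gamma_s(\mathcal{K})$ with covariance $\mathscr{P}$. Since $\rho_g(w,S)$ is a Gaussian state, $S$ satisfies the conditions of Theorem \ref{thm:main-thm}, and by Lemma \ref{HS-TC-conditions-consequences} the operator $\mathscr{P}-I$ occurring in the Williamson form is positive and trace class, as already noted in the discussion preceding the theorem; this guarantees via Lemma \ref{lem:gamma-st-trace-class} that $\sum_j e^{-s_j}<\infty$, so that the infinite tensor product converges. Writing $\mathcal{K}=\mathcal{K}_1\oplus\mathcal{K}_2$, Proposition \ref{prop:Gaussian-state-exists} shows that $\rho_D:=\otimes_j(1-e^{-s_j})e^{-s_ja_j^{\dagger}a_j}$ is a state on $\Gamma_s(\mathcal{K}_1)$ with $\hat{\rho}_D(y)=e^{-\frac12\langle y,Dy\rangle}$, i.e. $\rho_D=\rho_g(0,D)$, while Example \ref{eg:coherent-state} gives $\rho_0=\rho_g(0,I)$ on $\Gamma_s(\mathcal{K}_2)$. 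Proposition \ref{prop:qcf-tensor-product-Gaussian-states} then produces $\rho_D\otimes\rho_0=\rho_g(0,\mathscr{P})$, since $\mathscr{P}=\begin{bsmallmatrix}D&0\\0&I\end{bsmallmatrix}$ under the decomposition $\mathcal{K}=\mathcal{K}_1\oplus\mathcal{K}_2$.

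Finally I would transport this state through the Shale unitary. Because $L^{\tau}L-I$ is Hilbert-Schmidt we have $L\in\mathscr{S}(\CH,\mathcal{K})$, so $\Gamma_s(L)$ exists by Theorem \ref{thm:generalized-shale-unitary}, and Proposition \ref{prop:2nd-quant-conjugation-Gaussian-state} gives $\Gamma_s(L)^{*}\rho_g(0,\mathscr{P})\Gamma_s(L)=\rho_g(L^{\tau}0,\,L^{\tau}\mathscr{P}L)=\rho_g(0,S)$, using the factorisation $S=L^{\tau}\mathscr{P}L$. Chaining the three steps establishes (\ref{eq:67}). The argument is essentially bookkeeping built from the earlier propositions; the one point demanding genuine care is the second step, namely checking that the explicit Gibbs-type operator really has characteristic function $e^{-\frac12\langle\cdot,\mathscr{P}\cdot\rangle}$ and that the countable tensor product is trace class and converges in the sense of Proposition \ref{sec:expon-prop-gamm}. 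Since this is precisely the construction underlying Theorem \ref{main-thm-other-way}, the present theorem amounts to making that construction explicit, with the mean vector reinstated by the opening conjugation.
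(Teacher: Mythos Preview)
Your proposal is correct and follows essentially the same approach as the paper: strip the mean via Proposition~\ref{prop:weyl-conjugation-Gaussian-state}, identify the bracketed operator as $\rho_g(0,\mathscr{P})$ using Proposition~\ref{prop:Gaussian-state-exists}, Example~\ref{egs:Gaussian-states}(\ref{eg:coherent-state}) and Proposition~\ref{prop:qcf-tensor-product-Gaussian-states}, then transport through the Shale unitary via Proposition~\ref{prop:2nd-quant-conjugation-Gaussian-state}. The paper's proof is terser and orders the last two steps the other way round, but the logic is identical.
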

\begin{proof}
  By Proposition \ref{prop:weyl-conjugation-Gaussian-state}, 
 $\rho_g(w,S) = W(\frac{-i}{2}w)^{-1}\rho_g(0,S)W(\frac{-i}{2}w) $.  Since $S =  L^{\tau}\mathscr{P}L$, 
 $\rho_g(0,S) = \Gamma_s(L)^{*}\rho_g(0,\mathscr{P})\Gamma_s(L)$. Since $\mathscr{P} = D \oplus I$, 
 $\rho_g(0,\mathscr{P}) = \rho_g(0,D) \otimes \rho_g(0,I)$.  But $\rho_g(0,D) = \otimes_j(1-e^{-s_j})e^{-s_ja_{j}^{\dagger}a_j}$.  Since both on left and right hand sides have same quantum characteristic function by proof of Proposition \ref{prop:Gaussian-state-exists} and it is obvious that $ \rho_g(0,I)= \rho_0 $. 
\end{proof}

\begin{cor}\label{cor:struct-pure-quant-gauss}
  If $\{e_j\}$ is an orthonormal  basis of $\CH$, consider $\GH= \Gamma (\oplus _j\mathbb{C}e_j )= \otimes_j L^2(\mathbb{R}).$ Then the \todo{should define this?}wave  function   of a general pure quantum Gaussian state is of the form 
\begin{equation}
\label{eq:68}
\ket{\psi}= W(\alpha) \Gamma_s(U)(\otimes_j\ket{e_{\lambda_j}} ) 
\end{equation}
where $e_{\lambda}\in L^2(\mathbb{R})$ is defined for $\lambda >0$ by 
\begin{equation*}
  e_{\lambda}(x) = (2\pi)^{-1/4}\lambda^{-1/2}\exp{-4^{-1}\lambda^{-2}x^2}, \phantom{...} x \in \mathbb{R};
\end{equation*}
$\alpha \in \CH$, $U$ is a unitary operator on $\CH$, $\Gamma_s(U)$ is the second quantization unitary operator associated with $U$ and $\lambda_j, j\in \mathbb{N}$ are positive scalars.
\end{cor}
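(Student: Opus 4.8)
The plan is to specialise the structure theorem, Theorem \ref{thm:struct-quant-gauss}, to the pure case and then to unwind the Shale unitary using the factorisation of a symplectic map. First I would observe that purity of $\rho_g(w,S)$ forces the diagonal block $D$ in the Williamson normal form to be absent. In the identity (\ref{eq:67}) the factors $W(\tfrac{-i}{2}w)$ and $\Gamma_s(L)$ are unitary, so $\rho_g(w,S)$ is pure if and only if the bracketed operator $\otimes_j(1-e^{-s_j})e^{-s_ja_j^{\dagger}a_j}\otimes\rho_0$ is a pure state. A tensor product of states is pure exactly when each factor is; the vacuum $\rho_0$ is pure, whereas for the $j$-th thermal factor one computes $\tr\big((1-e^{-s_j})^2 e^{-2s_ja_j^{\dagger}a_j}\big)=(1-e^{-s_j})/(1+e^{-s_j})<1$ for every finite $s_j$. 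Hence no thermal mode can occur, $\mathcal{K}_1$ is trivial, $\mathscr{P}=I$, and $S=L^{\tau}L$ with $L$ symplectic and $L^{\tau}L-I$ Hilbert--Schmidt; this is precisely the extreme-point description of Theorem \ref{sec:conv-prop-covar-1}.

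With $\mathscr{P}=I$ the right-hand side of (\ref{eq:67}) reduces to $W(\tfrac{-i}{2}w)^{*}\Gamma_s(L)^{*}\rho_0\Gamma_s(L)W(\tfrac{-i}{2}w)$ with $\rho_0=\ketbra{e(0)}{e(0)}$ on $\Gamma_s(\mathcal{K})$, so the state vector is, up to a scalar phase,
\[
\ket{\psi}=W(\tfrac{-i}{2}w)^{*}\,\Gamma_s(L)^{*}\ket{e(0)}.
\]
Next I would factor $L=UTV$ by Proposition \ref{prop:factoring-symplectic-transf}, with $U\colon\CH\to\mathcal{K}$ and $V\colon\CH\to\CH$ unitary and $T(u+iv)=Au+iA^{-1}v$, $A\in\B{H}$ positive and invertible. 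Combining $\Gamma_s(L)^{*}=\Gamma_s(L^{-1})$ from (\ref{eq:71}) with the multiplicativity (\ref{eq:77}) and the identity $\Gamma_s(U^{*})e(0)=e(U^{*}0)=e(0)$ gives
\[
\ket{\psi}=W(\tfrac{-i}{2}w)^{*}\,\Gamma_s(V^{*})\,\Gamma_s(T^{-1})\ket{e(0)}.
\]
Taking $\alpha=-\tfrac{i}{2}w$, so that $W(\alpha)^{-1}=W(\tfrac{-i}{2}w)^{*}$, and $U=V^{*}$ (a unitary on $\CH$) already reproduces the outer operators $W(\alpha)^{-1}\Gamma_s(U)$ of (\ref{eq:68}).

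The remaining and main step is to identify the squeezed vacuum $\Gamma_s(T^{-1})\ket{e(0)}$ with $\otimes_j\ket{e_{\lambda_j}}$. I would diagonalise the positive operator $A$ in an orthonormal basis $\{e_j\}$ of $H$, so that $T$ splits as an infinite tensor product over the one-mode spaces $\Gamma_s(\mathbb{C}e_j)\cong L^2(\mathbb{R})$ under the identification of Example 19.8 and Exercise 20.20 of \cite{Par12}, the convergence being controlled by Proposition \ref{sec:expon-prop-gamm} with stabilising sequence $e(0)$. On the $j$-th mode, Example \ref{shale-Gaussian} identifies $\Gamma_s(T^{-1})\ket{e(0)}=\Gamma_s(T)^{*}\ket{e(0)}$ as the pure mean-zero Gaussian state whose momentum--position covariance is $\diag(a_j^2,a_j^{-2})$, a diagonal block with no position--momentum cross term. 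Being centred with diagonal covariance, its wave function is a centred real Gaussian normalised like $e(0)$, hence equal to $e_{\lambda_j}$ for the $\lambda_j>0$ matched to $a_j$; assembling the modes yields $\otimes_j\ket{e_{\lambda_j}}$ and therefore (\ref{eq:68}). I expect the genuine work to lie in this last step, namely the explicit $L^2(\mathbb{R})$ computation of the one-mode squeezed vacuum together with the bookkeeping needed to see that the resulting infinite tensor product is exactly the one appearing in the statement.
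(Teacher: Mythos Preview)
Your approach is exactly the paper's: specialise Theorem \ref{thm:struct-quant-gauss} to the pure case and then unwind via Proposition \ref{prop:factoring-symplectic-transf}, which is precisely what the one-line proof does by deferring to the finite-mode argument of \cite{Par13}. The one infinite-dimensional subtlety you pass over is the diagonalisation of $A$: a positive invertible operator on $H$ need not have an eigenbasis, but here $L^{\tau}L-I$ Hilbert--Schmidt forces $A^2-I$, and hence $A-I=(A^2-I)(A+I)^{-1}$, to be Hilbert--Schmidt (cf.\ the proof of Theorem \ref{thm:generalized-shale-unitary}), so $A$ is a compact perturbation of the identity and the spectral theorem for compact self-adjoint operators supplies the required basis; the same square-summability of $(a_j-1)$ is what makes $\otimes_j\ket{e_{\lambda_j}}$ converge relative to the stabilising sequence $e(0)$.
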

\begin{proof}
  The proof is essentially similar to the proof of Corollary 2 in \cite{Par13} because of Theorem \ref{thm:struct-quant-gauss} and Proposition \ref{prop:factoring-symplectic-transf}.
\end{proof}


\begin{thm}
  [\textbf{Purification}] Let $\rho$ be a mixed Gaussian state in $\GH$. Then there exists a pure Gaussian state $\ket{\psi}$ in $\GH \otimes \GH$ such that \[\rho= \tr_2 U\ketbra{\psi}{\psi}U^{*} \] where $U$ is a unitary and $\tr_2$ is the relative trace over the second factor.
\end{thm}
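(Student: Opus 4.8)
The plan is to purify the canonical thermal form of $\rho$ furnished by Theorem \ref{thm:struct-quant-gauss}, purify each thermal mode by a two-mode squeezed vacuum, and transport everything back by Gaussian unitaries.

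First I would strip off the mean and the Shale unitary. Writing $\rho = \rho_g(w,S)$, Proposition \ref{prop:weyl-conjugation-Gaussian-state} gives $\rho = W(\tfrac{-i}{2}w)^{*}\rho_g(0,S)W(\tfrac{-i}{2}w)$, and Theorem \ref{thm:struct-quant-gauss} gives $\rho_g(0,S) = \Gamma_s(L)^{*}(\rho_{\mathrm{th}}\otimes\rho_0)\Gamma_s(L)$, where $\rho_{\mathrm{th}} = \otimes_j(1-e^{-s_j})e^{-s_j a_j^{\dagger}a_j}$ on $\Gamma_s(\mathcal{K}_1)$ and $\rho_0 = \ketbra{e(0)}{e(0)}$ is the vacuum on $\Gamma_s(\mathcal{K}_2)$. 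Because the partial trace over the second factor is invariant under unitary conjugation of that factor, and because operators acting only on the first factor pass through $\tr_2$ (that is, $\tr_2[(A\otimes I)X(C\otimes I)] = A(\tr_2 X)C$), it will suffice to produce a pure Gaussian vector $\ket{\Psi}$ on $\Gamma_s(\mathcal{K})\otimes\Gamma_s(\mathcal{K})$ with $\tr_2\ketbra{\Psi}{\Psi} = \rho_{\mathrm{th}}\otimes\rho_0$. One then sets $\ket{\psi} = (\Gamma_s(L)^{*}\otimes\Gamma_s(L)^{*})\ket{\Psi}\in\GH\otimes\GH$ and $U = W(\tfrac{-i}{2}w)^{*}\otimes I$, and a direct computation with these two partial-trace identities recovers $\rho = \tr_2\,U\ketbra{\psi}{\psi}U^{*}$. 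Here $\ket{\psi}$ remains pure Gaussian because $\Gamma_s(L)^{*}\otimes\Gamma_s(L)^{*}$ is a Shale unitary and these preserve Gaussianity by Proposition \ref{prop:2nd-quant-conjugation-Gaussian-state}.

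The heart of the matter is the purification of $\rho_{\mathrm{th}}\otimes\rho_0$. The vacuum factor $\rho_0$ is already pure, so I would adjoin an ancillary copy of $\Gamma_s(\mathcal{K}_1)$ and, for each mode $j$, use the two-mode squeezed vacuum $\ket{\psi_j} = \sqrt{1-e^{-s_j}}\sum_{n\geq 0}e^{-s_j n/2}\ket{n}_j\otimes\ket{n}_j$, whose partial trace over the ancillary mode is exactly $(1-e^{-s_j})e^{-s_j a_j^{\dagger}a_j}$. Taking $\ket{\Psi}$ to be the tensor product of the $\ket{\psi_j}$ over the $\mathcal{K}_1$-modes, tensored with the vacuum on the remaining $\mathcal{K}_2$-modes, yields the desired reduced state. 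Each $\ket{\psi_j}$ equals $\Gamma_s(M_j)^{*}e(0)$ for the two-mode squeezing symplectic $M_j$ determined by $\tanh r_j = e^{-s_j/2}$, so by Example \ref{egs:Gaussian-states}(\ref{shale-Gaussian}) the finite products are pure Gaussian.

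The hard part will be assembling the infinite family $\{M_j\}$ into a single Shale operator $M$ on the doubled space, so that $\ket{\Psi} = \Gamma_s(M)^{*}e(0)$ genuinely defines a unit vector of $\Gamma_s(\mathcal{K})\otimes\Gamma_s(\mathcal{K})$. This is exactly where the trace-class hypothesis is consumed: by Lemma \ref{lem:gamma-st-trace-class} the condition that $\mathscr{P}-I$ be trace class is equivalent to $\sum_j e^{-s_j}<\infty$, and since $\sinh^2 r_j = e^{-s_j}/(1-e^{-s_j})$, Lemma \ref{lem:analysis} yields $\sum_j\sinh^2 r_j<\infty$; this is precisely the Hilbert--Schmidt estimate for $M^{\tau}M-I$ that gives $M\in\mathscr{S}$ and the convergence of the infinite tensor product. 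Once $M$ is a Shale operator, Theorem \ref{thm:generalized-shale-unitary} produces $\Gamma_s(M)$ and Example \ref{egs:Gaussian-states}(\ref{shale-Gaussian}) certifies that $\ket{\Psi}$ is pure Gaussian, completing the argument after the bookkeeping of the preceding paragraphs.
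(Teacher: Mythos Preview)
Your proposal is correct and follows the same route the paper intends: the paper's proof is the single line ``Proof is same as that of Theorem 5 in \cite{Par13}'', i.e.\ reduce via Theorem \ref{thm:struct-quant-gauss} to a tensor product of one-mode thermal states and purify each by a two-mode squeezed vacuum. You have in fact supplied the one genuinely infinite-dimensional point that the paper's referral leaves implicit, namely that $\sum_j e^{-s_j}<\infty$ (Lemma \ref{lem:gamma-st-trace-class}) together with Lemma \ref{lem:analysis} gives $\sum_j\sinh^2 r_j<\infty$, hence the block-diagonal two-mode squeezer $M=\oplus_j M_j$ lies in $\mathscr{S}$ and $\ket{\Psi}=\Gamma_s(M)^{*}e(0)$ is a bona fide pure Gaussian vector; this is exactly the extra care the infinite-mode setting demands.
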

\begin{proof}
  Proof is same as that of Theorem 5 in \cite{Par13}.
\end{proof}

 
\section{Symmetry group of Gaussian states}
  Let $\mathcal{H}$ be a complex separable infinite dimensional Hilbert space and let $\mathbf{G}(\CH)$ denote the set of all Gaussian states on $\GH$. In this Section, we characterize all automorphisms of $\B{ \GH }$ preserving the set of Gaussian states. 
\begin{defn}
 A unitary operator $U$ on $\GH$ is called a \emph{Gaussian symmetry} if  $U\rho U^{*} \in \mathbf{G}(\CH)$ for every $\rho\in \mathbf{G}(\CH)$. 
\end{defn}
We use $\mathbb{Z}_+$ to denote the set $\{0,1,2,3,\dots \}$ and take $\mathbb{Z}_{+,0}^{\infty}:=\{(k_1,k_2,\dots, k_n,\allowbreak 0,0,\dots)^{\tau}| k_j \in \mathbb{Z}_+, j, n\in \mathbb{N}\}$. Let $\{e_j\}_{j\in \mathbb{N}}$ denote the standard orthonormal basis for $\ell^2(\mathbb{N})$, the column vector with $1$ at the $j^{th}$ position and zero elsewhere.  An infinite order matrix $A$ is said to be a permutation matrix if $A$ corresponds to a unitary operator which maps $\{e_j\}$ to itself. 
\begin{lem}\label{lem:basic-technical}
  Let $\{s_j\}_{j\in \mathbb{N}}$ and $\{t_j\}_{j\in \mathbb{N}}$ be two sequences consisting of positive real numbers such that
  their positive integer linear combinations form same sets, that is, 
 \begin{equation}\label{eq:69}
\left \{ \sum\limits_{j=1}^ns_jk_j \vert k_j\in \mathbb{Z}_+ \forall j, n\in \mathbb{N}\right\} = \left \{ \sum\limits_{j=1}^nt_jk_j \vert k_j\in \mathbb{Z}_+\forall j, n\in \mathbb{N}\right\}.
\end{equation}
If $\{s_j\}_{j\in \mathbb{N}}$ and $\{t_j\}_{j\in \mathbb{N}}$ are both linearly independent over the field $\mathbb{Q}$, then 
there exists a bijection $\sigma: \mathbb{Z}_+\to \mathbb{Z}_+$ such that
$s_j = t_{\sigma (j)}$ for all $j\in \mathbb{Z}_+.$
\end{lem}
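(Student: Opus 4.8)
The plan is to recognize both sides of \eqref{eq:69} as the additive submonoid of $[0,\infty)$ generated by the respective family, and then to characterize the generators purely in terms of this monoid, so that they are forced to coincide. Write $M$ for this common set; by hypothesis the monoid generated by $\{s_j\}$ equals the monoid generated by $\{t_j\}$, and this common monoid $M$ contains $0$ and is closed under addition. The whole argument reduces to reading off the generating family from $M$ intrinsically.

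The first step is to exploit $\mathbb{Q}$-linear independence to obtain unique representations. A family of reals linearly independent over $\mathbb{Q}$ consists of distinct elements and admits no nontrivial integer relation, so every element $x\in M$ has a unique expression $x=\sum_j s_j k_j$ with $k_j\in\mathbb{Z}_+$ almost all zero, and likewise a unique expression in the $t_j$. Indeed, if $\sum_j s_j k_j=\sum_j s_j k_j'$ then $\sum_j s_j(k_j-k_j')=0$ forces $k_j=k_j'$ for all $j$.

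The key step is to identify the generators as the \emph{atoms} of $M$: call $x\in M$ an atom if $x\neq 0$ and $x=y+z$ with $y,z\in M$ implies $y=0$ or $z=0$. I would show that the set of atoms of $M$ is exactly $\{s_j\}$. On one hand each $s_j$ is an atom, for if $s_j=y+z$ with $y,z\in M\setminus\{0\}$ then adding the unique representations of $y$ and $z$ yields a representation of $s_j$ whose coefficients sum to at least $2$, contradicting uniqueness against the representation of $s_j$ having a single coefficient equal to $1$. On the other hand, if $x=\sum_j s_j k_j$ with $\sum_j k_j\geq 2$, then choosing an index $j_0$ with $k_{j_0}\geq 1$ we split $x=s_{j_0}+(x-s_{j_0})$ into two nonzero elements of $M$, so $x$ is not an atom; hence every atom has coefficient sum exactly $1$, i.e.\ equals some $s_j$.

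Running the identical argument on the $t$-side shows the set of atoms of $M$ is also exactly $\{t_j\}$. Since the property of being an atom depends only on the monoid $M$ and not on any chosen generating family, the two descriptions must agree, giving $\{s_j\}=\{t_j\}$ as required. The only delicate point — and the one I expect to be the main obstacle to state cleanly, rather than genuinely hard to prove — is the uniqueness of representations, since the atom characterization and everything downstream rest entirely on it; once unique representation is secured, the remaining deductions are immediate.
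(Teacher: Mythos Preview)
Your proof is correct and actually rather cleaner than the paper's. The approaches, however, are genuinely different.

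The paper's argument is matrix-theoretic: since each $t_i$ lies in the monoid generated by $\{s_j\}$, one writes $t_i=\sum_j a_{ij}s_j$ with $a_{ij}\in\mathbb{Z}_+$, and similarly $s_i=\sum_j b_{ij}t_j$; this gives infinite matrices $A,B$ with nonnegative integer entries and finitely many nonzero entries in each row, satisfying $As=t$ and $Bt=s$. Rational independence of $\{s_j\}$ forces $BA=I$, and of $\{t_j\}$ forces $AB=I$. The bulk of the paper's proof is then a hands-on combinatorial argument that mutually inverse nonnegative-integer matrices must be permutation matrices, whence $\{s_j\}=\{t_j\}$.

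Your route replaces all of this by the intrinsic observation that the generators are exactly the atoms (irreducible nonzero elements) of the monoid $M$, which unique representation makes immediate. This is shorter and more conceptual; it also generalizes verbatim to any commutative monoid freely generated by each family. The paper's version, by contrast, makes the bijection between the two generating families explicit as a permutation matrix, which is convenient since the permutation $A$ is used directly in the subsequent proof of Theorem~\ref{thm:symm-group-1}; your argument yields the same bijection implicitly (distinctness of the $s_j$ and $t_j$ follows from $\mathbb{Q}$-independence, so equality of sets gives a unique bijection), but you would want to state that consequence if this lemma is to feed into that later proof.
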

\begin{proof}
Consider $s = (s_1,s_2,s_3,\dots)^{\tau}$ and $t = (t_1,t_2,t_3,\dots)^{\tau}$ 
as vectors in $\mathbb{R}^{\infty}.$ The condition  \eqref{eq:69} in particular means that each $t_i$ is a positive integer linear combination of $\{s_j\}_{j\in \mathbb{N}}.$ Hence there exists an infinite matrix $A$ with entries from
$\mathbb{Z}_+$ such that
$$t_i= \sum _{j=1}^{\infty}a_{ij}s_j ~~\forall i.$$
Similarly there exists another infinite matrix $B$ with entries from
$\mathbb{Z}_+$ such that
$$s_i= \sum _{j=1}^{\infty}b_{ij}t_j ~~\forall i.$$
Note that by construction every row of $A,B$ have only finitely many non-zero entries. From this, it is easily verified that
formal matrix multiplications $AB$ and $BA$ make sense and give us matrices with entries from $\mathbb{Z}_+$ of same kind, namely every row has only finitely many non-zero entries. We also get
$$BAs=s,~~ ABt=t.$$
Then, by the assumption of linear independence on rationals of $\{s_j\}_{j\in \mathbb{N}}$ and $\{t_j\}_{j\in \mathbb{N}},$ we get $AB=BA=I$. In other words $A$ and $B$ are permutation matrices and inverses of each other. Clearly this completes the proof.
\end{proof}

Now we develop some notation which would help us in our computations. Recall the action of annihilation and creation operators $a, a^{\dagger}$  on  $\Gamma_s(\mathbb{C})$, from  Exercise 20.18(b) in \cite{Par12}: $\Gamma_s(\mathbb{C})$  has a complete orthonormal basis $\{\ket{k} : k\in \mathbb{Z}_+\}$, with $\ket{0}$ as the vacuum vector, $a(\ket{k})= \sqrt{k}\ket{k-1}, a(\ket{0})= 0,
a^{\dagger}(\ket{k})= \sqrt{k+1}\ket{k+1}$ Further each $\ket{k}$ is an eigenvector for the number operator $a^{\dagger}a$ with eigenvalue $k$, that is, $a^{\dagger}a(\ket{k})= k\ket{k}.$

Now let $\CH$ be an infinite dimensional Hilbert space with orthonormal basis, $\{e_j\} _{j\in \mathbb{N}}$. Then we may identify   $\Gamma_s(\mathcal{H})$ with  $\otimes_{j=1}^{\infty}\Gamma_s(\mathbb{C}e_{j})$, where the sequence of vacuum vectors is chosen as the stabilizing sequence. This identification can be done with respect to any orthonormal basis of $\CH .$ 
Define 
\begin{equation}
\label{eq:85}
\ket{\mathbf{k}} = \ket{k_1}\otimes \ket{k_2} \otimes \cdots \otimes \ket{k_N}\otimes \ket{0}\otimes \ket{0} \otimes \cdots=:\ket{k_1} \ket{k_2} \cdots \ket{k_N}
\end{equation} for $\mathbf{k}= (k_1,k_2,k_3, \dots)^{\tau} \in \mathbb{Z}_{+, 0}^{\infty}$. 
%
It can be seen that $\{\ket{\mathbf{k}}| \mathbf{k}\in \mathbb{Z}_{+,0}^{\infty}\}$ forms an orthonormal basis for $\Gamma_s(\mathcal{H})$.  For $u\in \mathcal{H}$, taking $u_j=\langle e_j, u\rangle, $ 
$$
\braket{\mathbf{k}}{e(u)}  
= \Pi_{j=1}^n \frac{u_j^{k_j}}{\sqrt{k_j!}}:= \frac{u^{\mathbf{k}}}{\sqrt{\mathbf{k}!}}, 
$$
where the last equality defines the multi-index notation and also we take $0^0 = 1$. Therefore we write, 
\begin{equation}
\label{eq:101}
e(u)= \sum\limits_{\mathbf{k}\in \mathbb{Z}_{+,0}^{\infty}}^{}\frac{u^{\mathbf{k}}}{\sqrt{\mathbf{k}!}}\ket{\mathbf{k}}.
\end{equation}

We also have,

\begin{equation}
\label{eq:86}
(I\otimes I \otimes \cdots \otimes I \otimes a_j^{\dagger}a_j\otimes I \otimes I \otimes \cdots) (\ket{\mathbf{k}}) =
\begin{cases*}
  k_j\ket{\mathbf{k}}, & if $j\leq N$\\
  0, & otherwise
\end{cases*},
 \end{equation} where $a_j^{\dagger}a_j$ is the number operator on $\Gamma_s(\mathbb{C}e_{j}), j \in \mathbb{N}$. \\

Consider $\mathcal{H} = \oplus_{j=1}^{\infty} \mathbb{C}e_{j}$ as above. For a sequence of positive numbers $\{s_j\}_{j\in \mathbb{N}}$ such that $d_j = \coth(\frac{s_j}{2}) > 1$ and $\sum\limits_j (d_j-1)$ is finite, we know from Theorem \ref{thm:struct-quant-gauss} 
that, there exists a Gaussian state $\rho_s= \Pi_{j=1}^{\infty}(1-e^{-s_j})\otimes_{j=1}^{\infty}e^{-s_ja_{j}^{\dagger}a_j} \in \B{\GH}$. Then we have the  
\begin{lem}\label{lem:symm-group}
The spectrum of the  Gaussian state $\rho_s$
is,
\begin{equation}
\label{eq:87}
\sigma (\rho_s) =  \left\{pe^{-\sum\limits_{j=1}^N s_jk_j}\big| k_j \in \mathbb{Z}_{+}, N \in \mathbb{N}\right\} \cup \{0\},
\end{equation}
where  $p:=\Pi_{j=1}^{\infty}(1-e^{-s_j})$. The point spectrum $\sigma_{\mathtt{p}}(\rho_s) = \sigma (\rho_s)\setminus \{0\}$. Further, if  $\{s_j\}_{j\in \mathbb{N}}$ is a  sequence of (distinct) irrational  numbers which are linearly independent over the field $\mathbb{Q}$ then each number $pe^{-\sum\limits_{j=1}^N s_jk_j}$ is an eigenvalue with multiplicity one.
\end{lem}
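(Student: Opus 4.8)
The plan is to diagonalize $\rho_s$ explicitly in the orthonormal basis $\{\ket{\mathbf{k}} : \mathbf{k} \in \mathbb{Z}_+^{\infty}\}$ introduced in (\ref{eq:85}), and then read off the spectrum and the multiplicities directly.

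First I would note that $\rho_s = p\,\otimes_{j=1}^{\infty} e^{-s_j a_j^{\dagger}a_j}$ acts diagonally on these basis vectors. Since each $\ket{k}$ is an eigenvector of the number operator with eigenvalue $k$ of multiplicity one, the relation (\ref{eq:86}) gives
\[
\rho_s \ket{\mathbf{k}} = p\Big(\prod_{j=1}^{\infty} e^{-s_j k_j}\Big)\ket{\mathbf{k}} = p\, e^{-\sum_{j=1}^{N} s_j k_j}\,\ket{\mathbf{k}},
\]
where $N$ is chosen so that $k_j = 0$ for $j > N$; the exponent is a finite sum because $\mathbf{k} \in \mathbb{Z}_+^{\infty}$. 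As $\{\ket{\mathbf{k}}\}$ is a complete orthonormal basis of $\Gamma_s(\mathcal{H})$ (from Proposition \ref{sec:expon-prop-gamm} and the discussion around (\ref{eq:85})), this shows that the set of eigenvalues of $\rho_s$ is exactly $\sigma_{\mathtt{p}}(\rho_s)$ as in (\ref{eq:87}).

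Next I would invoke compactness. Being a density matrix, $\rho_s$ is positive and trace class, hence compact, so its spectrum is the closure of its set of eigenvalues together with $0$. Taking $\mathbf{k} = (k,0,0,\dots)$ and letting $k \to \infty$ produces eigenvalues $p\,e^{-s_1 k} \to 0$ (since $s_1 > 0$ and $p > 0$), so $0$ is already an accumulation point of $\sigma_{\mathtt{p}}(\rho_s)$ and lies in $\overline{\sigma_{\mathtt{p}}(\rho_s)}$. Therefore $\sigma(\rho_s) = \overline{\sigma_{\mathtt{p}}(\rho_s)}$, which is the first assertion.

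Finally, for the multiplicity claim, suppose $\{s_j\}$ is linearly independent over $\mathbb{Q}$. Two basis vectors $\ket{\mathbf{k}}$ and $\ket{\mathbf{k}'}$ carry the same eigenvalue if and only if $\sum_j s_j(k_j - k_j') = 0$, which is a finite rational linear combination of the $s_j$; linear independence over $\mathbb{Q}$ then forces $k_j = k_j'$ for all $j$, i.e. $\mathbf{k} = \mathbf{k}'$. Hence each $p\,e^{-\sum_{j=1}^{N} s_j k_j}$ arises from a single basis vector and has multiplicity one. I expect the only genuine subtlety to be the bookkeeping for the infinite tensor product — verifying that $\rho_s$ acts as stated and that the $\ket{\mathbf{k}}$ exhaust an orthonormal basis — but this is already supplied by the preliminaries, after which the spectral and arithmetic steps are immediate.
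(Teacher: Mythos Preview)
Your proposal is correct and follows essentially the same approach as the paper: diagonalize $\rho_s$ in the basis $\{\ket{\mathbf{k}}\}$ via (\ref{eq:86}), read off the eigenvalues, and use $\mathbb{Q}$-linear independence of $\{s_j\}$ to force $\mathbf{k}=\mathbf{k}'$ whenever two eigenvalues coincide. Your explicit appeal to compactness of $\rho_s$ to justify that $\sigma(\rho_s)=\overline{\sigma_{\mathtt{p}}(\rho_s)}$, together with the observation that $0$ is already an accumulation point, is a small improvement over the paper, which leaves this step implicit.
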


\begin{proof} Since $\rho_s$ is is a compact operator $\sigma (\rho_s) = \sigma_{\mathtt{p}}(\rho_s) \cup \{0\}$. We will show that $\sigma_{\mathtt{p}} (\rho_s) = \{pe^{-\sum\limits_{j=1}^N s_jk_j}\big| k_j \in \mathbb{Z}_{+}, N \in \mathbb{N}\}$.
  We have $\otimes_{j=1}^{\infty}e^{-s_ja_{j}^{\dagger}a_j} :=\slim_{N\rightarrow\infty} \allowbreak \otimes_{j=1}^Ne^{-s_ja_{j}^{\dagger}a_j}\otimes I\otimes I \otimes \cdots$. 
Therefore, $\otimes_{j=1}^{\infty}e^{-s_ja_{j}^{\dagger}a_j}(e(u) \otimes e(0)\otimes e(0)\otimes\cdots) = \left(\otimes_{j=1}^Ne^{-s_ja_{j}^{\dagger}a_j}e(u)\right)\otimes e(0)\otimes e(0)\otimes\cdots $, $\forall u \in \mathbb{C}^{N}$.
Thus  $\Gamma_s(\mathbb{C}^{N})$ is a reducing subspace for $\rho_s$ and
${\rho_s}_{|_{\Gamma_s(\mathbb{C}^{N})}} = \otimes_{j=1}^Ne^{-s_ja_{j}^{\dagger}a_j}$,  $\forall N$. Therefore,  
\begin{equation}
\label{eq:90}
\rho_s(\ket{\mathbf{k}}) = (p\otimes_{j=1}^{\infty}e^{-s_ja_{j}^{\dagger}a_j})\ket{\mathbf{k}} = pe^{-\sum\limits_{j=1}^{\infty} s_jk_j}\ket{\mathbf{k}}, \forall \mathbf{k} \in \mathbb{Z}_{+,0}^\infty.
\end{equation}
Since $\{\ket{\mathbf{k}}| \mathbf{k}\in \mathbb{Z}_{+,0}^{\infty}\}$ forms a complete orthonormal basis for $\Gamma_s(\mathcal{H})$, $\{pe^{-\sum\limits_{j=1}^{N} s_jk_j}\big| k_j \in \mathbb{Z}_{+}, N\in \mathbb{N}\}$ is the complete set of eigen values for $\rho_s$.
If $\{s_j\}$ is linearly independent over $\mathbb{Q}$, then we see that the eigenvalues corresponding to $\ket{\mathbf{k}_1} \neq \ket{\mathbf{k}_2}$ are not same. Thus the multiplicity of each of these eigenvalues is one. 
\end{proof}

\begin{lem}\label{lem:symm-group-1}
Let $\rho_s$ be as in Lemma \ref{lem:symm-group} where  $\{s_j\}_{j\in \mathbb{N}}$ is a  sequence of (distinct) irrational numbers which are linearly independent over the field $\mathbb{Q}$. Then a unitary operator $U$ in $\GH$ is such that $U\rho_sU^*$ is a Gaussian state if and only if for some $\alpha \in \CH$, $L\in \SH \CH$ and a complex valued function $\beta$ of modulus one on $\mathbb{Z}_{+,0}^{\infty}$
\begin{equation*}
  U = W(\alpha)\Gamma_s(L)\beta(a_1^{\dagger}a_1,a_2^{\dagger}a_2,\dots),
\end{equation*}
 where $\beta(a_1^{\dagger}a_1,a_2^{\dagger}a_2,\dots)$ is the unique unitary which satisfies \[ \beta(a_1^{\dagger}a_1,a_2^{\dagger}a_2,\dots)\ket{\mathbf{k}} = \beta(\mathbf{k})\ket{\mathbf{k}}, \forall \mathbf{k}\in \mathbb{Z}_{+,0}^{\infty}.\]
\end{lem}
\begin{proof}
  Since $\beta(a_1^{\dagger}a_1,a_2^{\dagger}a_2,\dots)$ commutes with $\rho_s$ the sufficiency is immediate.
  To prove the necessity, suppose $U\rho_sU^*$ is Gaussian.
 Then by Theorem \ref{thm:struct-quant-gauss} there exists $z\in \mathcal{H}$,  $M\in \mathbf{S}(\mathcal{H})$ and $\rho_t:= \Pi_{j=1}^{\infty}(1-e^{-t_j})\otimes_{j=1}^{\infty}e^{-t_ja_{j}^{\dagger}a_j} \in \B{\Gamma_s(\mathcal{H})}$ such that
\begin{equation}
\label{eq:88}
U\rho_sU^*= W(z)^{*}\Gamma_s(M)^{*}\rho_t\Gamma_s(M)W(z).
\end{equation}
 By Lemma \ref{lem:symm-group},  $\rho_s$ has a complete orthonormal eigenbasis with distinct eigenvalues.  By (\ref{eq:88}) $\rho_s$ and $\rho_t$ are unitarily equivalent and thus their eigenvalues and multiplicities are same. Therefore by applying Lemma \ref{lem:symm-group} to $\rho_t$ , $\Pi_{j=1}^{\infty}(1-e^{-t_j})=p$ (since $p$ is the maximum eigenvalue of $\rho_s$) and $\rho_t$ has a set of distinct eigenvalues $pe^{-\sum\limits_{j=1}^{N} t_jk_j}$ corresponding to the eigenvectors $\ket{\mathbf{k}}, \mathbf{k} = (k_1 ,k_2, \dots, k_N, 0,0,\dots)^{\tau}\in  \mathbb{Z}_{+,0}^{\infty}$, $N\in \mathbb{N}$. 
\begin{claim}
 The sequence $\{t_j\}_{j\in \mathbb{N}}$ consists of (distinct) numbers which are linearly independent over the field $\mathbb{Q}$.
\end{claim}
\begin{pf}[of Claim]
If $t_i = t_k$ for some $i\neq k$ then it is possible to choose distinct $\mathbf{k}, \mathbf{k'} \in \mathbb{Z}^{\infty}_{+,0}$  such that the eigenvalues of $\rho_t$ corresponding to $\ket{\mathbf{k}}$ and $\ket{\mathbf{k'}}$ are same. This will imply that the corresponding eigenspace is at least two dimensional which is not possible. 
To see the rational independence note that for any two finite subsets $I, J \subset \mathbb{N}$, $\sum\limits_{j\in I} t_jk_j \neq \sum\limits_{j\in J} t_jk_j^{'}$ where $k_j, k_{j}^{'} \in \mathbb{Z_+}, \forall j$. Now if 
\begin{equation}
\label{eq:91}
\sum\limits_{j=1}^Nt_jq_j = 0
\end{equation} for a finite collection of rational numbers $q_j$'s, since $t_j>0,\forall j$ then there must be negative rational numbers in the set $\{q_1,q_2, \dots,q_N\}$ (unless  $q_j=0, \forall j$). Then \ref{eq:91} can be written in the form $\sum\limits_{j\in I} t_jk_j = \sum\limits_{j\in J} t_jk_j^{'}$ for two finite sets $I,J$, which is not possible. Thus the claim is proved.
\end{pf}
We have $\{pe^{-\sum\limits_{j=1}^N s_jk_j}\big| k_j \in \mathbb{Z}_{+}, N \in \mathbb{N}\} = \{pe^{-\sum\limits_{j=1}^N t_jk_j}\big| k_j \in \mathbb{Z}_{+}, N \in \mathbb{N}\}$. Therefore  
   $\{ \sum\limits_{j=1}^ns_jk_j \vert k_j\in \mathbb{Z}_+ \forall j, n\in \mathbb{N}\} = \{ \sum\limits_{j=1}^nt_jk_j \vert k_j\in \mathbb{Z}_+\forall j, n\in \mathbb{N}\}$. Now by  the proof of Lemma \ref{lem:basic-technical},  there is a bijection
   $\sigma : \mathbb{N}\to \mathbb{N}$ such that $s_j=t_{\sigma (j)}$ for all $j\in \mathbb{N}.$ 

By (\ref{eq:88}) there exists a unitary $V$ such that 
\begin{equation}
\label{eq:89}
V\rho_sV^{*} = \rho_t.
\end{equation}
where $V = \Gamma_s(M)W(z)U$.
Let $\mathbf{k} = (k_1 ,k_2, \dots, k_N, 0,0,\dots)^{\tau}\in  \mathbb{Z}_{+,0}^{\infty}$ be arbitrary, by \ref{eq:89} if $\ket{\mathbf{k}}$ is an eigenvector for $\rho_s$ then $V\ket{\mathbf{k}}$ is an eigenvector for $\rho_t$ with the same eigenvalue. Therefore, $V\ket{\mathbf{k}}$ is an eigenvector for $\rho_t$ with eigenvalue $pe^{-\sum\limits_{j=1}^N s_jk_j}=
 pe^{-\sum _{j\in \mathbb{N}}t_{\sigma {j}}k_j}.$ 
But defining the unitary operator  $A$ on $\mathcal{H}$,  $A_\sigma (e_j)= e_{\sigma (j)}$, its second quantization, $\Gamma (A) \mathbf{k} $ is an eigenvector for $\rho _t$ with same eigenvalue. Since the multiplicity for each eigenvalue is one,  there exists a complex number $\beta(\mathbf{k})$ of unit modulus such that,  
\begin{align*}
V\ket{\mathbf{k}}&= \beta(\mathbf{k})\ket{A\mathbf{k}}\\
                 &= \Gamma_s(A)\beta(\mathbf{k})\ket{\mathbf{k}}\\
                 &= \Gamma_s(A)\beta(a_1^{\dagger}a_1,a_2^{\dagger}a_2,\dots)\ket{\mathbf{k}}
\end{align*}
Then by (\ref{eq:89}) $U = W(z)^{*}\Gamma_s(M)^{*}\Gamma_s(A)\beta(a_1^{\dagger}a_1,a_2^{\dagger}a_2,\dots)$. Now the proof is complete due to Theorem \ref{thm:generalized-shale-unitary}. It should be noted that we may need to redefine $\beta$ if the multiplier $\sigma(M^{-1}, A)\neq 1$  (refer Theorem \ref{thm:generalized-shale-unitary}).  
\end{proof} 

The following theorem characterizes the Gaussian symmetries. 
Recall the definition of Shale operators from \ref{sec:sympl-autom}. 
\begin{thm}\label{thm:Gaussian-symmetry}
  A unitary operator $U \in \B{\GH}$ is a Gaussian symmetry if and only if
  \begin{equation*}
    U = \lambda W(\alpha)\Gamma_s(L)
  \end{equation*}
for some $\lambda\in \mathbb{C}$ with $\abs{\lambda} = 1$, $\alpha\in \CH$, and $L$ is a Shale operator ($L \in \mathbf{S}(\mathcal{H})$).
\end{thm}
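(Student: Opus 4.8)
The plan is to prove both directions, with almost all the work in the necessity. For \textbf{sufficiency}, suppose $U=\lambda W(\alpha)\Gamma_s(L)$ with $\abs\lambda=1$, $\alpha\in\CH$ and $L\in\mathbf S(\mathcal H)$. The scalar $\lambda$ cancels in the conjugation, so for $\rho=\rho_g(w,S)$ we have $U\rho U^{*}=W(\alpha)\big(\Gamma_s(L)\rho\,\Gamma_s(L)^{*}\big)W(\alpha)^{*}$. Since $L\in\mathbf S(\mathcal H)$ gives $L^{-1}\in\mathbf S(\mathcal H)$ and $\Gamma_s(L)=\Gamma_s(L^{-1})^{*}$ by \eqref{eq:71}, Proposition \ref{prop:2nd-quant-conjugation-Gaussian-state} (applied to $L^{-1}$) yields $\Gamma_s(L)\rho\,\Gamma_s(L)^{*}=\rho_g\big((L^{-1})^{\tau}w,(L^{-1})^{\tau}SL^{-1}\big)$, and then Proposition \ref{prop:weyl-conjugation-Gaussian-state} shows that conjugating by $W(\alpha)$ only shifts the mean. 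Hence $U\rho U^{*}\in\mathbf G$. In particular each of $W(\alpha)$, $W(-\alpha)$, $\Gamma_s(L)$ and $\Gamma_s(L^{-1})$ maps $\mathbf G$ into $\mathbf G$, a fact I will reuse.

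For \textbf{necessity}, first I would feed $U$ the special state $\rho_s$ of Theorem \ref{thm:symm-group-1}, built from a sequence $\{s_j\}$ of irrationals linearly independent over $\mathbb Q$. Because $U$ is a Gaussian symmetry, $U\rho_sU^{*}\in\mathbf G$, so Theorem \ref{thm:symm-group-1} applies and produces $\alpha\in\CH$, $L\in\mathbf S(\mathcal H)$ and a modulus-one function $\beta$ on $\mathbb Z_+^{\infty}$ with $U=W(\alpha)\Gamma_s(L)\,\beta(a_1^{\dagger}a_1,a_2^{\dagger}a_2,\ldots)$. Setting $D:=\beta(a_1^{\dagger}a_1,\ldots)=\Gamma_s(L^{-1})W(-\alpha)U$ and composing the three maps from the previous paragraph shows that $D$ again sends $\mathbf G$ into $\mathbf G$. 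Thus everything reduces to proving that a unitary $D$ which is diagonal in the number basis, $D\ket{\mathbf k}=\beta(\mathbf k)\ket{\mathbf k}$, and preserves Gaussianity must have the product-exponential form $\beta(\mathbf k)=\lambda\prod_j\mu_j^{k_j}$ with $\abs\lambda=\abs{\mu_j}=1$.

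To extract this rigidity I would test $D$ on coherent states, which are pure Gaussian by Example \ref{eg:coherent-state}. Fixing a mode $e_j$, the vector $\ket{ze_j}$ lives in $\Gamma_s(\mathbb C e_j)$ and $D\ket{ze_j}=e^{-\abs z^{2}/2}\sum_{k}\tfrac{z^{k}}{\sqrt{k!}}\,\beta(ke_j)\ket{ke_j}$ must again be a single-mode pure Gaussian vector, hence (via Corollary \ref{cor:struct-pure-quant-gauss}) annihilated by a displaced Bogoliubov operator $u a_j+v a_j^{\dagger}+w$ with $u\neq0$. Conjugating by the diagonal $D$ replaces $a_j$ by $a_j\,c(a_j^{\dagger}a_j)$ and $a_j^{\dagger}$ by $\overline{c(a_j^{\dagger}a_j)}\,a_j^{\dagger}$, where $c(k):=\beta(ke_j)\overline{\beta((k-1)e_j)}$ is a phase. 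Matching coefficients of $\ket{ke_j}$ in the annihilation identity for $\ket{ze_j}$ gives, for $z\neq0$, the recursion $P\,c(k+1)+Qk\,\overline{c(k)}+R=0$ for all $k\ge0$, with $z$-dependent constants $P=uz\neq0$, $Q=v/z$, $R=w$. Eliminating $R$ through the $k=0$ equation forces $P\big(c(k+1)-c(1)\big)+kQ\,\overline{c(k)}=0$; since $\abs{c(k+1)-c(1)}\le2$ while $\abs{kQ\overline{c(k)}}=k\abs Q$, the bound $k\abs Q\le 2\abs P$ for all $k$ forces $Q=0$, whence $c(k)=c(1)=:\mu_j$ for $k\ge1$ and therefore $\beta(ke_j)=\lambda\mu_j^{k}$ with $\lambda:=\beta(\mathbf 0)$ independent of $j$.

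Finally I would strip off the single-mode factors by passing to $D':=\lambda^{-1}\Gamma_s(V)^{*}D$, where $V=\diag(\mu_j)$ is a diagonal unitary, so $\Gamma_s(V)$ is a Shale (hence Gaussian) symmetry and $D'$ is again a diagonal Gaussian symmetry, now fixing every single-mode coherent state. Repeating the annihilation-operator computation on two-mode coherent states $\ket{z_1e_i+z_2e_j}$ gives the mixed recursions which, together with $\beta'(k_ie_i)=\beta'(k_je_j)=1$, propagate to $\beta'(\mathbf k)\equiv1$, i.e. $\beta(\mathbf k)=\lambda\prod_j\mu_j^{k_j}$. Thus $D=\lambda\,\Gamma_s(V)$, and by the multiplicativity \eqref{eq:77} of Shale unitaries $U=W(\alpha)\Gamma_s(L)\,\lambda\Gamma_s(V)=\lambda\,W(\alpha)\Gamma_s(LV)$ with $LV\in\mathbf S(\mathcal H)$, since $V$ unitary gives $(LV)^{\tau}(LV)-I=V^{\tau}(L^{\tau}L-I)V$ Hilbert--Schmidt. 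The main obstacle is exactly this rigidity step: controlling the infinitely many coefficient equations at once, and carrying the passage from one mode to all modes through the convergence/domain subtleties of the infinite tensor product; the growth estimate $k\abs Q\le2\abs P$ is what eliminates every non-multiplicative choice of $\beta$.
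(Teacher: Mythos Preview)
Your proof is correct but takes a genuinely different route from the paper in the crucial rigidity step (showing that a diagonal Gaussian symmetry $D=\beta(a_1^{\dagger}a_1,\ldots)$ must be $\lambda\Gamma_s(V)$).

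The sufficiency and the reduction via Theorem~\ref{thm:symm-group-1} to a diagonal $D$ are the same as in the paper. For the rigidity, the paper proceeds globally: it computes $f(z)=\langle D\psi(u),e(z)\rangle$ in two ways. From the diagonal expansion one gets the linear bound $|f(z)|\le\exp(-\tfrac12\|u\|^2+\sum_j|u_j||z_j|)$, while from Corollary~\ref{cor:struct-pure-quant-gauss} the pure Gaussian $D\psi(u)$ forces $f(z)$ to be an exponential of a quadratic in $z$ with coefficients $(\lambda_j^2-1)/(\lambda_j^2+1)$. The growth mismatch kills all $\lambda_j\neq1$ in one stroke, so $D$ sends coherent vectors to coherent vectors, and matching coefficients of $\ket{\mathbf k}$ gives $\beta(\mathbf k)=\lambda\prod_j\mu_j^{k_j}$ directly.

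Your method is mode-by-mode: pull back the Bogoliubov annihilator of the pure Gaussian $D\ket{ze_j}$ through $D$, read off a recursion for the phases $c(k)=\beta(ke_j)\overline{\beta((k-1)e_j)}$, and use the elementary growth bound $k|Q|\le 2|P|$ to force the squeezing parameter $v$ to vanish, hence $c\equiv\mu_j$. After normalising by $\Gamma_s(\diag(\mu_j))$ you iterate on two-mode (and then $n$-mode) coherent states; the same growth bound kills every $v_{ij}$, the remaining linear system in the $u_{ij}$ is invertible (as the Bogoliubov transformation is symplectic), and the recursions propagate $\beta'\equiv1$ to all multi-indices of finite support. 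This multi-mode induction is only sketched in your write-up, but it goes through as you indicate.

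What each buys: the paper's entire-function comparison handles all modes simultaneously and avoids any induction, at the cost of invoking the explicit wave-function form of Corollary~\ref{cor:struct-pure-quant-gauss} and an infinite-product computation. Your annihilation-operator argument is more elementary (no Bargmann-side growth comparison, just bounded phases versus linear growth in $k$) and stays entirely within algebraic recursions, but requires the mode-by-mode bootstrap and a little care that the annihilating family in $n$ modes has invertible $u$-part once the $v$-part is shown to vanish.
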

\begin{proof}
  The sufficiency is immediate. 
  To prove the necessity,  let us consider $\CH = \oplus_j\mathbb{C}e_j$ with respect to some orthonormal basis $\{e_{j}\}$. By considering a Gaussian state  $\rho_s$ as in Lemma \ref{lem:symm-group-1},  we can assume without loss of generality that $U =\beta(a_1^{\dagger}a_1,a_2^{\dagger}a_2,\dots)$ for some $\beta .$ We will show that $U = \Gamma_s(D)$ for some unitary operator $D$ and this will prove the theorem because of (\ref{item:22}) of Theorem \ref{thm:generalized-shale-unitary}.

Let $\psi \in \GH$ be such that $\ketbra{\psi}{\psi}$ is a pure Gaussian state. Then by assumption $\ketbra{U\psi}{U\psi}$ is also a Gaussian state and it is clearly a pure state as it is obtained from the wave function $\ket{U\psi}$. We choose  the coherent state (Example \ref{eg:coherent-state})
 \[\psi = e^{-\frac{1}{2}\|u\|^2}\ket{e(u)}= W(u)\ket{e(0)}, \] 
where $u = \sum _{j}u_je_j$.Now 
\begin{equation}
\label{eq:93}
\ket{U\psi}= e^{-\frac{1}{2}\|u\|^2}\beta(a_1^{\dagger}a_1,a_2^{\dagger}a_2,\dots)\ket{e(u)}
\end{equation}
Recall that 
$$e(u)= \sum\limits_{\mathbf{k}\in \mathbb{Z}_{+,0}^{\infty}}^{}\frac{u^{\mathbf{k}}}{\sqrt{\mathbf{k}!}}\ket{\mathbf{k}}$$
Consider any `finite' vector $z= \sum _{j=1}^Nz_je_j$, for some $z_1, z_2, \ldots , z_N$ in $\mathbb{C}, N\in \mathbb{N}.$ Then \begin{equation}
\label{eq:96}
e(z)=\sum\limits_{\mathbf{m}\in \mathbb{Z}_+^N}\frac{z^{\mathbf{m}}}{\sqrt{\mathbf{m}!}}\ket{\mathbf{m}},
\end{equation}
where $\mathbf{m}\in \mathbb{Z}_+^N$ is considered as the vector $(m_1,m_2,\dots,m_N,0,0,\dots)^t\in\mathbb{Z}_{+,0}^{\infty} $ and $\ket{\mathbf{m}} = \ket{m_1}\ket{m_2}\cdots\ket{m_n} \in \GH$ as in the notation of (\ref{eq:85}).
We will evaluate the function $f(z)= \left\langle U\psi, e(z) \right\rangle$ using (\ref{eq:93}) and (\ref{eq:94}). From  (\ref{eq:101}), (\ref{eq:93}),  (\ref{eq:96}) and continuity of $\beta(a_1^{\dagger}a_1,a_2^{\dagger}a_2,\dots)$ we have

\begin{align*}
f(z)&=e^{-\frac{1}{2}\|u\|^2}\left\langle Ue(u), e(z) \right\rangle\\
&=e^{-\frac{1}{2}\|u\|^2} \left\langle  \sum\limits_{\mathbf{k}\in \mathbb{Z}_{+,0}^{\infty}}^{}\frac{u^{\mathbf{k}}}{\sqrt{\mathbf{k}!}}U\ket{\mathbf{k}}, \sum\limits_{\mathbf{m}\in \mathbb{Z}_+^N}\frac{z^{\mathbf{m}}}{\sqrt{\mathbf{m}!}}\ket{\mathbf{m}}  \right\rangle\\
&= e^{-\frac{1}{2}\|u\|^2}\left\langle  \sum\limits_{\mathbf{k}\in \mathbb{Z}_{+,0}^{\infty}}^{}\frac{u^{\mathbf{k}}}{\sqrt{\mathbf{k}!}}\beta(\mathbf{k})\ket{\mathbf{k}},  \sum\limits_{\mathbf{m}\in \mathbb{Z}_+^N}\frac{z^{\mathbf{k}}}{\sqrt{\mathbf{k}!}}\ket{\mathbf{m}} \right\rangle \numberthis \label{eq:95.1}
\end{align*}
Thus 
\begin{equation}\label{eq:99}
f(z) =e^{-\frac{1}{2}\|u\|^2}\sum\limits_{\mathbf{k}\in \mathbb{Z}_+^{N}}\frac{(\bar{u}_1z_1)^{k_1}(\bar{u}_2z_2)^{k_2}\cdots(\bar{u}_Nz_N)^{k_N}}{k_{1}!k_2!\cdots k_N!}\overline{\beta(\mathbf{k})}.
\end{equation}

Since $\abs{\beta(\mathbf{k})} = 1$, from (\ref{eq:99}) we see that 
\begin{equation}
\label{eq:100}
\abs{f(z)}\leq \exp{-\frac{1}{2}\|u\|^2+\sum\limits_{j=1}^N\abs{u_j}\abs{z_j}}.
\end{equation}

Now we compute $f(z)$ in an alternative way. As $|U\psi \rangle \langle U\psi |$ is a pure Gaussian state, 
by Corollary \ref{cor:struct-pure-quant-gauss} there exists a unitary $A$ and an $\alpha\in \mathcal{H}$ such that 
\begin{equation}
\label{eq:94}
\ket{U\psi}= W(\alpha) \Gamma_s(A)\otimes_j\ket{e_{\lambda_j}}.
\end{equation}

From the definition of $e(w)$ and $e_{\lambda}$ in $\Gamma(\mathbb{C})=L^2(\mathbb{R})$, we have 
$$e_{\lambda}(x) = (2\pi)^{-1/4}\lambda^{-1/2}\exp{-4^{-1}\lambda^{-2}x^2}, \phantom{...} x \in \mathbb{R}, \lambda >0$$ on $L^2(\mathbb{R})$. Further, 

\begin{equation}
\label{eq:98}
\left\langle e_{\lambda},e(w)  \right\rangle = \sqrt{\frac{2\lambda}{1+\lambda^2}}\exp \frac{1}{2}\left( \frac{\lambda^2-1}{\lambda^2+1} \right)w^2, \lambda >0, w \in \mathbb{C}.
\end{equation}
Using (\ref{eq:94}),
\begin{align*}
f(z)&= \left\langle  W(\alpha) \Gamma_s(A)\otimes_je_{\lambda_j}, e(z)\right\rangle\\
&= \left\langle \otimes_je_{\lambda_j}, \Gamma_s(A^*)W(-\alpha)e(z) \right\rangle\\
&= e^{\langle \alpha,z \rangle -\frac{1}{2}\|\alpha\|^{2}}\left\langle \otimes_je_{\lambda_j}, e\big(A^{*}(z-\alpha)\big)  \right\rangle \numberthis \label{eq:100.1}
\end{align*}
Since $z$ is a finite vector and $\alpha$ is fixed, each coordinate of $A^{*}(z-\alpha)$ is a first degree polynomial in $z_j$'s.  Therefore $ e\big(A^{*}(z-\alpha)\big) = \otimes_j e(w_j)$ where each $w_j$ is a first degree polynomial in $z_j$'s. Therefore from (\ref{eq:98}) and \todo[noline]{details needed? as both entries in the i.p are infinite tensor products} property of infinite tensor products 
\[ f(z) = e^{\langle \alpha,z \rangle -\frac{1}{2}\|\alpha\|^{2}} \lim_{n\rightarrow\infty}\Pi_{j=1}^{n}\sqrt{\frac{2\lambda_{j}}{1+\lambda_j^2}}\exp \sum\limits_{j=1}^n\frac{1}{2}\left( \frac{\lambda_j^2-1}{\lambda_j^2+1} \right)w_j^2 \]

Since each $w_j^2$ is a second degree polynomial in $z_1,z_2,\dots,z_N$. This contradicts (\ref{eq:100}) unless $\lambda_j = 1$ for all $j$. Now (\ref{eq:94}) implies 
\begin{align*}
\ket{U\psi}&= W(\alpha) \Gamma_s(A)\ket{e(0)}\\
 &=  e^{-\frac{1}{2}\|\alpha\|^2}\ket{e(\alpha)}
\end{align*}
Now from (\ref{eq:93}) we get 
\begin{equation}
\label{eq:103}
e^{-\frac{1}{2}\|u\|^2}\beta(a_1^{\dagger}a_1,a_2^{\dagger}a_2,\dots)\ket{e(u)}= e^{-\frac{1}{2}\|\alpha\|^2}\ket{e(\alpha)}
\end{equation}
Thus $\beta(a_1^{\dagger}a_1,a_2^{\dagger}a_2,\dots)$ is a unitary with the following properties: 
\begin{enumerate}
\item\label{item:23} $\beta(a_1^{\dagger}a_1,a_2^{\dagger}a_2,\dots) \ket{\mathbf{k}}= \beta(\mathbf{k})\ket{\mathbf{k}}$ for every $\mathbf{k}\in \mathbb{Z}_{+,0}^{\infty}$.
\item \label{item:24} It maps coherent vectors to coherent vectors.
\end{enumerate}
We will prove that  $\beta(a_1^{\dagger}a_1,a_2^{\dagger}a_2,\dots)= \Gamma_s(D)$ for a diagonal unitary $D$. To this end we fix a  $u = \sum _ju_je_j$ in $\mathcal{H}$   with $u_j \neq 0, \forall j$. We have $ \beta(a_1^{\dagger}a_1,a_2^{\dagger}a_2,\dots)\ket{e(u)} = e^{\frac{1}{2}(\|u\|^2-\|\alpha\|^2)}\ket{e(\alpha)}$. Therefore if $\alpha = \sum _j\alpha _je_j$ from (\ref{eq:103}) and (\ref{eq:101}) we get, 
\[\sum\limits_{\mathbf{k}\in \mathbb{Z}_{+,0}^{\infty}}^{}\frac{u^{\mathbf{k}}}{\sqrt{\mathbf{k}!}}\beta(\mathbf{k})\ket{\mathbf{k}} =  e^{\frac{1}{2}(\|u\|^2-\|\alpha\|^2)} \sum\limits_{\mathbf{k}\in \mathbb{Z}_{+,0}^{\infty}}^{}\frac{\alpha^{\mathbf{k}}}{\sqrt{\mathbf{k}!}}\ket{\mathbf{k}}\]
Therefore,
 \[u^{\mathbf{k}}\beta(\mathbf{k}) = e^{\frac{1}{2}(\|u\|^2-\|\alpha\|^2)} \alpha^{\mathbf{k}},  \forall \mathbf{k}\in \mathbb{Z}_{+,0}^{\infty}\]
Since $u_{j}\neq 0$ for all $j$, we see that if $\mathbf{k}= (k_1,k_2,\dots,k_m,0,0,\dots)\in \mathbb{Z}_{+,0}^{\infty}$, 
\begin{equation*}
\beta(\mathbf{k}) =  e^{\frac{1}{2}(\|u\|^2-\|\alpha\|^2)} \left(\frac{\alpha_1}{u_1} \right)^{k_1} \left(\frac{\alpha_2}{u_2} \right)^{k_2}\cdots \left(\frac{\alpha_m}{u_m} \right)^{k_m}, \forall \mathbf{k}\in \mathbb{Z}_{+,0}^{\infty}
\end{equation*}
Since $\abs{\beta(\mathbf{k})}= 1$, we get $\abs{\frac{\alpha_j}{u_j}}= 1$ for all $j$. If we write $\frac{\alpha_j}{u_j} = e^{i\theta_j}$, then from (\ref{eq:103}) we get 
\begin{equation}
\label{eq:105}
\beta(a_1^{\dagger}a_1,a_2^{\dagger}a_2,\dots)\ket{e(u)}= \ket{e(Du)},
\end{equation} where $D$ is the unitary $\diag(e^{i\theta_1}, e^{i\theta_2}, \dots)$ , for every $u= \sum _ju_je_j \in \mathcal{H}$   with $u_j\neq 0, \forall j$. Now it is easy to see that (\ref{eq:105}) holds for all $u\in \mathcal{H}$. We conclude  that $\beta(a_1^{\dagger}a_1,a_2^{\dagger}a_2,\dots) = \Gamma_s(D)$.
\end{proof}
\begin{rmk}
Observe that the set of all Gaussian symmetries $\{\lambda W(\alpha)\Gamma_s(L): \lambda\in \mathbb{C}, \abs{\lambda}=1, \alpha \in \CH, L \in \SH \CH  \}$ form a group under multiplication. A typical computation in the proof of the previous statement is given below. 
\begin{align*}
\lambda_1W(\alpha_1)\Gamma_s(L_1)\lambda_2W(\alpha_2)\Gamma_s(L_2) 
&= \lambda_1\lambda_2W(\alpha_1)\Gamma_s(L_1)W(\alpha_2)\Gamma_s(L_1)^{-1}\Gamma_s(L_1)\Gamma_s(L_2)\\
&= \lambda_1\lambda_2W(\alpha_1)W(L_1\alpha_2)\Gamma_s(L_1L_2)\\
&= \lambda_1\lambda_2e^{-i \im \langle\alpha_1, L_1\alpha_2\rangle}W(\alpha_1+L_1\alpha_2)\Gamma_s(L_1L_2).
\end{align*}
The quantity in the last line is again in the form of a prototypical element in the set of Gaussian symmetries.
\end{rmk}
 
 Combining Theorem \ref{thm:Gaussian-symmetry} with Theorem \ref{thm:struct-quant-gauss}, we have the following important observation.
\begin{rmk}
The symplectic spectrum (Remark \ref{rmk:symplectic-spectrum}) of the covariance operator is a complete invariant for  Gaussian states and two Gaussian states with same symplectic spectrum are conjugate to each other through a Gaussian symmetry.
\end{rmk}

\textbf{Acknowledgements:} We wish to thank Prof. K. R. Parthasarathy for introducing the subject to us during his 2015 lecture series at the Indian Statistical Institute, Bangalore centre. John thanks Prof. Martin Lindsay and Prof. K. B. Sinha for discussions on quasifree states and metaplectic representations respectively.  Bhat thanks J C Bose Fellowship for financial support. We sincerely thank the anonymous referee for several constructive suggestions which helped us to improve the paper immensely. 

\bibliographystyle{amsalpha}
\bibliography{bibliography-2}

\providecommand{\bysame}{\leavevmode\hbox to3em{\hrulefill}\thinspace}
\providecommand{\MR}{\relax\ifhmode\unskip\space\fi MR }
\providecommand{\MRhref}[2]{%
  \href{http://www.ams.org/mathscinet-getitem?mr=#1}{#2}
}
\providecommand{\href}[2]{#2}
\begin{thebibliography}{WHTH07}

\bibitem[Ara72]{Arakiccr2}
Huzihiro Araki, \emph{On quasifree states of the canonical commutation
  relations. {II}}, Publ. Res. Inst. Math. Sci. \textbf{7} (1971/72), 121--152.
  \MR{0313834}

\bibitem[ARL14]{AdRaSa14}
Gerardo Adesso, Sammy Ragy, and Antony~R. Lee, \emph{Continuous variable
  quantum information: {G}aussian states and beyond}, Open Syst. Inf. Dyn.
  \textbf{21} (2014), no.~1-2, 1440001, 47. \MR{3190189}

\bibitem[AS72]{Arakiccr1}
Huzihiro Araki and Masafumi Shiraishi, \emph{On quasifree states of the
  canonical commutation relations. {I}}, Publ. Res. Inst. Math. Sci. \textbf{7}
  (1971/72), 105--120.

\bibitem[BJ19]{BhJo18}
B.~V.~Rajarama {Bhat} and Tiju~Cherian {John}, \emph{{Real Normal Operators and
  Williamson's Normal Form}}, Acta Sci. Math. (Szeged), To appear (2019).

\bibitem[BR97]{brattelirobinson2}
Ola Bratteli and Derek~W. Robinson, \emph{Operator algebras and quantum
  statistical mechanics. 2}, second ed., Texts and Monographs in Physics,
  Springer-Verlag, Berlin, 1997, Equilibrium states. Models in quantum
  statistical mechanics.

\bibitem[BS05]{BhSr05}
B.~V.~Rajarama Bhat and R.~Srinivasan, \emph{On product systems arising from
  sum systems}, Infin. Dimens. Anal. Quantum Probab. Relat. Top. \textbf{8}
  (2005), no.~1, 1--31. \MR{2126876}

\bibitem[DG13]{derezinkigerard}
Jan Derezi\'{n}ski and Christian G\'{e}rard, \emph{Mathematics of quantization
  and quantum fields}, Cambridge Monographs on Mathematical Physics, Cambridge
  University Press, Cambridge, 2013. \MR{3060648}

\bibitem[Fol89]{Folland89}
Gerald~B. Folland, \emph{Harmonic analysis in phase space}, Annals of
  Mathematics Studies, vol. 122, Princeton University Press, Princeton, NJ,
  1989. \MR{983366}

\bibitem[FOP05]{ferraro05}
A.~Ferraro, S.~Olivares, and M.~Paris, \emph{Gaussian states in quantum
  information}, Napoli series on physics and astrophysics, Bibliopolis, 2005.

\bibitem[Hol71a]{Hol71}
A.~S. Holevo, \emph{Quasi-free states on the c* algebra of ccr}, Theoretical
  and Mathematical Physics \textbf{6} (1971), no.~1, 1--12.

\bibitem[Hol71b]{Hol71ii}
\bysame, \emph{Quasi-free states on the {$C\sp*$}-algebra of commutation
  relations. {II}}, Teoret. Mat. Fiz. \textbf{6} (1971), no.~2, 145--150.
  \MR{0479144}

\bibitem[HR96]{HR96}
Reinhard Honegger and Alfred Rieckers, \emph{Squeezing bogoliubov
  transformations on the infinite mode ccr-algebra.}, J. Math. Phys.
  \textbf{37} (1996), no.~9, 4292–4309. \MR{1408093}

\bibitem[MS04]{MATSUI200467}
Taku Matsui and Yoshihito Shimada, \emph{On quasifree representations of
  infinite dimensional symplectic group}, Journal of Functional Analysis
  \textbf{215} (2004), no.~1, 67 -- 102.

\bibitem[Par10]{Par10}
K.~R. Parthasarathy, \emph{What is a {G}aussian state?}, Commun. Stoch. Anal.
  \textbf{4} (2010), no.~2, 143--160. \MR{2662722}

\bibitem[Par12]{Par12}
\bysame, \emph{An introduction to quantum stochastic calculus}, Modern
  Birkh\"auser Classics, Birkh\"auser, 2012, [2012 reprint of the 1992
  original].

\bibitem[Par13]{Par13}
\bysame, \emph{The symmetry group of {G}aussian states in $l^2(\mathbb{
  R}^n)$}, Prokhorov and contemporary probability theory, Springer Proc. Math.
  Stat., vol.~33, Springer, Heidelberg, 2013, pp.~349--369. \MR{3070484}

\bibitem[Par15]{Par15}
\bysame, \emph{Symplectic dilations, gaussian states and gaussian channels},
  Indian Journal of Pure and Applied Mathematics \textbf{46} (2015), no.~4,
  419--439.

\bibitem[Pet90]{Pet90}
D\'enes Petz, \emph{An invitation to the algebra of canonical commutation
  relations}, Leuven Notes in Mathematical and Theoretical Physics. Series A:
  Mathematical Physics, vol.~2, Leuven University Press, Leuven, 1990.
  \MR{1057180}

\bibitem[PS72]{PaSc72}
K.~R. Parthasarathy and K.~Schmidt, \emph{Positive definite kernels, continuous
  tensor products, and central limit theorems of probability theory}, Lecture
  Notes in Mathematics, Vol. 272, Springer-Verlag, Berlin-New York, 1972.
  \MR{0622034}

\bibitem[Sha62]{Sha62}
David Shale, \emph{Linear symmetries of free boson fields}, Trans. Amer. Math.
  Soc. \textbf{103} (1962), 149--167. \MR{0137504}

\bibitem[Tve04]{Tveritinov2004}
I.~D. Tveritinov, \emph{Several remarks on the representation of the
  infinite-dimensional symplectic group and on the construction of the
  metaplectic group}, Mathematical Notes \textbf{75} (2004), no.~5, 805--818.

\bibitem[vD71]{vD71}
A.~van Daele, \emph{Quasi-equivalence of quasi-free states on the weyl
  algebra}, Comm. Math. Phys. \textbf{21} (1971), 171--191. \MR{0287844}

\bibitem[WGC06]{Wolf06}
Michael~M. Wolf, Geza Giedke, and J.~Ignacio Cirac, \emph{Extremality of
  gaussian quantum states}, Phys. Rev. Lett. \textbf{96} (2006), 080502.

\bibitem[WHTH07]{WANG20071}
Xiang-Bin Wang, Tohya Hiroshima, Akihisa Tomita, and Masahito Hayashi,
  \emph{Quantum information with gaussian states}, Physics Reports \textbf{448}
  (2007), no.~1, 1 -- 111.

\end{thebibliography}
\end{document}